\documentclass[11pt]{article}
\usepackage[normalem]{ulem}
\usepackage{hyperref}
\usepackage{amsmath}
\usepackage{braket}
\usepackage{amsthm}
\usepackage{amssymb}
\usepackage{url}
\usepackage{algorithm}
\usepackage{algpseudocode}
\usepackage{mathtools}
\usepackage[title]{appendix}
\usepackage[all]{xy}
\usepackage{geometry}
\usepackage{physics}
\usepackage{cancel}
\usepackage{mdframed}
\usepackage[most]{tcolorbox}
\usepackage{enumitem}
\usepackage{xcolor}
\usepackage[dvipsnames]{xcolor}
\usepackage{comment}
\usepackage{array}
\usepackage{stackengine,scalerel}
\usepackage{dsfont}

\usepackage{complexity}

\usepackage{xcolor}
\definecolor{darkgreen}{RGB}{0, 100, 0}

\newtheorem{theorem}{Theorem}[section]

\newtheorem{lemma}[theorem]{Lemma}
\newtheorem{proposition}[theorem]{Proposition}

\newtheorem*{conjecture*}{\textbf{Conjecture}}
\newtheorem{corollary}[theorem]{Corollary}

\newtheorem{claim}{Claim}[theorem]

\theoremstyle{definition}
\newtheorem{definition}{Definition} 

\newtheorem{notation}{\textbf{Notation}}
\newtheorem{remark}[theorem]{Remark}
\newtheorem*{remark*}{\textbf{Remark}}

\newcommand{\Good}{\mathsf{Good}}
\newcommand{\Acc}{\mathsf{Acc}}
\newcommand{\Accept}{\mathsf{Accept}}
\newcommand{\Ver}{\mathsf{Ver}}
\newcommand{\SWAP}{\mathsf{SWAP}}

\newcommand{\iter}{\mathsf{iter}}

\newcommand{\rnd}{\mathsf{rnd}}
\newcommand{\negl}{\mathsf{negl}}

\newcommand{\QHE}{\mathsf{QHE}}
\newcommand{\HE}{\mathsf{HE}}
\newcommand{\Gen}{\mathsf{Gen}}
\newcommand{\Enc}{\mathsf{Enc}}
\newcommand{\Dec}{\mathsf{Dec}}
\newcommand{\Eval}{\mathsf{Eval}}

\newcommand{\Query}{\mathsf{Query}}

\newcommand{\ct}{\mathsf{ct}}
\newcommand{\pk}{\mathsf{pk}}
\newcommand{\sk}{\mathsf{sk}}
\newcommand{\eps}{\varepsilon}

\newcommand{\TD}{\mathsf{TD}}
\newcommand{\secp}{\lambda}
\newcommand{\QPT}{\mathsf{QPT}}

\newcommand{\brho}{\boldsymbol{\rho}}
\newcommand{\bsigma}{\boldsymbol{\sigma}}
\newcommand{\PPT}{\mathsf{PPT}}
\newcommand{\im}{\mathsf{im}}
\newcommand{\id}{\boldsymbol{I}}
\newcommand{\zero}{\boldsymbol{0}}

\newcommand{\Ber}{\mathsf{Ber}}

\makeatletter
\newcommand{\customlabel}[2]{%
   \protected@write \@auxout {}{\string \newlabel {#1}{{#2}{\thepage}{#2}{#1}{}} }%
   \hypertarget{#1}{#2}
}
\makeatother

\definecolor{classicrose}{rgb}{0.98, 0.8, 0.91}
\newcommand{\algo}[3]{
    \stepcounter{figure}
    \vspace{0.15cm}
    { \small
    \begin{tcolorbox}[breakable, enhanced, colback=classicrose!20]
    \begin{center}
    {\bf \underline{Algorithm~\customlabel{alg:#2}{\thefigure}: #1}}
    \end{center}
    
    #3
    \end{tcolorbox}
    }
}

\newcommand{\calA}{\mathcal A}
\newcommand{\calB}{\mathcal B}
\newcommand{\calC}{\mathcal C}
\newcommand{\calD}{\mathcal D}

\newcommand{\calH}{\mathcal H}

\newcommand{\calO}{\mathcal O}

\newcommand{\calS}{\mathcal S}
\newcommand{\calT}{\mathcal T}

\newcommand{\sfA}{\mathsf A}
\newcommand{\sfB}{\mathsf B}

\newcommand{\sfE}{\mathsf E}

\newcommand{\sfI}{\mathsf I}

\newcommand{\sfO}{\mathsf O}

\newcommand{\sfR}{\mathsf R}

\newcommand{\sfZ}{\mathsf Z}

\newcommand{\bbR}{\mathbb R}

\newcommand{\bbE}{\mathbb E}

\newcommand{\bbN}{\mathbb N}

\title{Tight Post-Quantum Parallel Repetition for Private-Coin Arguments}
\date{}
\author{Zvika Brakerski \thanks{Weizmann Institute of Science \texttt{zvika.brakerski@weizmann.ac.il}} \and Andrew Huang \thanks{University of California, Berkeley \texttt{a\_huang@berkeley.edu}} \and Yael Tauman Kalai \thanks{Massachusetts Institute of Technology \texttt{tauman@mit.edu}} \and Nicholas Spooner \thanks{Cornell University \texttt{nspooner@cornell.edu}}}
\begin{document}
\maketitle
\begin{abstract}
We show that assuming the existence of homomorphic encryption, parallel repetition of all interactive arguments (after being run under homomorphic encryption) reduces the soundness error at a tight rate even in the \emph{post-quantum} setting. Moreover, we generalize this result to hold for threshold verifiers, where the parallel repeated verifier accepts if and only if at least $t$ of the executions are accepted (for some threshold~$t$). Prior to this work, these results were known only when the cheating prover was assumed to be classical, and it was not known how to achieve tight bounds.

As a corollary, we construct the first constant-round succinct argument for $\QMA$ with negligible completeness and soundness errors assuming only the existence of quantum homomorphic encryption.
\end{abstract}
\newpage
\enlargethispage{1cm}
\tableofcontents
\pagenumbering{roman}
\newpage
\pagenumbering{arabic}

\section{Introduction}\label{sec:intro}
Parallel repetition is one of the most basic tools for amplifying the soundness of a proof system. The basic idea is to start from a protocol with noticeable soundness error and reduce this error by executing many independent copies of the protocol, with the verifier accepting only if sufficiently many copies accept. This repetition can be performed sequentially, which increases the number of rounds, or in parallel, which preserves the round complexity. This distinction is particularly important in cryptographic applications, where the number of rounds may be a central efficiency parameter.

In contrast to sequential repetition, whose soundness analysis is straightforward, the analysis of parallel repetition (even for interactive proofs) can be quite subtle. Although the verifier uses independent randomness for the different executions, a cheating prover may correlate its strategy across all executions, and hence the acceptance events need not be independent. A large body of work studies this problem for information-theoretic proof systems and games \cite{Raz95,Hol07}. In this work, we focus on  computationally sound interactive \emph{arguments}, where the problem is further complicated by the requirement that the reduction establishing soundness must itself be efficient.

For {\em public-coin} interactive arguments, where each message sent by the verifier is uniformly random, parallel repetition is by now well understood. H{\aa}stad, Pass, Pietrzak, and Wikstr{\"o}m \cite{HPPW10} proved that parallel repetition reduces the soundness error at an exponential rate against classical polynomial-time provers, even for protocols with a polynomial number of rounds. More recently, Huang and Kalai \cite{HK25b} established an analogous theorem against quantum polynomial-time provers.\footnote{Their result extends to the threshold setting, where the verifier accepts if the number of accepted executions exceeds some threshold.} The public-coin property is crucial in these proofs: Specifically, the proof uses the fact that one can efficiently simulate the verifier's messages given a transcript prefix.

For private-coin arguments, the situation is fundamentally more complicated. Indeed, Bellare, Impagliazzo, and Naor \cite{BIN97} showed that parallel repetition does not reduce the soundness error of general interactive arguments, even if the adversary is classical. Specifically, assuming trapdoor permutations, for every polynomial $k = \poly(\secp)$, they constructed a four-message argument whose soundness error remains essentially unchanged under $k$-fold parallel repetition. Thus, unlike in the public-coin setting, no general theorem can assert that simply repeating an arbitrary private-coin argument in parallel amplifies its soundness.  In a nutshell, the reason is that private-coin protocols can allow the parallel executions to help one another: a cheating prover may correlate the copies and feed information received from the verifier in one execution into another, causing the joint execution to succeed with much higher probability than the single-copy soundness would suggest.

Two generic approaches were developed to circumvent this negative result in the classical setting. Both approaches first modify the underlying protocol and then apply the parallel repetition to the modified protocol. The first is the random-termination approach due to Haitner\cite{Hai09}, which modifies the verifier so that it terminates at a randomly chosen point in the interaction. This makes the verifier ``partially simulatable'' and enables a parallel-repetition reduction. The second approach, due to Chung and Liu \cite{CL10}, uses fully homomorphic encryption. In the latter transformation, the verifier encrypts its messages, and the honest prover computes its responses homomorphically. They showed that the resulting protocol (and more generally any ``computationally simulatable'' protocol) admits parallel repetition against classical polynomial-time provers.

\subsection{Our Results} In this work, we study parallel repetition for private-coin interactive arguments in the setting where the cheating prover may be quantum. We follow the homomorphic-encryption approach of Chung and Liu \cite{CL10}. Given a private-coin interactive argument $(P,V)$, we apply their transformation to obtain an argument $(\widehat P,\widehat V)$ with the same number of rounds and, up to negligible additive error, the same completeness and single-copy soundness. When the honest prover is classical, the transformation uses a classical homomorphic encryption ($\HE$) scheme secure against quantum adversaries. When the honest prover is quantum, it uses a quantum $\HE$ scheme which is ``classical friendly'' in the sense that if we restrict this encryption scheme to classical messages, then the encryption and the decryption functions are classical.

Our main result establishes tight bounds in both the direct-product setting (where the verifier accepts only if all subverifiers accept) and the threshold parallel-repetition setting (where the verifier accepts if some minimum fraction of subverifiers accept). We do this by introducing a novel reduction strategy, which has the benefit of \emph{improving} upon the soundness established even in the classical setting, before quantizing the analyses used in the previous literature (specifically, the correlation reduction approach of \cite{CHS05,HS11}).

\begin{theorem}[Informal]\label{thm:tight_repetition}
Let $(P, V)$ be a private-coin interactive argument with a polynomial number of rounds and soundness error at most $\eps+\negl(\lambda)$ against quantum polynomial-time provers. Under the corresponding post-quantum $\HE$ assumption, let $(\widehat{P}, \widehat{V})$ be the corresponding encrypted protocol. For every polynomially bounded $k$, the soundness error of the all-accepting $k$-fold parallel repetition of $\widehat{V}$ is at most
    \[ \eps^k+\negl(\lambda). \] 
More generally, let $\widehat{V}^{k, t}$ execute $k$ copies of $\widehat{V}$ and accept if at least $t$ copies accept. Then the soundness error of the threshold repetition is at most
    \[ \Pr_{X_1, \ldots, X_k \sim \Ber(\eps)}\left[\sum_i X_i \geq t\right] + \negl(\lambda). \]
\end{theorem}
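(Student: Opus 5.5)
We would prove the threshold statement directly; the all-accept bound is the special case $t=k$. The plan has two steps. First, use the homomorphic encryption to make the encrypted verifier $\widehat V$ computationally simulatable. Second, run a quantum correlation-reduction argument in the style of \cite{CHS05,HS11}. We argue by contradiction. Suppose a QPT prover $P^*$ makes $\widehat V^{k,t}$ accept with probability at least $\Pr[\sum_i X_i\ge t]+\delta$, where $\delta$ is non-negligible. We then build a single-copy QPT prover for $\widehat V$, and hence for $V$, that succeeds with probability at least $\eps+\delta'$, contradicting the assumed soundness.

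\textbf{Simulatability.} In $(\widehat P,\widehat V)$, every verifier message is a ciphertext under a fresh key. By semantic security of the (classical-friendly quantum) $\HE$ scheme, the reduction can embed the real external verifier in a random coordinate $i$. It simulates the other $k-1$ coordinates itself, either by running $\widehat V$ honestly (it knows those keys) or by sending encryptions of $0$. The one point to check is that the verdict of the simulated coordinates is still computable: the reduction holds their secret keys, so it can decrypt and evaluate them. Only coordinate $i$ is unknown to the reduction. This replaces the public-coin resampling step used in \cite{HK25b}.

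\textbf{Tight amplification.} We use the standard ``smooth'' reduction. Fix a random order on the coordinates. Define $p_j$ as the probability that coordinate $j$ accepts, conditioned on the acceptance pattern of the previously fixed coordinates being ``on track'' to reach threshold $t$. If every conditional acceptance probability were at most $\eps+\delta/\mathrm{poly}$, a martingale/coupling argument would show that the number of accepted coordinates is stochastically dominated by a sum of independent $\Ber(\eps+\delta/\mathrm{poly})$ variables. That would give total acceptance at most $\Pr[\sum X_i\ge t]+\delta/2$, a contradiction. Hence some coordinate $j$, together with some conditioning event $E$ on the earlier coordinates, has conditional acceptance probability above $\eps$ by a noticeable amount. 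The single-copy prover then (i) simulates all coordinates except $j$, (ii) repeatedly attempts to produce the prefix event $E$, and (iii) forwards coordinate $j$ to the real verifier.

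\textbf{Main obstacle.} Step (ii) is the hard part: conditioning on $E$ quantumly requires rewinding, and a quantum prover's state is disturbed by measurement. The events in $E$ concern only simulated coordinates, and whether they hold is determined by the final outcome, which arrives after the real coordinate's messages have been exchanged. So we cannot simply postselect and rewind. We plan to make the conditioning ``online'', in the manner of \cite{CHS05}: in each round, use Marriott–Watrous style repeated projective measurements to estimate, and boost, the probability of the good future event given the current prefix. Each round costs polynomially many repetitions, and we use a Jordan-lemma argument to bound the disturbance. Alternatively, following the paper's hinted novel strategy, we could avoid conditioning altogether by choosing the simulated coordinates' verdicts to be \emph{fake} $\Ber(\eps)$ coins: the reduction simply declares them accepted or rejected according to independent coins. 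Then the simulated pattern is independent by construction, and the only requirement is that $P^*$ cannot distinguish a real verifier from a simulated one, which holds by $\HE$ security. We would try this second route first, since it directly yields the exact binomial tail rather than a lossy bound. We would then verify carefully that the prover's view, and hence the acceptance of the forwarded coordinate, remains correct under this hybrid.
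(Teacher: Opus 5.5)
Your proposal has a genuine gap at the step you yourself flag as the main obstacle, and the route you say you would try first cannot work. The verdicts of the simulated coordinates are never sent to $P^*$. ``Declaring'' them by independent $\Ber(\eps)$ coins therefore changes nothing about the interaction: the forwarded coordinate is accepted with exactly its marginal probability $\Pr[A_j=1]$. The hypothesis $\Pr[\sum_i A_i\ge t]\ge B(k,t,\eps)+\delta$ constrains only the joint distribution of verdicts, not the marginals. For instance, perfectly correlated verdicts with every marginal equal to $\gamma$, where $\eps^k<\gamma<\eps$, beat the $t=k$ bound while every marginal is below $\eps$. So a reduction that only exploits marginals cannot reach a contradiction. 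The amplification must come from conditioning on the \emph{real} verdicts of the simulated coordinates; this is not a variant of the paper's new strategy but the opposite of it. The paper first runs the Holenstein--Schoenebeck/Chung correlation reduction (Lemma~\ref{lemma:corr_reduction}). For most $r^*$, this bounds $\Pr[\sum_{j\ne1}A_j\ge t']$ and $\Pr[\sum_{j\ne1}A_j\ge t'-1]$ conditioned on the external coins $r_{\Query,1}=r^*$. It then post-selects on exactly $t'-1$ simulated coordinates accepting. The counting in Lemma~\ref{lemma:poly_tight_final_prob} shows this gives $\eps+\Omega(\nu/k)$, even though the external coins are distributed according to the prior rather than the posterior given the event. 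Your martingale sketch would need the same concreteness. You would have to condition on counts rather than full patterns, so that the event has noticeable probability. You would have to locate $j$ and the event by fully internal simulation before the interaction. And you would have to handle the prior-versus-posterior weighting of the external coins.

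Your other route, Marriott--Watrous-style online conditioning, misses the real difficulty. The conditioning event depends on the external coordinate's \emph{future} messages, so you must post-select using dummy encryptions of $0$ there and then continue with encryptions of the real queries. After post-selection, the prover's state is entangled with the dummy ciphertexts. Semantic security alone therefore does not justify swapping in the real ones. Moreover, even a computational guarantee would not suffice. Whether the external verifier accepts depends on its private coins and on the very secret key being challenged, so a semantic-security distinguisher cannot check it. You need \emph{statistical} closeness of the post-selected states, and disturbance bounds for a fixed pair of projectors say nothing about replacing one ciphertext by another. The missing ingredients are as follows. First, keep the external coordinate's key-generation and encryption coins in superposition as $\ket{\Enc(0)}_{\sfR\sfE}$ and post-select coherently with $\Pi_{\Acc}$. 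Next, use QSVT (Corollary~\ref{cor:qsvt_threshold}) to rotate back onto $\ket{\Enc(0)}$ and replace it by $\ket{\Enc(q_j)}$. Then apply the inverse QSVT with respect to the new projector. Lemma~\ref{lemma:comp_to_stat}, which combines the Lombardi--Ma--Spooner oracle-distinguisher lemma with a SWAP test, is what upgrades computational indistinguishability of the ciphertext distributions to statistical closeness of the residual states. Fixing the coins classically instead would require a pointwise replacement of $\Enc(0;r)$ by $\Enc(q;r)$, which semantic security does not provide.
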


Both our direct-product and threshold results are information-theoretically tight (up to negligible factors, which are likely necessary \cite{DJMW12}) and \emph{improve} upon what is known even in the classical setting \cite{CL10,HS11,Chu11}. Notably, our threshold-verifier result gives a bound which is \emph{independent of round complexity}, provided that the protocol has polynomially many rounds. In contrast, previous results even in the classical setting based on the soft-decision approach of \cite{BIN97}, which remain state-of-the-art for certain parameter regimes, have bounds which degrade with the number of rounds (see \cite{HPPW10,Chu11} for details).

The main difficulty in extending prior works to quantum provers is that their classical reduction relies on rewinding. Roughly, the reduction first runs the repeated prover using an encryption of a dummy verifier message, identifies a favorable execution, and then rewinds the prover and replaces the dummy encryption with an encryption of the actual message received from an external verifier. Such rewinding is not available for an arbitrary quantum prover, whose state may moreover be entangled across all repetitions.

There is also a subtler cryptographic difficulty. Although encryptions of the dummy message and the real verifier message are computationally indistinguishable, the reduction must first condition the prover's state on a favorable acceptance event using the dummy ciphertext, and then continue from the resulting state using the real ciphertext. Ordinary semantic security does not by itself guarantee that this state remains suitable for the real-ciphertext continuation. Our proof overcomes these obstacles by developing a coherent substitute for the classical rewinding procedure, together with a ciphertext-replacement argument that remains valid under the relevant coherent conditioning. We defer a detailed description of these ideas to the technical overview in Section~\ref{sec:overview}.

As an immediate application, we can apply our parallel repetition theorem to existing constant-round succinct arguments for $\QMA$ and general compilation of $\MIP^{*}$s \cite{MNZ24, GKNV25, HK25a}. 
\begin{corollary}[Informal]
    Assuming the existence of $\QHE$ (which is implied by the post-quantum hardness of LWE), there exist constant-round succinct arguments for $\QMA$ with negligible completeness and soundness errors.
\end{corollary}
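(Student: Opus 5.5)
The plan is to derive the corollary from Theorem~\ref{thm:tight_repetition} applied to an existing constant-round succinct argument for $\QMA$. Concretely, I would start from one of the known constructions \cite{MNZ24, GKNV25, HK25a}. In each of them, an $\MIP^*$ protocol for $\QMA$ (with succinct verification) is compiled into a single-prover argument using $\QHE$, and the base argument is then obtained by composing with a succinct classical argument, which itself needs only collision-resistance-type assumptions implied by LWE.

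These constructions have the following properties:
\begin{itemize}
\item constant round complexity;
\item completeness error at most $c$ and soundness error at most $s+\negl(\secp)$ against quantum polynomial-time provers, for constants $s<c$ with a constant gap, or at least a $1/\poly$ gap;
\item a private-coin verifier, since the compiler hides the verifier's questions under $\QHE$;
\item an honest prover that is quantum.
\end{itemize}

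Next I would apply the transformation $(P,V)\mapsto(\widehat P,\widehat V)$ using a classical-friendly $\QHE$ scheme. The assumption of $\QHE$ from LWE supplies exactly the kind of scheme required for a quantum honest prover. This step preserves the number of rounds, changes completeness and soundness only by negligible amounts, and preserves succinctness. Succinctness is preserved because the verifier's messages are merely encrypted, which blows up communication by a $\poly(\secp)$ factor independent of the instance size beyond what the base protocol already uses.

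Then I would take $k=\poly(\secp)$ (in fact $k=\Theta(\secp)$ suffices) parallel copies of $\widehat V$ with threshold $t=\lceil k(c+s)/2\rceil$:
\begin{itemize}
\item \textbf{Soundness.} The threshold form of Theorem~\ref{thm:tight_repetition}, combined with a Chernoff bound on $\sum_i X_i$ with $X_i\sim\Ber(s)$, gives soundness error $e^{-\Omega(k(c-s)^2)}+\negl(\secp)$. This is negligible once $k=\omega(\log\secp)/(c-s)^2$.
\item \textbf{Completeness.} The honest prover runs the copies independently on independent witness copies. For $\QMA$ we may assume polynomially many copies of the witness are provided, as is standard. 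Hoeffding's bound then shows that at least $t$ copies accept except with negligible probability.
\end{itemize}
Parallel repetition keeps the round count constant. Communication grows only by a factor $k=\poly(\secp)$, so succinctness (polylogarithmic dependence on the instance or on the verification time) is retained.

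The main obstacle I expect is checking that the chosen base argument satisfies the hypotheses of Theorem~\ref{thm:tight_repetition} in the formal version. In particular, its soundness must hold as an \emph{argument} against quantum polynomial-time provers with error $\eps+\negl(\secp)$ for a fixed $\eps$ bounded away from the completeness. The subtle part is that compiled $\MIP^*$ soundness results \cite{GKNV25, HK25a} often give soundness only as ``$\text{quantum value}+\negl$'' or only asymptotically. I would first try to use the version with an explicit constant gap. If necessary, I would first sequentially amplify the $\MIP^*$ gap to a constant; this can be done without affecting the constant round count, since it happens at the nonlocal-game level before compilation.
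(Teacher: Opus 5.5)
Your proposal is correct and matches the paper's argument, which the paper gives only as a one-line remark: apply the $\QHE$ transformation to an existing constant-round succinct argument for $\QMA$ with a noticeable completeness--soundness gap \cite{MNZ24, GKNV25, HK25a}. Then use the threshold form of Theorem~\ref{thm:tight_repetition} with a Chernoff/Hoeffding bound for soundness, and Lemma~\ref{lemma:qfhe_transform} plus Hoeffding for completeness.

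Three corrections. First, succinctness of the prover's homomorphically evaluated responses comes from compactness of $\QHE$, not just from the verifier's messages being encrypted. Second, your fallback of sequentially amplifying the $\MIP^{*}$ gap is unnecessary, since the theorem already handles a $1/\poly(\secp)$ gap with $k=\poly(\secp)$; it would also destroy constant round complexity once compiled. Third, the phrase ``completeness error at most $c$'' should read ``completeness at least $c$.''
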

Previous constructions of constant-round succinct arguments for $\QMA$ with negligible soundness required the much stronger assumption of post-quantum indistinguishability obfuscation as well as LWE \cite{BKL+22}.

\section{Technical Overview}\label{sec:overview}
\newcommand{\base}{V}
\newcommand{\rep}{\heV^k}
\newcommand{\adv}{\mathsf{A}}
\newcommand{\red}{\mathsf{R}}
\newcommand{\kmi}{[k]\setminus\{i\}}
\newcommand{\hq}{\hat{q}}
\newcommand{\hbase}{\widehat{\base}}

\newcommand{\rR}{\mathsf{R}}
\newcommand{\rE}{\mathsf{E}}

\newcommand{\heP}{\widehat P}
\newcommand{\heV}{\widehat V}

\subsection{HE Transformation}
We begin by describing the $\HE$ transformation originally due to Chung and Liu \cite{CL10}. Given an interactive argument system $\langle P, V \rangle$ and an $\HE$ scheme $(\Gen,\Enc,\Dec,\Eval)$, we obtain a new argument system $\langle \heP, \heV \rangle$ as follows. In each round, the verifier $\heV$ simulates the original verifier $V$, encrypting its message under $\HE$ (using a fresh key pair in each round), and forwarding the ciphertext and public key to the prover.\footnote{For technical reasons, $\heV$ also sends the secret key for the previous round.} $\heP$ homomorphically evaluates $P$ on the received messages and return the (encrypted) responses to the verifier. On receiving a response, $\heV$ decrypts it and forwards the plaintext to $V$. Completeness and soundness of $(\heP,\heV)$ follows easily from the completeness and soundness of $(P,V)$ and correctness of $\HE$.

Chung and Liu show that the soundness of the $k$-fold parallel repetition of $\heV$ decreases exponentially in $k$. Their result applies to a broader class of protocols with ``simulatable'' verifiers, for which there is an efficient algorithm whose output, on input a partial transcript, is indistinguishable from the verifier's next message given that transcript. This holds for $\heV$ due to semantic security of $\HE$: the simulator can simply supply an encryption of zero. Unfortunately, it is unknown how to transfer this argument to the (post-)quantum setting. Instead we develop a new proof technique which is tailored to the $\HE$ transformation and consequently more amenable to quantization. As a bonus, our new technique also yields much tighter bounds than what was known in the classical setting.

\subsection{Classical analysis}
Our starting point is a new analysis in the classical setting. For the sake of simplicity, our focus will be on the direct-product setting, where the repeated verifier accepts only if all of its subverifiers accept. Our goal is to construct an efficient reduction $\red$ that takes as a resource an adversary $\adv$ with success probability $\gamma = 1/\poly(\lambda)$ against $\rep$, the parallel-repeated argument under $\HE$, and converts it into an adversary $\red^{\adv}$ against $V$ succeeding with probability close to $\gamma^{1/k}$.\footnote{Note that, unlike prior analyses via simulatability, the target of our reduction is $V$ rather than $\heV$.}

The reduction $\red$ interfaces with a \emph{external verifier} $\base$ and has black-box access to $\adv$. The basic strategy is to sample an index $i \in [k]$ ahead of time and embed $\base$ in the $i$-th instance of an execution of $\rep$. On all other instances $\kmi$, the reduction will simulate $\heV$ internally with an appropriately-chosen random tape; we refer to those instances as the \emph{internal verifiers}.

The general structure of the reduction is as follows: in round $j$, the reduction obtains a query $q^j$ from the external verifier and encrypts it under some randomness to obtain $\hq^j$, computes the encrypted queries $\hq_{-i}$ of the internal verifiers, and forwards the vector $\hq$, obtained by inserting $\hq^j$ into the $i$-th position of $\hq_{-i}$, to $\adv$. On receipt of encrypted responses $\hat a$, $\red$ decrypts $\hat a$ and checks if $a$ corresponds to a set of good responses (i.e. the prover's success probability conditioned on these responses remains relatively high) before forwarding $a_i$ to the external verifier.

There are two major choices we must make in the design of our reduction. First, how do we decide what index $i$ to embed the external verifier in? Second, how should we choose the randomness for the internal verifiers $\heV$ (including their encryption randomness) and the randomness used to encrypt the external verifier's message? 

Since $\langle P, V \rangle$ is a private-coin protocol, we do not have access to the external verifier's randomness, and thus we cannot directly determine whether it will accept. To find a proxy for this event, a natural first attempt would simply be to check whether the internal verifiers accept. After all, as the cheating prover must by assumption convince all $k$ subverifiers with probability $\gamma$, succeeding in executions $\kmi$ might be a signal that execution $i$ is also good. However, as Chung \cite{Chu11} points out, this approach does not work out if the prover's ``success pattern'' in coordinate $i$ is correlated with its behavior in other coordinates. Indeed, an adversary that convinces all $k$ verifiers with probability $\gamma$ and otherwise chooses $j \gets [k]$ uniformly and convinces all but the $j$-th verifier would cause the above strategy to fail: conditioning on the all but the $i$-th verifier accepting will bias the distribution towards the latter case for small $\gamma$.

Fortunately, as Canetti, Halevi, and Steiner \cite{CHS05} observe, such bad correlations allow us to directly convert $\adv$ into a cheating prover which interacts with $\heV^{k-1}$ that has success probability at least $\gamma^{(k-1)/k}$. In our previous example, if we simply simulate the first coordinate internally and embed the $k-1$ coordinates of $\heV^{k-1}$ in coordinates $2$ through $k$, it is straightforward to see that our adversary convinces $\heV^{k-1}$ with success probability $\gamma + (1-\gamma)/k \gg \gamma^{(k-1)/k}$. 

Thus, following the correlation reduction strategy of \cite{CHS05,CL10,HS11}, we run a pre-processing step to ensure that no coordinates have bad correlations. In particular, we check that the first coordinate does not have bad correlations with all other coordinates before proceeding (and reducing the number of repetitions via simulation if such correlations do exist), thereby fixing $i = 1$. With some work, one can show that once (most of) the bad correlations have been removed, our naive rule actually serves as a good proxy for when $a_i$ is a good response.

Specifically, let $\heV^k_i(q^1,\ldots,q^m;r_{-i},\rho_1,\ldots,\rho_m)$ be the $k$-fold repetition of $\heV$ where the verifiers in $\kmi$ use randomness $r_{-i}$, the $i$-th verifier sends an encryption of $q^j$ in round $j$ using randomness $\rho_j$, and the final decision is made by checking all verdicts besides the $i$-th verifier. At the beginning of the protocol, we repeatedly sample randomness $(r_{-1}, \rho_1, \ldots, \rho_m)$ for our verifier until it accepts when interacting with $\adv$ and store $r_{-1}$ which will remain fixed through the remainder of the protocol. Since we do not have the external verifier's actual randomness at this stage, we have $\heV^k_1$ use $\rho_j$ to encrypt $0$.

In the $j$-th round of the interaction, on receiving message $q^j$ from the external verifier, $\red$ will choose the $j$-th round encryption randomness $\rho_j$ by repeatedly resampling $(\rho_j,\ldots,\rho_m)$ until $\heV^k_1(q^1,\ldots,q^j,0,\ldots,0;r_{-1},\rho_1,\ldots,\rho_m)$ accepts when interacting with $\adv$.

We argue that $\red^\adv$ convinces the external verifier with good probability. Consider a thought experiment where we do have access to the random tape of the external verifier, and where throughout we replace the $0$ inputs to $\heV^k_1$ with the correct $q^j$. Since our acceptance heuristic/rule does not depend on the first coordinate, semantic security of $\HE$ implies that the resulting distribution of $(r_{-1}, \rho_1, \ldots, \rho_m)$ is \emph{statistically} close between these two cases. Observe now that in the thought experiment, each round simply corresponds to resampling the suffix $(\rho_j, \ldots, \rho_m)$ under the same conditioning as before. Thus we obtain the same randomness distribution if we simply fix $(r_{-1}, \rho_1, \ldots, \rho_m)$ as sampled at the beginning of the protocol. At this point standard analysis implies that the external verifier accepts with the desired probability.

\subsection{Quantum analysis}
Let us now move to the quantum setting, where both the honest prover and the adversary are potentially quantum algorithms (though the verifier remains classical). We will closely follow the classical proof strategy outlined above.

Recall that before the interaction begins, we sample randomness $r_{-1}$ for the internal verifiers using a rejection sampling procedure. We follow essentially the same strategy in the quantum setting: there is no need to rewind here because we can assume we are given advice consisting of many copies of $\adv$'s initial state.\footnote{This is necessary for a reduction that increases the success probability of the adversary; see \cite{BQSY24} for details.} The key difference is that rather than \emph{sampling} the encryption randomnesses $\rho_1,\ldots,\rho_m$ for the first coordinate (corresponding to the external verifier), we keep them in superposition, and coherently measure whether $\heV^k_1(0,\ldots,0)$ accepts (although we still sample $r_{-1}$ classically as before). After polynomially many attempts, we will successfully ``post-select'' on $\heV^k_1(0, \ldots, 0)$ accepting.

The more challenging part to handle is the interaction: once we have started to interact with the external verifier we are no longer able to ``throw away'' the adversary's state and try again. Notice that, via the same argument as in the classical setting, if we had performed the above post-selection based on $\heV^k_1(q^1,\ldots,q^m)$ where $q^1, \ldots, q^m$ are the external verifier's \emph{real} messages, the reduction would succeed with good probability. Thus it suffices to produce a state for the adversary that is statistically close to this ``ideal'' post-selection.

As in the classical reduction, the idea will be to gradually ``swap in'' the $q^j$ as they arrive, in a way that the adversary cannot detect, relying on the semantic security of $\HE$. Denote by $\ket*{\Enc(m)}_{\rR\rE} \propto \sum_{r} \ket{r}_{\rR} \ket*{\Enc(m;r)}_{\rE}$ the uniform superposition of encryptions of $m$. Notice that because $\rR$ is outside of its view, $\adv$ cannot distinguish $\ket*{\Enc(0)}$ from $\ket*{\Enc(q^j)}$. The difficulty is that the initial post-selection causes $\adv$ to become entangled with $\rE$. This entanglement must be preserved when we switch the ciphertext.

To achieve this we borrow and extend a technique from Lombardi, Ma and Spooner \cite{LMS22}, there used in the context of zero knowledge simulation. The entangling operation is a binary projective measurement (of whether $\heV^k_1$ accepts); i.e., a projection $\Pi_A$ on to the subspace $A$ of ``accepting'' states. We also consider the subspace $B$ of states where the $\rR\rE$ registers hold $\ket*{\Enc(0)}$ (in particular, they are unentangled with the adversary); the initial state of the reduction belongs to this subspace, and the corresponding binary projective measurement $\Pi_B = \ketbra*{\Enc(0)} \otimes I$ can be implemented efficiently.

A common principle in ``quantum rewinding'' reductions is that a state in a subspace $B$ that is perturbed by a projective measurement with respect to subspace $A$ can be efficiently rotated back to $B$ given oracles for the projections $\Pi_A,\Pi_B$. The algorithm that achieves this is the Quantum Singular Value Transformation (QSVT) \cite{GSLW19}. Let $\Pi_A,\Pi_B$ be projections on to $A,B$ respectively; we can write $\Pi_B \Pi_A = \sum_j \sigma_j \ketbra{v_j}{w_j}$ where $\sigma_j = \braket{v_j}{w_j} \in \mathbb{R}$ are the singular values of $\Pi_B\Pi_A$. The QSVT approximately implements the (partial) unitary transformation $\sum_{j, \sigma_j > \tau} \ketbra{v_j}{w_j}$ in time $\poly(1/\tau)$. That is, the QSVT will rotate into $B$ components of a state in $A$ with sufficiently high singular value. Notice that for a state $\ket{\psi} = \sum_j \alpha_j \ket{v_j} \in B$, the amplitudes of the postselected state $\ket{\psi'} \propto \Pi_A \ket{\psi} = \sum_j \alpha_j \sigma_j \ket{w_j}$ are skewed away from components where $\sigma_j$ is small. By a standard Markov argument we are able to rotate $\ket{\psi'}$ back into $B$ up to error $\varepsilon$ in time $\poly(1/\varepsilon)$.

Following \cite{LMS22}, our ``swapping'' operation works as follows. After the initial postselection step, we use the QSVT to rotate the state back to $B$, so that the $\rR\rE$ registers hold (approximately) $\ket*{\Enc(0)}$. Next we switch out the state on $\rR\rE$ with $\ket*{\Enc(q^j)}$. We then apply the \emph{inverse} of the QSVT, with the change that $\Pi_B$ is replaced with the projection $\Pi_{B'} = \ketbra*{\Enc(q^j)} \otimes I$ on to the subspace $B'$ of states where the $\rR\rE$ registers hold $\ket*{\Enc(q^j)}$.

Why does this work? Suppose for a moment that $\ket*{\Enc(0)}$ and $\ket*{\Enc(q^j)}$ have the same reduced density matrix on $\rE$. Then there is an isometry $V$ acting only on $\rR$ which maps $\ket*{\Enc(0)}$ to $\ket*{\Enc(q^j)}$. Up to small errors, the state-switching operation in the above procedure is equivalent to an application of $V$. We can then imagine commuting $V$ back through the QSVT: it commutes with $\Pi_A$ and conjugates $\Pi_B$ to $\Pi_{B'}$. Now the QSVT and inverse QSVT are both with respect to $\Pi_{B'}$, and so they cancel out. Finally, $V$ applied to the initial state $\ket*{\Enc(0)}_{\rR\rE} \ket{\phi}$, where $\ket{\phi}$ is the initial state of $\adv$, maps it to $\ket*{\Enc(q^j)} \ket{\phi}$, and so the resulting state is $\ket{\phi'} \propto \Pi_A \ket*{\Enc(q^j)} \ket{\phi}$ as desired.

Of course, there is no such isometry: $\ket*{\Enc(0)}$ and $\ket*{\Enc(q^j)}$ have disjoint support on $\rE$. However, their reduced densities are \emph{computationally} indistinguishable by semantic security of $\HE$. Note that it is not immediately clear how semantic security should apply here, since the QSVT has access to the projections $\Pi_B,\Pi_B'$ on to these states. Nonetheless, \cite{LMS22} shows how to carry out the above argument in the computational setting, showing that the final state is \emph{computationally indistinguishable} from $\ket{\phi'}$.

There is a further subtlety in our setting, however: computational indistinguishability will not suffice here, because to determine whether $\red^\adv$ convinces the external verifier requires the secret key of the encryption. Thus (similarly to the classical setting) we need to argue that the final state is \emph{statistically} indistinguishable from $\ket{\phi'}$. We achieve this by ``lifting'' statistical to computational distinguishability via a SWAP test; see Section \ref{sec:oracle_distinguishers} for details.

Finally, by using a pre-processing procedure due to Holenstein and Schoenebeck \cite{HS11} and slightly modifying our acceptance heuristic, we can extend our reduction to work for threshold verifiers in a fairly straightforward manner.

\section{Preliminaries}\label{sec:preliminaries}
We use $\PPT$ as a shorthand for probabilistic polynomial time and $\QPT$  as a shorthand for quantum polynomial time. When referring to non-uniform quantum circuits, we always refer to quantum circuits with \emph{quantum} (and not classical) advice unless otherwise specified. 

\begin{notation}
    For any two quantum states $\brho_1$ and $\brho_2$, we use the notations
    \[
    \brho_1\stackrel{\eps}\equiv\brho_2 ~~\mbox{ and }~~\TD(\brho_1,\brho_2)\leq\eps
    \]
    to denote that the trace distance between $\brho_1$ and $\brho_2$ is at most $\eps$. In particular, this means that for any (possibly unbounded) quantum adversary $\calA$,
    \[
    |\Pr[\calA(\brho_1)=1]-\Pr[\calA(\brho_2)=1]| \leq \eps.
    \]
    For any two families of quantum states $\brho_1=\{\brho_{1,\secp}\}_{\secp\in\mathbb{N}}$ and $\brho_2=\{\brho_{2,\secp}\}_{\secp\in\mathbb{N}}$
    we use the notation
        \[ \brho_1\stackrel{\eps}\approx\brho_2 \]
    to denote that $\brho_1$ and $\brho_2$ are $\eps$-computationally indistinguishable, which means that for any $\QPT$ adversary $\calA$ there exists a negligible function $\negl$, such that for every $\secp \in \bbN$,
        \[ |\Pr[{\cal A}(\brho_{1,\secp})=1]-\Pr[{\cal A}(\brho_{2,\secp})=1]|\leq \eps(\secp)+\negl(\secp). \]
\end{notation}

\subsection{Information Theory} \label{sec:info}
\begin{notation}\label{notation:bin}
    For any $\eps \in [0, 1]$ and integers $t$ and $k$, define the threshold binomial function by 
        \[ B(k, t, \eps) := \sum_{s = t}^k \binom{k}{s} \eps^s (1-\eps)^{k-s}. \]
    Note that $B(k, k, \eps) = \eps^k$, and that $B(k, t, \eps)$ is exactly the probability that the sum of $k$ i.i.d. Bernoulli random variables with parameter $\eps$ is at least $t$.
\end{notation}

We will use Hoeffding's inequality in this work:
\begin{proposition}[Hoeffding's inequality]\label{proposition:hoeffding}
Let $X_1, \ldots, X_n$ be independent bounded random variables with $X_i \in [a, b]$ for all $i$, where $-\infty < a \leq b < \infty$. Then 
    \[ \Pr[\left|\frac{1}{n} \sum_{i=1}^n (X_i-\bbE[X_i])\right| \geq t] \leq 2e^{-2nt^2/(b-a)^2}. \]
\end{proposition}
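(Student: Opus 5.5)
We follow the standard exponential-moment (Chernoff) method, so the whole argument reduces to a bound on the moment generating function of a single bounded, centered random variable. Set $Y_i := X_i - \bbE[X_i]$, so the $Y_i$ are independent, have mean zero, and take values in an interval of length $b-a$, namely $[a-\bbE[X_i],\, b-\bbE[X_i]]$. It suffices to prove the one-sided bound
\[ \Pr\left[\frac{1}{n}\sum_{i=1}^n Y_i \geq t\right] \leq e^{-2nt^2/(b-a)^2}. \]
Applying the same bound to the variables $-Y_i$, which satisfy identical hypotheses, gives the lower tail. A union bound over the two tails then produces the factor $2$. We may assume $t>0$ and $a<b$, since otherwise the claim is trivial.

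For the one-sided bound, fix $s>0$ and apply Markov's inequality to $e^{s\sum_i Y_i}$. By independence this gives
\[ \Pr\left[\sum_i Y_i \geq nt\right] \leq e^{-snt}\prod_{i=1}^n \bbE\left[e^{sY_i}\right]. \]
The key ingredient is Hoeffding's lemma: if $Y$ has mean zero and $Y\in[\alpha,\beta]$, then $\bbE[e^{sY}] \leq e^{s^2(\beta-\alpha)^2/8}$. Granting this, the right-hand side is at most $\exp\left(-snt + ns^2(b-a)^2/8\right)$. Choosing $s = 4t/(b-a)^2$ minimizes the exponent and yields $e^{-2nt^2/(b-a)^2}$, as required.

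The only nontrivial step is Hoeffding's lemma, which we prove as follows. By convexity of $y\mapsto e^{sy}$ on $[\alpha,\beta]$, we have the pointwise bound
\[ e^{sy} \leq \frac{\beta-y}{\beta-\alpha}e^{s\alpha} + \frac{y-\alpha}{\beta-\alpha}e^{s\beta}. \]
Taking expectations and using $\bbE[Y]=0$ gives $\bbE[e^{sY}] \leq e^{L(h)}$. Here $h = s(\beta-\alpha)$, $p = -\alpha/(\beta-\alpha)\in[0,1]$, and
\[ L(h) = -hp + \log\left(1-p+pe^{h}\right). \]
One checks directly that $L(0)=L'(0)=0$. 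Moreover, $L''(h) = u(1-u)$ with $u = pe^h/(1-p+pe^h)\in[0,1]$, so $L''(h)\leq 1/4$. Taylor's theorem with remainder then gives $L(h)\leq h^2/8 = s^2(\beta-\alpha)^2/8$. This second-derivative computation is the main (though routine) obstacle; everything else is bookkeeping.
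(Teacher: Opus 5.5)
The paper gives no proof of this proposition. It quotes Hoeffding's inequality as a standard preliminary and uses it only in the appendix proof of Lemma~\ref{lemma:corr_reduction}, with $X_i \in [0,1]$ and $t = \gamma/2$, to control the estimates $\widehat{p}_{t'}(r_i)$ and $\widehat{p}_{t'-1}(r_i)$. So there is no argument of the paper's to compare against.

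Your proof is the standard textbook argument, and it is correct:
\begin{itemize}
\item the exponential Markov bound, with independence used to factor the moment generating function;
\item Hoeffding's lemma, via the convexity chord bound, the identity $L''(h) = u(1-u) \le 1/4$, and second-order Taylor;
\item the choice $s = 4t/(b-a)^2$;
\item the symmetric application to $-Y_i$ and a union bound for the factor $2$.
\end{itemize}
The fact $p \in [0,1]$ is justified because a mean-zero $Y$ supported in $[\alpha,\beta]$ forces $\alpha \le 0 \le \beta$.

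One small correction concerns your remark that the claim is trivial unless $t>0$ and $a<b$. This is right for $t = 0$, where the right-hand side is $2$, and for $a = b$ under the obvious convention. It is wrong for $t < 0$. There the left-hand side equals $1$, while $2e^{-2nt^2/(b-a)^2}$ can be far below $1$, so the displayed inequality is actually false as literally stated. The proposition should be read with $t > 0$, as in the paper's only application, and that is exactly the range your argument covers.
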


\subsection{Interactive Protocols with Computational Soundness}\label{prelim:IA}

In what follows, we focus on interactive protocols between two parties, where one party, denoted by $V$ (for verifier), outputs a single bit, indicating acceptance or rejection. The other party is denoted by $P$ (for prover) and tries to convince the verifier to accept. We define the notion of computational soundness for such protocols. For the sake of simplicity, we focus on the case where the parties receive as input the security parameter $1^\secp$, and omit all additional inputs. We denote by
    \[ \langle P, V \rangle(1^\secp) \]
the output bit of $V$ after interacting with $P$.

Given an interactive protocol $(P, V)$, we can also define the $k$-fold parallel repetition with threshold $t$ (denoted by $(P^{\otimes k}, V^{k, t})$) as follows: the verifier $V^{k, t}$ executes $k$ independent executions/subverifiers $V_i$ (for $i \in [k]$) in parallel and accepts if and only if at least $t$ out of the $k$ subverifiers accept. The honest prover $P^{\otimes k}$ simply executes $k$ copies of the honest prover strategy $P$ in parallel. For a general prover $\calA$ (not necessarily honest) we denote by
    \[ \langle\calA, V^{k, t}\rangle(1^\secp) \]
the output bit of $V^{k, t}$ after interacting with $\calA$.

\begin{remark}
      We assume without loss of generality that the first message is sent by the verifier, as otherwise the verifier can always begin by sending a dummy message which is ignored in the rest of the protocol. We also assume that the last message is sent by the prover. We call an interactive protocol an $m$-round protocol if it consists of $m$ rounds of back and forth messages from the verifier to the prover (as opposed to the usual $2m$-round definition). 
\end{remark}

\begin{notation}
    Denote the transcript of an $m$-round interactive protocol $(P, V)$ as all messages sent between the prover and verifier, labeled $\tau = (q_1, a_1, \dots, q_m, a_m)$, where $q_j$ is the query sent by~$V$ in the $j$'th round, and $a_j$ is the response sent by $P$ in the $j$'th round. We let $(\tau, r) \gets (P, V)$ denote the random variable corresponding to the transcript and internal randomness, respectively, of a random interaction of a prover $P$ with verifier $V$.
\end{notation}  

\begin{notation}
    Given a transcript $\tau$ and verifier randomness $r$, we denote the verdict function of $V$ by $V(\tau, r) \in \{0, 1\}$ to be the (deterministic) output bit of $V$. For a threshold verifier $V^{k, t}$, we let the randomness of the $i$'th subverifier be denoted by $r_i$ and the transcript of the $i$'th execution be denoted by $\tau_i$.
\end{notation}

\subsection{Homomorphic Encryption}\label{sec:HE}
In this section, we define the two types of encryption schemes we consider in this work. 
\begin{definition}[Classical Homomorphic Encryption]
    Let $\secp$ be the security parameter. A homomorphic encryption scheme $\HE$ consists of four $\PPT$ algorithms 
        \[ \HE = (\HE.\Gen, \HE.\Enc, \HE.\Dec, \HE.\Eval) \]
    with the following syntax:
    \begin{itemize}
        \item \textbf{Key Generation.} The probabilistic algorithm $(\pk, \sk) \gets \HE.\Gen(1^{\secp})$ outputs a public key $\pk$ and a secret key $\sk$.
        \item \textbf{Encryption.} The probabilistic algorithm $\ct \gets \HE.\Enc(\pk, m)$ takes as input a public key $\pk$ and message $m \in \{0, 1\}^{\poly(\lambda)}$ and outputs a ciphertext $\ct$.
        \item \textbf{Decryption.} The deterministic algorithm $m \gets \HE.\Dec(\sk, \ct)$ takes as input a secret key $\sk$ and ciphertext $\ct$ and outputs a message $m$.
        \item \textbf{Evaluation.} The (deterministic or probabilistic) algorithm $\ct' \gets \HE.\Eval(\pk, C, \ct)$ takes as input a public key $\pk$, circuit $C: \{0, 1\}^{\poly(\lambda)} \to \{0, 1\}^{\poly(\lambda)}$, and ciphertext $\ct$, and outputs a ciphertext $\ct'$.  
    \end{itemize}
    In addition, $\HE$ should satisfy the following properties:
    \begin{enumerate}
        \item \textbf{Correctness.} $\HE$ is \emph{fully homomorphic} if for any efficiently computable circuit $C: \{0, 1\}^{\poly(\lambda)} \to \{0, 1\}^{\poly(\lambda)}$ and any input $x$,
            \[ \Pr[\HE.\Dec(\sk, \ct') \neq C(x)] = \negl(\lambda), \]
        where the probability is taken over $(\pk, \sk) \gets \HE.\Gen(1^{\secp})$, $\ct \gets \HE.\Enc(\pk, x)$, and $\ct' \gets \HE.\Eval(\pk, C, \ct)$. $\HE$ is \emph{leveled fully homomorphic} if it takes $1^L$ as additional input in key generation, and can only evaluate depth-$L$ Boolean circuits.
        \item \textbf{(Non-Uniform Post-Quantum) Semantic Security.} $\HE$ is secure if for any $\poly(\lambda)$-size non-uniform quantum adversary $\calA$ and any pair of messages $m_0, m_1$ such that $|m_0| = |m_1| = \poly(\secp)$ it holds that
            \[ |\Pr[\calA(\pk, \HE.\Enc(\pk, m_0)) = 1]-\Pr[\calA(\pk, \HE.\Enc(\pk, m_1)) = 1]| = \negl(\secp), \]
        where $(\pk, \sk) \gets \HE.\Gen(1^{\secp})$.\footnote{Technically speaking, semantic security often allows $m_0, m_1$ to depend on $\pk$, but we will only need security against non-adaptively chosen plaintexts.}\footnote{To be precise, we will in general need semantic security against quantum adversaries with quantum advice/non-uniformity.}
        \item \textbf{Compactness.} $\HE$ is compact if its decryption algorithm $\HE.\Dec$ is independent of size of the evaluated circuit.\footnote{On the other hand, the decryption algorithm must be allowed to run in time polynomial in the output length of the circuit.} In the case that $\HE$ is only leveled fully homomorphic, $\HE.\Dec$ can also run in time polynomial in the level upper bound $L$.
    \end{enumerate}
\end{definition}

\begin{definition}[Quantum Homomorphic Encryption (QHE), \cite{GV24}]
    A quantum homomorphic encryption scheme consists of four $\QPT$ algorithms  
        \[ (\QHE.\Gen, \QHE.\Enc, \QHE.\Eval, \QHE.\Dec) \]
    with the following syntax:
        \begin{itemize}
            \item \textbf{Key Generation.} $(\pk, \sk) \gets \QHE.\Gen(1^{\secp})$ is a $\PPT$ algorithm that takes as input the security parameter $1^{\lambda}$ and outputs a public key $\pk$ and secret key $\sk$.
            \item \textbf{Encryption.} $\brho \gets \QHE.\Enc(\pk, \bsigma)$ is a $\QPT$ algorithm which takes as input a public key $\pk$ and a state $\bsigma$, and outputs a ciphertext $\brho$.
            \item \textbf{Evaluation.} $\bsigma \gets \QHE.\Eval(\pk, C, \brho)$ is a $\QPT$ algorithm which takes as input a public key $\pk$, a quantum circuit $C$, and a ciphertext state $\brho$, and outputs a state $\bsigma$.
            \item \textbf{Decryption.} $\brho \gets \QHE.\Dec(\sk, \bsigma)$ is a $\QPT$ algorithm which takes as input a secret key $\sk$ and a quantum ciphertext $\bsigma$, and outputs a quantum state $\brho$.
        \end{itemize}
    In addition, $\QHE$ should satisfy the following properties:
        \begin{enumerate}
            \item \textbf{Correctness.} $\QHE$ is \emph{fully homomorphic} if there exists a negligible function $\mu$ such that, for all $\secp \in \bbN$, all $\poly(\secp)$-size quantum circuits $C$, and all $(\sk, \pk) \gets \QHE.\Gen(1^{\secp})$,
                \[ \left\| \QHE.\Dec_{\sk} \circ \QHE.\Eval_{\pk,C} \circ \QHE.\Enc_{\pk} - \calC \right\|_{\diamond} \leq \mu(\secp), \]
            where $\calC$ is the quantum channel implemented by $C$.
            
            $\QHE$ is \emph{leveled fully homomorphic} if it takes $1^L$ as additional input in key generation, and can only evaluate quantum circuits of $T$-depth at most $L$.
            \item \textbf{(Non-Uniform) Semantic Security.} $\QHE$ is secure if for all $\poly(\secp)$-sized non-uniform quantum adversaries $\calA$ and any pair of messages $m_0, m_1$ such that $|m_0| = |m_1| = \poly(\secp)$,\footnote{One can ask for a stronger form of security, where $\calA$ is allowed to ask for encryptions of quantum messages, but we will only need this weaker form of security with classical ciphertexts.}
                \[ |\Pr[\calA(\pk, \QHE.\Enc(\pk, m_0)) = 1]-\Pr[\calA(\pk, \QHE.\Enc(\pk, m_1)) = 1]| = \negl(\secp), \]
            where $(\pk, \sk) \gets \QHE.\Gen(1^{\secp})$.
            \item \textbf{Compactness.} $\QHE$ is compact if its decryption algorithm $\QHE.\Dec$ is independent of size of the evaluated circuit. In the case that $\QHE$ is only leveled fully homomorphic, $\QHE.\Dec$ can also run in time polynomial in the level upper bound $L$.
            \item \textbf{Classically Friendly.} $\QHE$ is \emph{classically friendly} if for all classical messages $m$ and classical ciphertexts $c$, $\QHE.\Enc(\pk, m)$ and $\QHE.\Dec(\sk, c)$ are $\PPT$ algorithms, and moreover, $\QHE.\Dec(\sk, c)$ is deterministic.\footnote{We emphasize that this condition is relatively mild and is only needed for the correctness of our transformation. In particular, all known constructions of $\QHE$ \cite{Mah20,Bra18,GV24} satisfy this property, and protocols compiled using the compiler of \cite{KLVY23} assume the existence of $\QHE$ with this property.}
        \end{enumerate}
\end{definition}

\subsubsection{The Homomorphic-Encryption Transformation}
Let $(P, V)$ be a $m$-round interactive protocol. Without loss of generality, we can assume that $(P, V)$ has the following syntax:
\begin{enumerate}
    \item $V$ samples randomness $r_{\Query} \gets \{0, 1\}^*$.
    \item For $i \in [m]$:
    \begin{enumerate}
        \item $V$ computes the message $q_i = f_i(r_{\Query}, a_1, \ldots, a_{i-1})$ and sends $q_i$ to $P$.
        \item Upon receiving $q_i$, $P$ computes some (not necessarily deterministic) response $a_i$ and sends $a_i$ to $V$.
    \end{enumerate}
    \item $V$ computes some function $b = g(r_{\Query}, q_1, a_1, \ldots, q_m, a_m) \in \{0, 1\}$ and outputs $b$.
\end{enumerate}

\noindent We now define the $\QHE$-transformed protocol $(\widehat{P}, \widehat{V})$ as follows:
\begin{enumerate}
    \item $\widehat{V}$ samples $r_{\Query}$ as well as randomness $r_{\Gen}, (r_{\Enc, 1}, \ldots, r_{\Enc, m}), (r'_{\Enc, 2}, \ldots, r'_{\Enc, m})$ used for encryption. Using $r_{\Gen}$, $\widehat{V}$ computes encryption keys $\{(\pk_i, \sk_i)\}_{i \in [m]}$.
    \item For $i \in [m]$:
    \begin{enumerate}
        \item $\widehat{V}$ computes $a_{i-1} = \QHE.\Dec(\sk_{i-1}, \widehat{a}_{i-1})$ and its message $q_i = f_i(r_{\Query}, a_1, \ldots, a_{i-1})$. It then encrypts $q_i$ to produce $\widehat{q}_i = \QHE.\Enc(\pk_i, q_i; r_{\Enc, i})$. $\widehat{V}$ sends $\pk_i$ and $\widehat{q}_i$ to $P$. If $i \neq 1$, $\widehat{V}$ also computes $\widehat{\sk}_{i-1} \gets \QHE.\Enc(\pk_i, \sk_{i-1}; r'_{\Enc, i})$ and sends $\widehat{\sk}_{i-1}$ to $\widehat{P}$.
        \item Upon receiving $\pk_i$ and $\widehat{q}_i$ (as well as $\widehat{\sk}_{i-1}$ when $i > 1$), $\widehat{P}$ computes some (not necessarily deterministic) response $\widehat{a}_i$ and sends $\widehat{a}_i$ to $\widehat{V}$.
    \end{enumerate}
    \item $\widehat{V}$ decrypts $\widehat{a}_m$ to $a_m = \QHE.\Dec(\sk, \widehat{a}_m)$ before computing $b = g(r_{\Query}, q_1, a_1, \ldots, q_m, a_m)$ and outputting $b$.
\end{enumerate}
\begin{remark}
    Note that when the honest prover $P$ is classical, it suffices to use a classical homomorphic encryption scheme $\HE$ instead of $\QHE$. This is the case when we are considering just the post-quantum security of classical protocols rather than protocols where even honest provers must be quantum (as in \cite{Mah18,MNZ24,GKNV25,HK25b}).
\end{remark}

\begin{remark}
    The use of a fresh encryption key in every round is important for the security analysis. Looking ahead, when replacing the encrypted verifier message in round $i$ by an encryption of zero, we condition on the interaction up to round $i-1$. If the same key were reused throughout the protocol, this conditioning would also fix the key under which the next message is encrypted, and thus we would no longer be able to rely on semantic security. By using a fresh key $\pk_i$, the encryption key for round $i$ remains independent of the preceding interaction which allows us to rely on semantic security. The same fresh-key transformation was already used in the classical setting by Chung and Liu~\cite{CL10} for similar reasons.
\end{remark}

\begin{lemma}\label{lemma:qfhe_transform}
    $(P, V)$ and $(\widehat{P}, \widehat{V})$ have the same completeness and soundness, up to negligible (additive) error. Additionally, $(P^{\otimes k}, V^{k, t})$ and $(\widehat{P}^{\otimes k}, \widehat{V}^{k, t})$ have the same completeness for any $t, k = \poly(\secp)$ up to a negligible (additive) error. 
\end{lemma}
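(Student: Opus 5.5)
We handle the three claims of Lemma~\ref{lemma:qfhe_transform} separately. Completeness follows from correctness of $\QHE$, soundness of $(\widehat P,\widehat V)$ by a black-box reduction to soundness of $(P,V)$, and the other soundness direction by running $\widehat V$'s decryption inside a prover for $V$. Throughout, the honest $\widehat P$ is defined as follows. In round $i$ it holds $P$'s state encrypted under $\pk_{i-1}$. It uses $\widehat{\sk}_{i-1}$ to homomorphically decrypt that state and re-encrypt it under $\pk_i$ (standard key switching via $\QHE.\Eval_{\pk_i}$ applied to the decryption circuit). It then evaluates $P$'s next-message circuit under $\pk_i$ on $\widehat q_i$ and its encrypted state, and sends the encrypted answer $\widehat a_i$.

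\textbf{Completeness.} We run a hybrid argument over the $m$ rounds. In hybrid $H_j$, the first $j$ rounds are executed in the clear by $P$ and $V$, and the remaining rounds are executed as in $(\widehat P,\widehat V)$. By correctness of $\QHE$ (in diamond norm, so it applies to the joint state including $P$'s private state), $H_{j-1}$ and $H_j$ differ by $\negl(\secp)$ in trace distance. Since $m=\poly(\secp)$, the acceptance probabilities of $\langle P,V\rangle$ and $\langle\widehat P,\widehat V\rangle$ differ negligibly. Each of the $k$ parallel copies runs independently, so the same argument applies with $k\cdot m$ hybrids to $(P^{\otimes k},V^{k,t})$ versus $(\widehat P^{\otimes k},\widehat V^{k,t})$. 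For any fixed threshold rule, the output bit changes with probability at most $km\cdot\negl$. Classical friendliness guarantees that $\widehat V$ remains a classical verifier whose decryption is deterministic, so the verdict is well defined.

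\textbf{Soundness of $\widehat V$ given soundness of $V$.} Let $\calA^*$ be a cheating prover for $\widehat V$. We build $\calB$ against $V$: on receiving $q_i$, $\calB$ samples $(\pk_i,\sk_i)$ and encrypts $q_i$. It also encrypts $\sk_{i-1}$, which it knows because it generated it. It forwards the ciphertexts to $\calA^*$, decrypts the reply $\widehat a_i$ with $\sk_i$, and sends $a_i$ to $V$. The view of $\calA^*$, together with the decrypted answers, is distributed exactly as in the real interaction with $\widehat V$, so $\calB$ succeeds with the same probability. This holds with no error at all and needs no security. The repeated, threshold case is not claimed for soundness here.

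\textbf{Soundness of $V$ given soundness of $\widehat V$.} Conversely, given $\calA$ against $V$, the prover for $\widehat V$ homomorphically evaluates $\calA$, exactly as the honest $\widehat P$ does for $P$. By the completeness hybrid argument its success probability matches up to $\negl$. I expect the main point of care to be key switching via $\widehat{\sk}_{i-1}$ and whether its correctness is covered by the assumed (leveled) correctness. I would handle this by noting that the decryption circuit has fixed polynomial size, so choosing $L$ large enough suffices, and by invoking compactness so that ciphertext sizes do not grow across rounds.
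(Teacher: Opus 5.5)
Your proposal follows the paper's proof. Completeness comes from letting $\widehat P$ homomorphically run $P$ (key-switching with $\widehat{\sk}_{i-1}$) plus $\QHE$ correctness, soundness of $\widehat V$ from the perfect simulation in which the reduction generates the keys and encrypts $V$'s queries itself, the converse from homomorphic evaluation of the cheating prover, and threshold completeness from independence of the copies.

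Two small fixes. First, run your completeness hybrids backwards, switching round $m$ to plaintext first. In your forward order, the state ciphertext output by the round being switched is still consumed homomorphically by the next encrypted round, and correctness only controls $\QHE.\Dec\circ\QHE.\Eval\circ\QHE.\Enc$ once that output is actually decrypted, which is the case when the next round is already in the clear. Second, in the leveled case, ``take $L$ large enough'' is circular, because $\QHE.\Dec$ may have depth growing with $L$. Key switching over polynomially many rounds therefore needs an unleveled or bootstrappable scheme, a point the paper also leaves implicit.
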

\begin{proof}
    To establish that $(\widehat{P}, \widehat{V})$ has completeness at least as good (up to negligible additive error) as $(P, V)$, we simply have $\widehat{P}$ implement the honest prover $P$'s strategy under the hood using $\QHE$ (or $\HE$). Specifically, in each round, $\widehat{P}$ uses $\widehat{\sk}_{i-1}$ to recompute the prior transcript under $\pk_i$ before applying the honest prover $P$'s strategy for the $i$th round with $\QHE$ (resp. $\HE$) with respect to $\pk_i$. By the correctness of $\QHE$ (resp. $\HE$), $\widehat{P}$ has the same success probability as $P$ up to negligible factors. Note that the classical friendliness of $\QHE$ also guarantees that $\widehat{V}$ remains classically efficient.

    Showing the other direction (i.e. that $(P, V)$ has completeness at least as good as $(\widehat{P}, \widehat{V})$) is even simpler: $P$ simply generates $\HE/\QHE$ keys for each round, encrypts every query $q_i$ it gets from the verifier $V$, runs $\widehat{P}$ on $\widehat{q}_i$ to get a response $\widehat{a}_i$, and sends the decryption $a_i$ of $\widehat{a}_i$ back to $V$. Completeness follows from the fact that a random interaction of $P$ with $V$ perfectly simulates an interaction of $\widehat{P}$ with $\widehat{V}$. Similarly, it is not hard to see that soundness follows from direct simulation arguments and the correctness of $\QHE$.
    
    Establishing completeness for the threshold parallel repetitions $(P^{\otimes k}, V^{k, t})$ and $(\widehat{P}^{\otimes k}, \widehat{V}^{k, t})$ follows from an essentially identical argument since the honest provers $P^{\otimes k}$ and $\widehat{P}^{\otimes k}$ and verifiers all act independently across executions.
\end{proof}
\begin{remark}
   Note that the same argument does \emph{not} imply that the soundness of $(P^{\otimes k}, V^{k, t})$ can be reduced to that of $(\widehat{P}^{\otimes k}, \widehat{V}^{k, t})$. This is since a cheating prover in the (unencrypted) parallel-repeated protocol can correlate its answers across executions. As the verifier's messages in $(\widehat{P}^{\otimes k}, \widehat{V}^{k, t})$ are encrypted under different keys per execution, it is not clear how to simulate such a correlated strategy anymore.
\end{remark}

The following lemma, adapted from \cite{HS11} (see also \cite{Chu11}), serves as a building block in the proof of the (classical) parallel-repetition theorem. As we show below, this lemma readily extends to the post-quantum setting, and is used as a building block to obtain our parallel repetition theorem as well.

Roughly speaking, given a prover whose success probability exceeds the desired bound $B(k, t, \eps)$, the lemma identifies a subset of $k'$ executions and threshold $t'$ such that the prover's success probability on these subverifiers exceeds $B(k', t', \eps)$ and is minimal in the sense that no further execution can be removed while preserving the corresponding guarantee. 

As in \cite{HS11,Chu11}, we will use the lemma to reduce the proof to two cases: when $k' = 1$, we immediately obtain a successful single-copy prover, while when $k' > 1$, the resulting minimality property provides the structure needed for one of the main quantum reductions.
\begin{lemma}[Adapted from \cite{HS11,Chu11}]\label{lemma:corr_reduction}
    Let $(P, V)$ be a $m$-round interactive protocol where $m = \poly(\secp)$. There exists a transformation $\calT$ such that for all $t \leq k \in \bbN$, every $\nu, \eps \in (0, 1)$, and every adversary $\calB$ against $\widehat{V}^{k, t}$,
    \begin{enumerate}
        \item $\calT(1^{\secp}, t, k, \nu, \eps)$ runs in time $\poly(\secp, k, \nu^{-1})$ given oracle access to $\calB$ and outputs $t' \in [t], k' \in [k]$ and an adversary $\calA$ which runs in time $\poly(\secp, k)$ given oracle access to $\calB$. 
        \item If 
            \[ \Pr[\langle \calB, \widehat{V}^{k, k} \rangle (1^{\secp}) = 1] \geq B(k, t, \eps)+\nu, \]
        then with probability at least $1-2^{-\secp}$, the output $(t', k', \calA)$ of $\calT$ satisfies the following:
            \[ \Pr[\langle \calA, \widehat{V}^{k', t'}\rangle (1^{\secp}) = 1] \geq B(k', t', \eps) + \frac{10k'-1}{10k} \cdot \nu, \]
        and either $k' = 1$, or with probability at least $1-\nu/10k$ over $r^{*}$, 
            \[ \Pr_{(\tau, r) \gets (\calA, \widehat{V}^{k', t'})}\left[\sum_{j \neq 1} V_j(\tau_j, r_j) \geq t' \Bigg| r_{\Query, 1} = r^{*}\right] \leq B(k'-1, t', \eps)+\frac{10(k'-1)+1}{10k} \cdot \nu, \]
        and
            \[ \Pr_{(\tau, r) \gets (\calA, \widehat{V}^{k', t'})}\left[\sum_{j \neq 1} V_j(\tau_j, r_j) \geq t'-1 \Bigg| r_{\Query, 1} = r^{*}\right] \leq B(k'-1, t'-1, \eps)+\frac{10(k'-1)+1}{10k} \cdot \nu. \]
    \end{enumerate}
\end{lemma}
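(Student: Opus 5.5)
The plan is to run the correlation-reduction procedure of \cite{HS11,Chu11} as an iterative pruning loop. Start with $(k_0,t_0,\calA_0)=(k,t,\calB)$. Note that the hypothesis should be read with $\widehat{V}^{k,t}$, and that $B(k,t,\eps)+\nu$ is the invariant at level $\nu$. Maintain the invariant
\[ \Pr[\langle \calA_\ell, \widehat{V}^{k_\ell,t_\ell}\rangle=1]\ \ge\ B(k_\ell,t_\ell,\eps)+\tfrac{10k_\ell-1}{10k}\nu, \]
which holds at $\ell=0$ since $\tfrac{10k-1}{10k}\nu\le\nu$. Removing a coordinate is done by simulation. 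The new adversary $\calA_{\ell+1}$ samples the first coordinate's verifier internally, including the HE keys and encryption randomness and a fixed $r^*$. It then runs $\calA_\ell$ and embeds the external $\widehat V^{k_\ell-1}$ in the remaining coordinates. This is a purely classical wrapping of a black-box (possibly quantum) prover, so no rewinding is needed. It is only used when its success can be certified.

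At each step with $k_\ell>1$, $\calT$ estimates two quantities by sampling $r^*$ and running the interaction $\poly(\secp,k,\nu^{-1})$ times. The first is, for random $r^*$, the conditional probabilities $p_{\ge t}(r^*)$ and $p_{\ge t-1}(r^*)$ that $\sum_{j\ne1}V_j\ge t_\ell$ (resp.\ $\ge t_\ell-1$). The second is the probability that $r^*$ is bad, i.e.\ that one of these exceeds $B(k_\ell-1,t_\ell,\eps)+\frac{10(k_\ell-1)+1}{10k}\nu$ (resp.\ the $t_\ell-1$ analogue). If the bad mass is below $\nu/10k$, with a margin absorbed by Hoeffding (Proposition~\ref{proposition:hoeffding}), output $(t_\ell,k_\ell,\calA_\ell)$.

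Otherwise there is noticeable mass of bad $r^*$. Sample an $r^*$ whose estimated conditional probability is above the threshold minus $\nu/(20k)$; the tolerance is accounted for in the constants. Fixing it gives a prover $\calA_{\ell+1}$ against $\widehat V^{k_\ell-1,t'}$, with $t'=t_\ell$ or $t_\ell-1$ according to which condition failed. Its success is at least $B(k_\ell-1,t',\eps)+\frac{10(k_\ell-1)-1}{10k}\nu$, so the invariant is preserved. If $t'=0$ the bound is trivial; such degenerate thresholds can be handled by outputting immediately. Since $k_\ell$ strictly decreases, the loop ends after at most $k$ steps, either at $k'=1$ or with the minimality property.

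One step needs more care. The claim ``$k'=1$ gives success $\ge B(1,t',\eps)+\dots$'' must follow from the invariant alone, and it does. Also, because the simulated coordinate includes its HE key, the conditioned adversary is a legitimate efficient prover against the encrypted protocol. Each estimate fails with probability $2^{-\secp}/k$ by Hoeffding, and a union bound gives overall confidence $1-2^{-\secp}$.

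The main obstacle is the bookkeeping of slack. Each step loses only $\nu/k$ in the invariant, but the bad-mass estimate, the choice of $r^*$, and the thresholds all consume parts of that slack. Choosing the constants $1/10$ and $1/20$ correctly, so that a failed minimality test yields an $r^*$ that provably lifts the invariant, is the delicate part. The quantum aspect adds nothing new, since all operations are black-box executions with fresh copies of the advice state.
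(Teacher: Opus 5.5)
Your proposal is correct and follows essentially the same route as the paper. You repeatedly examine the first remaining coordinate and Hoeffding-estimate the conditional probabilities that at least $t'$ (resp.\ $t'-1$) of the other subverifiers accept given $r_{\Query,1}=r^{*}$. Whenever a heavy $r^{*}$ is found, you prune that coordinate, fixing $r^{*}$ and possibly decrementing $t'$; otherwise you stop with the minimality property.

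The only real difference is that the paper fuses your two estimation stages into a single witness search. It samples $M_1$ candidates, prunes on the first whose estimate exceeds $B(|S|-1,\cdot,\eps)+(|S|-1)\nu/k$, and, if none is found, deduces via a hitting argument that the bad mass is below $\nu/10k$. This avoids estimating the mass of a threshold-defined set directly. Also, your union bound must range over all $\poly(\secp,k,\nu^{-1})$ estimates, not just $k$ of them.
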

The proof of Lemma \ref{lemma:corr_reduction} was originally stated for $\PPT$ adversaries $\calB$ in the three-message setting but can be easily adapted to generic protocols and $\QPT$ adversaries as well. For the sake of completeness, we include a proof of Lemma \ref{lemma:corr_reduction} in the appendix.

\subsection{Singular Value Decomposition}
We start by recalling Jordan's lemma \cite{Jor75}, which characterizes the joint action of two projectors $\Pi_A$ and $\Pi_B$ on a Hilbert space $\calH$.

\begin{lemma}[\cite{Jor75}]
    For any two Hermitian projectors $\Pi_A$ and $\Pi_B$ on a Hilbert space $\calH$, there exists an orthogonal decomposition of $\calH = \bigoplus_j \calS_j$ into one-dimensional and two-dimensional subspaces $\{\calS_j\}_j$, where each $\calS_j$ is invariant under both $\Pi_A$ and $\Pi_B$. Moreover:
    \begin{itemize}
        \item in each one-dimensional space, $\Pi_A$ and $\Pi_B$ act as identity or rank-zero projectors; and
        \item in each two-dimensional subspace $\calS_j$, $\Pi_A$ and $\Pi_B$ are rank-one projectors. In particular, there exist distinct orthogonal bases $\{\ket*{v_j}, \ket*{v_j^{\perp}}\}$ and $\{\ket*{w_j}, \ket*{w_j^{\perp}}\}$ for $\calS_j$ such that $\Pi_A$ projects onto $\ket*{v_j}$ and $\Pi_B$ projects onto $\ket*{w_j}$. We call $\{\ket{v_j}\}_j$ and $\{\ket{w_j}\}_j$ the singular vectors of $\Pi_A$ and $\Pi_B$, respectively, and $\{\varsigma_j\}_j = \{\braket{w_j}{v_j}\}_j$ the associated singular values.
    \end{itemize}
    Note that we can view each one-dimensional subspace $\calS_j$ as a degenerate two-dimensional subspace. If $\Pi_A|_{\calS_j} = \id$ then we label the (unit) vector spanning the subspace $\ket{v_j}$ and the zero vector by $\ket*{v_j^{\perp}}$; otherwise, when $\Pi_A|_{\calS_j} = \zero$ we label the spanning vector $\ket*{v_j^{\perp}}$ and the zero vector $\ket{v_j}$. We do the same for $\Pi_B|_{\calS_j}$ and the vectors $\ket{w_j}$ and $\ket*{w_j^{\perp}}$.
\end{lemma}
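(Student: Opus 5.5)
The plan is the standard proof of Jordan's lemma, obtained by diagonalizing the Hermitian operator $\Pi_A\Pi_B\Pi_A$ restricted to $A=\im\Pi_A$ and pairing each eigenvector with its image under $\Pi_B$. We work in finite dimension, which is the only case the paper needs. The main steps are the following.

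\textbf{Step 1: eigenvectors in $A$.} The operator $M=\Pi_A\Pi_B\Pi_A$ is positive semidefinite and maps $A$ into $A$. Choose an orthonormal eigenbasis $\{\ket{v_j}\}$ of $M|_A$ with eigenvalues $\varsigma_j^2\in[0,1]$.

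\textbf{Step 2: pair each $\ket{v_j}$ with its $\Pi_B$-image.} For each $j$ with $\varsigma_j>0$, set $\ket{w_j}=\Pi_B\ket{v_j}/\varsigma_j$. This is a unit vector, because $\|\Pi_B\ket{v_j}\|^2=\bra{v_j}M\ket{v_j}=\varsigma_j^2$. Moreover $\braket{w_j}{v_j}=\varsigma_j$.

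\textbf{Step 3: the spans $\calS_j$ are invariant.} Let $\calS_j=\mathrm{span}(\ket{v_j},\ket{w_j})$.
\begin{itemize}
\item $\Pi_B$ preserves $\calS_j$: it sends $\ket{v_j}$ to $\varsigma_j\ket{w_j}$ and fixes $\ket{w_j}$.
\item $\Pi_A$ preserves $\calS_j$: it fixes $\ket{v_j}$, and $\Pi_A\ket{w_j}=M\ket{v_j}/\varsigma_j=\varsigma_j\ket{v_j}$.
\item The spaces are mutually orthogonal. We already have $\braket{v_j}{v_l}=0$. Next, $\braket{v_l}{w_j}=\bra{v_l}\Pi_A\Pi_B\ket{v_j}/\varsigma_j=\varsigma_j^{-1}\bra{v_l}M\ket{v_j}=0$. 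Finally, $\braket{w_l}{w_j}=\bra{v_l}M\ket{v_j}/(\varsigma_l\varsigma_j)=0$.
\item Each $\calS_j$ has dimension $1$ if $\varsigma_j=1$ (then $\ket{w_j}=\ket{v_j}$) and dimension $2$ otherwise.
\item Inside a $2$-dimensional $\calS_j$, $\Pi_A$ restricts to the rank-one projector onto $\ket{v_j}$, since the image of $\Pi_A|_{\calS_j}$ is $\mathrm{span}(\ket{v_j})$. Likewise $\Pi_B$ restricts to the rank-one projector onto $\ket{w_j}$.
\end{itemize}
The complementary vectors $\ket{v_j^\perp}$ and $\ket{w_j^\perp}$ are then the orthogonal complements of $\ket{v_j}$ and $\ket{w_j}$ within $\calS_j$.

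\textbf{Step 4: vectors with $\varsigma_j=0$.} These satisfy $\Pi_B\ket{v_j}=0$. Each spans a $1$-dimensional invariant space on which $\Pi_A$ acts as the identity and $\Pi_B$ as zero.

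\textbf{Step 5: the remaining complement.} Let $W=\big(\bigoplus_j\calS_j\big)^\perp$. Since every $\calS_j$ is invariant under both projectors, and both are Hermitian, $W$ is also invariant under both. Every vector of $A$ lies in some $\calS_j$, so $W\cap A=0$, and by invariance $\Pi_A|_W=0$. Hence $\Pi_A$ and $\Pi_B$ commute on $W$. Decompose $W$ into $\Pi_B|_W$-eigenvectors; each spans a $1$-dimensional space on which $\Pi_A$ acts as zero and $\Pi_B$ acts as identity or zero.

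\textbf{Main obstacle.} The delicate point is the bookkeeping around degenerate eigenvalues and the complement $W$. The orthogonality computations must use eigenvectors rather than merely eigenspaces, and the invariance of $W$ must be checked for both projectors. The degenerate labeling conventions in the statement (zero vectors for $\ket{v_j^\perp}$ and similar) then follow directly from the cases in Steps 4 and 5.
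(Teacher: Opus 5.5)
Your proof is correct. The paper gives no proof of this statement; it only cites Jordan's lemma \cite{Jor75}, so there is no competing route to compare with. What you give is the standard construction through the spectral decomposition of $\Pi_A\Pi_B\Pi_A$ on $A$.

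Your construction also gives something the bare statement does not. Setting $\ket{w_j}=\Pi_B\ket{v_j}/\varsigma_j$ fixes the phases so that $\braket{w_j}{v_j}=\varsigma_j\geq 0$. It then yields $\Pi_B\Pi_A=\sum_j\varsigma_j\ketbra{w_j}{v_j}$ immediately. This real, nonnegative singular value decomposition is exactly what the paper uses right after the lemma and in Theorem~\ref{thm:qsvt}. With arbitrary choices of $\ket{w_j}$, the statement as written only guarantees $|\braket{w_j}{v_j}|=\sqrt{p_j}$.

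Two small remarks.

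First, restricting to finite dimension is not just convenient but necessary, because the statement fails for general Hilbert spaces. Take $L^2[0,1]\otimes\mathbb{C}^2$ with $\Pi_A=I\otimes\ketbra{0}$. Let $\Pi_B$ project pointwise onto $\cos\theta(x)\ket{0}+\sin\theta(x)\ket{1}$, with $\theta$ strictly increasing in $(0,\pi/2)$. Then $\Pi_A\Pi_B\Pi_A$ acts on $A$ as multiplication by $\cos^2\theta(x)$, which has no eigenvectors. But any decomposition into jointly invariant subspaces of dimension at most two would produce one: intersect $A$ with a nonzero piece, which is a finite-dimensional subspace invariant under $\Pi_A\Pi_B\Pi_A$.

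Second, in Step 5 the sentence ``every vector of $A$ lies in some $\calS_j$'' is not right as stated. It should read $A=\mathrm{span}\{\ket{v_j}\}_j\subseteq\bigoplus_j\calS_j$. This gives $W\perp A$, and hence $\Pi_A|_W=0$ directly, without needing the invariance argument.
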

In other words, we can write $\Pi_A = \sum_j \ketbra{v_j}{v_j}$ and $\Pi_B = \sum_j \ketbra{w_j}{w_j}$. This allows us to write $\Pi_A \Pi_B \Pi_A = \sum_j p_j \ketbra{v_j}$ and $\Pi_B \Pi_A \Pi_B = \sum_j p_j \ketbra{w_j}$, where $p_j = |\braket{w_j}{v_j}|^2$. We define $\Pi_{ABA, \gamma}$ and $\Pi_{BAB, \gamma}$ to be the projections onto $\mathsf{span}\{\ket{v_j}\}_{j: p_j \geq \gamma}$ and $\mathsf{span}\{\ket{w_j}\}_{j: p_j \geq \gamma}$ respectively.

We can also consider the corresponding singular value decompositions $\Pi_B \Pi_A = \sum_j \varsigma_j \ketbra{w_j}{v_j}$ and $\Pi_A \Pi_B = \sum_j \varsigma_j \ketbra{v_j}{w_j}$, where $\varsigma_j := \sqrt{p_j}$.

\subsubsection{Quantum Singular Value Transformation (QSVT)}
We will use the following algorithmic result, which essentially states that for any Hermitian projectors $\Pi_A$ and $\Pi_B$ and for every state in $\im(\Pi_A)$ (resp. $\im(\Pi_B)$) whose corresponding singular values are not too small, one can transform the state to one in $\im(\Pi_B)$ (resp. $\im(\Pi_A)$) while approximately preserving the relative weights between singular vectors.
\begin{definition}[$C_{\Pi}NOT$ gate]
    For a Hermitian projector $\Pi$ we define the $\Pi$-controlled NOT gate as the unitary operator $C_{\Pi}NOT := X \otimes \Pi + I \otimes (I-\Pi)$.
\end{definition}
\begin{theorem}[Modified from \cite{GSLW19}]\label{thm:qsvt}
    Let $\Pi_B, \Pi_A$ be Hermitian projectors on a Hilbert space $\calH$ and let $\eps, \delta \in (0, 1)$. Consider the singular value decomposition 
        \[ \Pi_B\Pi_A = \sum_j \varsigma_j \ketbra{w_j}{v_j}. \]
    Then there is a unitary $U^{\Pi_B, \Pi_A}_{\eps, \delta}$ such that 
        \[ \Pi_B U^{\Pi_B, \Pi_A}_{\eps, \delta} \Pi_A = \sum_j \tilde{\varsigma}_j \ketbra{w_j}{v_j}, \]
    where for all $j$, $\tilde{\varsigma}_j \in \bbR$ and 
        \[ \tilde{\varsigma}_j = \begin{cases} \in [1-\eps, 1] & \text{if } \varsigma_j \geq \delta, \\ \in [-1, 1] & \text{otherwise.} \end{cases} \]
    Moreover, $U^{\Pi_B, \Pi_A}_{\eps, \delta}$ can be implemented by a circuit $C^{\Pi_B, \Pi_A}_{\eps, \delta}$ with the following properties:
    \begin{enumerate}
        \item $C^{\Pi_B, \Pi_A}_{\eps, \delta}$ uses one ancilla qubit and consists of $m$ calls to $C_{\Pi_B}NOT$, $m$ calls to $C_{\Pi_A}NOT$, and $m$ single-qubit gates on the ancilla qubit, where $m = O\left(\frac{\log(1/\eps)}{\delta}\right)$.
        \item For all states $\ket{\psi}$, $C^{\Pi_B, \Pi_A}_{\eps, \delta} (\ket{\psi} \otimes \ket{0}) = \left(U^{\Pi_B, \Pi_A}_{\eps, \delta} \ket{\psi}\right) \otimes \ket{0}$. 
    \end{enumerate}
\end{theorem}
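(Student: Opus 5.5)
The statement to prove is Theorem~\ref{thm:qsvt}, which is a repackaging of the QSVT result of \cite{GSLW19} using the Jordan decomposition. My plan is to reduce everything to Jordan's lemma and then apply a single odd real polynomial to the singular values.

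\textbf{Step 1 (reduce to $2\times 2$ blocks).} By Jordan's lemma, $\calH=\bigoplus_j \calS_j$ with each $\calS_j$ invariant under both $\Pi_A$ and $\Pi_B$. On a two-dimensional block, both reflections $2\Pi_A-I$ and $2\Pi_B-I$ are $2\times 2$ reflections. The same holds for the controlled gates $C_{\Pi}NOT$ after conjugating the ancilla by a Hadamard: with the ancilla in the $\ket{\pm}$ basis, $C_{\Pi}NOT$ acts as $I\oplus(2\Pi-I)$ up to sign.

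\textbf{Step 2 (choose the alternating circuit).} Standard QSVT phase-modulation circuits interleave $e^{i\phi_\ell(2\Pi-I)}$, implemented as $C_{\Pi}NOT$, then a $Z$-rotation on the ancilla, then $C_{\Pi}NOT$, alternately for $\Pi_A$ and $\Pi_B$. On block $j$, such a circuit yields $\Pi_B U \Pi_A = P(\varsigma_j)\ketbra{w_j}{v_j}$, where $P$ is a degree-$m$ odd polynomial determined by the phases $\phi_1,\dots,\phi_m$. The QSP theorem of \cite{GSLW19} says that every real odd $P$ with $|P|\le 1$ on $[-1,1]$ is achievable, using $O(m)$ controlled reflections and single-qubit ancilla gates. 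The ancilla starts and ends in $\ket{0}$ because the construction uses the real-part trick, or equivalently because one takes the symmetric phase sequence, which makes the ancilla factor out. This gives property~2 and the gate count in property~1.

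Degenerate one-dimensional blocks are handled by the convention in Jordan's lemma. There $\varsigma_j\in\{0,1\}$, or the block is killed by $\Pi_A$ or by $\Pi_B$, so only $P(1)\in[1-\eps,1]$ matters, and that is ensured below.

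\textbf{Step 3 (choose the polynomial).} Take $P$ to be an odd polynomial approximation of the sign function. Concretely, $P$ should satisfy $|P(x)|\le 1$ on $[-1,1]$ and $P(x)\in[1-\eps,1]$ for $x\in[\delta,1]$, with degree $O(\log(1/\eps)/\delta)$. This exists by \cite[Lemma 25]{GSLW19}, which approximates $\mathrm{erf}(kx)$ by a truncated Chebyshev series and then rescales so that the values lie in $[-1,1]$ and are shifted into $[1-\eps,1]$. Because $P$ is real, the resulting $\tilde\varsigma_j=P(\varsigma_j)$ are real. They lie in $[-1,1]$ everywhere and in $[1-\eps,1]$ when $\varsigma_j\ge\delta$, which is exactly the claimed statement.

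\textbf{Main obstacle.} The delicate point is making the ancilla return \emph{exactly} to $\ket{0}$, so that the circuit implements a unitary on $\calH$ alone, while also keeping $\tilde\varsigma_j$ real. The plan is to use the real-polynomial version of QSVT: build $U$ from the phase sequence and its conjugate, as in \cite[Cor.~18]{GSLW19}. Then the off-diagonal ancilla terms vanish on each Jordan block by a symmetry argument, which I would verify block by block in the $2\times2$ representation.
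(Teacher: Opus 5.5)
\textbf{A genuine gap, at the step you flag as the main obstacle: your derivation conflates two different results of \cite{GSLW19}.} The paper gives no proof of this statement; it imports it from \cite{GSLW19}. So the question is only whether your derivation is sound.

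The plain alternating phase-modulation circuit already returns the ancilla to $\ket{0}$ on every input, with no symmetry argument needed, because
$C_{\Pi}NOT\,(e^{-i\phi Z}\otimes I)\,C_{\Pi}NOT=|0\rangle\langle 0|\otimes e^{i\phi(2\Pi-I)}+|1\rangle\langle 1|\otimes e^{-i\phi(2\Pi-I)}$.
However, the polynomials $P$ with $\langle w_j|U|v_j\rangle=P(\varsigma_j)$ that this circuit can realize are not all bounded odd polynomials. They must also satisfy $|P(x)|\geq 1$ for $|x|\geq 1$, and in particular $|P(1)|=1$. You can check the last condition directly:
\begin{itemize}
\item On a Jordan block where $\Pi_A=\Pi_B=I$, every $C_{\Pi}NOT$ acts as $X$ on the ancilla, so the whole circuit acts as $W\otimes I$.
\item Property 2 then forces $U$ to be a pure phase on that block.
\item Hence a real $\tilde\varsigma_j\in[1-\eps,1]$ there must equal exactly $1$.
\end{itemize}
So the rescaled erf approximation of Step 3, which has $P(1)<1$, is not realizable in the form the theorem demands.

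The result that every real parity polynomial bounded by $1$ is achievable is the real-part construction of Cor.~18. There the circuit is $(H\otimes I)(|0\rangle\langle 0|\otimes U_\Phi+|1\rangle\langle 1|\otimes U_{-\Phi})(H\otimes I)$, and $P$ appears only after post-selecting the ancilla. Its off-diagonal block is $\tfrac12(U_\Phi-U_{-\Phi})$. Sandwiched between $\Pi_B$ and $\Pi_A$, this block equals $\sum_j i\operatorname{Im}P_c(\varsigma_j)\,|w_j\rangle\langle v_j|$, where $P_c$ is the admissible complex polynomial with $\operatorname{Re}P_c=P$. It vanishes only if $P_c$ is already real. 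Therefore no block-by-block symmetry argument can make the ancilla factor out, $\tfrac12(U_\Phi+U_{-\Phi})$ is not even unitary, and property 2 fails for your construction.

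\textbf{Repair.} Stay with the plain circuit, which gives property 2 and the gate count for free. Choose $P$ inside its admissible class: odd, $|P|\le 1$ on $[-1,1]$, and $|P|\ge 1$ outside.
\begin{itemize}
\item If you want real $\tilde\varsigma_j$, this forces $P(1)=1$. Any odd polynomial that is monotone on $\bbR$ with $P(1)=1$ qualifies. For example, take $P(x)=\int_0^x g/\int_0^1 g$ with $g=h^2$, where $h$ is a polynomial approximation of a Gaussian of width about $\delta/\sqrt{\log(1/\eps)}$. This gives degree $O(\log(1/\eps)/\delta)$ when, say, $\delta\ge\eps$, which holds for the paper's parameters.
\item Alternatively, drop realness and use the admissible $P_c$ above. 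It satisfies $|P_c|\ge\operatorname{Re}P_c\ge 1-\eps$ on $[\delta,1]$. That is all Corollary~\ref{cor:qsvt_threshold} and the later hybrids use, since they only need $|\tilde\varsigma_j|$.
\end{itemize}
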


\begin{corollary}\label{cor:qsvt_threshold}
    Let $\Pi_B, \Pi_A$ be Hermitian projectors on a Hilbert space $\calH$ and let $\eps, \delta \in (0, 1)$. If $\ket{\psi} = \sum_j \alpha_j \ket{v_j} \in \im(\Pi_A)$ is a state such that $\sum_{j: \varsigma_j < \delta} |\alpha_j|^2 \leq p$, then 
        \[ \left\|(I-\Pi_B)U^{\Pi_B, \Pi_A}_{\eps, \delta} \ket{\psi}\right\|_2^2 \leq p+2\eps, \]
    where $U^{\Pi_B, \Pi}_{\eps, \delta}$ is the unitary defined by Theorem \ref{thm:qsvt}.
\end{corollary}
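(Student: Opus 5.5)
We will work directly in the Jordan decomposition of $\Pi_A,\Pi_B$ and use Theorem~\ref{thm:qsvt}. Write $\ket{\psi}=\sum_j \alpha_j\ket{v_j}$ and split it as $\ket{\psi}=\ket{\psi_{\ge}}+\ket{\psi_{<}}$, where $\ket{\psi_{\ge}}=\sum_{j:\varsigma_j\ge\delta}\alpha_j\ket{v_j}$ and $\ket{\psi_<}$ is the rest. By hypothesis $\|\ket{\psi_<}\|_2^2\le p$, and $\|\ket{\psi_{\ge}}\|_2^2\le 1$.

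First we show that $U:=U^{\Pi_B,\Pi_A}_{\eps,\delta}$ maps $\ket{\psi_{\ge}}$ almost entirely into $\im(\Pi_B)$. Since $\ket{\psi_\ge}\in\im(\Pi_A)$, we have
\[ \Pi_B U\ket{\psi_\ge}=\Pi_B U\Pi_A\ket{\psi_\ge}=\sum_{j:\varsigma_j\ge\delta}\tilde\varsigma_j\alpha_j\ket{w_j}, \]
using orthonormality of the $\ket{w_j}$ with $\varsigma_j\geq \delta>0$, so these are genuine nonzero vectors in distinct Jordan blocks. Hence
\[ \|\Pi_BU\ket{\psi_\ge}\|^2=\sum_{j:\varsigma_j\ge\delta}\tilde\varsigma_j^2|\alpha_j|^2\ge(1-\eps)^2\|\psi_\ge\|^2. \]
Because $U$ is unitary, $\|(I-\Pi_B)U\ket{\psi_\ge}\|^2=\|\psi_\ge\|^2-\|\Pi_BU\psi_\ge\|^2\le(1-(1-\eps)^2)\|\psi_\ge\|^2\le 2\eps\|\psi_\ge\|^2$.

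Next we combine the two pieces. We have $(I-\Pi_B)U\ket\psi=(I-\Pi_B)U\ket{\psi_\ge}+(I-\Pi_B)U\ket{\psi_<}$. The naive triangle inequality gives $(\sqrt{2\eps}+\sqrt p)^2$, which exceeds the claimed bound $p+2\eps$ by a cross term. To avoid it, we will show the two pieces are orthogonal after applying $(I-\Pi_B)U$.

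This orthogonality is the main obstacle. The approach is to use the structure of the QSVT unitary from \cite{GSLW19}: $U$ is built from alternating reflections about $\Pi_A$ and $\Pi_B$, so it leaves each Jordan subspace $\calS_j$ invariant. The same holds for $I-\Pi_B$. Since $\ket{\psi_\ge}$ and $\ket{\psi_<}$ are supported on disjoint sets of Jordan blocks, their images under $(I-\Pi_B)U$ also lie in disjoint, mutually orthogonal blocks. We will take this block-invariance as part of the content of Theorem~\ref{thm:qsvt}, since the stated SVD form of $\Pi_BU\Pi_A$ reflects it.

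With orthogonality in hand, the squared norms add:
\[ \|(I-\Pi_B)U\ket\psi\|^2\le 2\eps\|\psi_\ge\|^2+\|\psi_<\|^2\le 2\eps+p. \]
If block-invariance could not be invoked, the fallback would be to accept the weaker bound $(\sqrt p+\sqrt{2\eps})^2$.
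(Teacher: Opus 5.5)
Your proof is correct, but it combines the two pieces differently from the paper, and the fact you rely on needs a different justification than the one you give.

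\textbf{What the paper does.} The paper never splits $\ket{\psi}$. It applies the identity of Theorem~\ref{thm:qsvt} to the whole state to get $\Pi_B U\ket{\psi}=\sum_j\alpha_j\tilde\varsigma_j\ket{w_j}$. By orthonormality of the $\ket{w_j}$, its squared norm is $\sum_j|\alpha_j|^2\tilde\varsigma_j^2\ge(1-p)(1-\eps)^2$. It then uses unitarity once: $\|(I-\Pi_B)U\ket{\psi}\|^2=1-\|\Pi_B U\ket{\psi}\|^2\le 1-(1-p)(1-\eps)^2\le p+2\eps$. The cross term you call the main obstacle arises only because you apply the Pythagorean identity to each piece separately. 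On the whole state, the only orthogonality needed is among the $\Pi_B$-components, which the singular-value form already supplies. Nothing has to be known about where $(I-\Pi_B)U$ sends the low-singular-value part.

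\textbf{The justification for block invariance.} Your route does need block invariance of $U$, and "the SVD form reflects it" is not a valid reason. The displayed form of $\Pi_B U\Pi_A$ constrains only the $\Pi_B$-compression. A unitary satisfying only that identity could move the leftover $\im(I-\Pi_B)$ mass of one $\ket{v_j}$ into another Jordan block. For example, when $\tilde\varsigma_1=\tilde\varsigma_2=0$ it could have $U\ket{v_1}=\ket{w_2^{\perp}}$ and $U\ket{v_2}=\ket{w_1^{\perp}}$. The fact is nonetheless true, and it follows from items 1--2 of Theorem~\ref{thm:qsvt}. Every $C_{\Pi_A}NOT$, every $C_{\Pi_B}NOT$ and every ancilla gate maps $\calS_j\otimes\mathbb{C}^2$ (system $\otimes$ ancilla) into itself. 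Since the ancilla returns to $\ket{0}$ for every input, $U\calS_j\subseteq\calS_j$. With that justification your argument goes through and gives $2\eps\|\psi_{\ge}\|^2+\|\psi_<\|^2\le p+2\eps$.

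\textbf{Comparison.} The paper's version is shorter and uses only the singular-value identity, so it holds for any unitary satisfying it. Yours additionally depends on the circuit implementation. Your fallback to $(\sqrt{p}+\sqrt{2\eps})^2$ was never necessary.
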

\begin{proof}
    Theorem \ref{thm:qsvt} implies that 
        \[ \Pi_B U^{\Pi_B, \Pi_A}_{\eps, \delta} \Pi_A \ket{\psi}=\sum_j \alpha_j\tilde{\varsigma}_j\ket{w_j}, \]
    which together with the fact that $\ket{\psi} \in \im(\Pi_A)$, implies that 
        \[ \Pi_B U^{\Pi_B, \Pi_A}_{\eps, \delta} \ket{\psi} = \sum_j \alpha_j \tilde{\varsigma}_j \ket{w_j}. \]
    Thus, we can decompose the state $\ket{\phi} := U^{\Pi_B, \Pi_A}_{\eps, \delta} \ket{\psi}$ as 
        \[ \ket{\phi} = \sum_j \gamma_j \ket*{w_j^{\perp}} + \sum_j \alpha_j \tilde{\varsigma}_j \ket{w_j}. \]
    Since $\ket{\phi}$ is a normalized state (as $U^{\Pi_B, \Pi_A}_{\eps, \delta}$ is unitary), we know that
        \[ \sum_j |\gamma_j|^2 + \sum_j |\alpha_j|^2 \tilde{\varsigma}_j^2 = 1. \]
    Thus,
    \begin{align*}
        \sum_j |\gamma_j|^2 &= 1 - \sum_j |\alpha_j|^2 \cdot \tilde{\varsigma}_j^2 \leq 1 - \sum_{j: \varsigma_j \geq \delta} |\alpha_j|^2 \cdot \tilde{\varsigma}_j^2 \leq 1 - (1-p)(1-\eps)^2 \leq p+2\eps.
    \end{align*}
\end{proof}

\section{On Oracle Distinguishers}\label{sec:oracle_distinguishers}
In this section we prove the lemma which will be the key workhorse in the analyses of the next section. We begin by introducing some definitions and notation.
\begin{definition}[\cite{LMS22}]
    Let $\ket{D_{b, \secp}} \propto \sum_{r_{\secp}} \ket{r_{\secp}}_{\sfR} \ket{D_{b, \secp}(r_{\secp})}_{\sfE}$ be states where $\{D_{0, \secp}\}_{\secp \in \bbN}$ and $\{D_{1, \secp}\}_{\secp \in \bbN}$ are efficiently sampleable families of distributions over classical strings with random coins $\{r_{\secp}\}_{\secp \in \bbN}$.\footnote{Note that efficient sampleability of $D_{b, \secp}$ implies the ability to efficiently prepare the associated quantum state $\ket{D_{b, \secp}}$.}

    We define the family of unitaries $U_b := \{U_{b, \secp}\}_{\secp \in \bbN}, U_b^{\dagger} := \{U_{b, \secp}^{\dagger}\}_{\secp \in \bbN}$ acting on $\sfO \otimes \sfR \otimes \sfE$ (where $\sfO$ is single-qubit register) as
        \begin{equation}\label{eqn:Ub} U_{b, \secp} := X_{\sfO} \otimes \ketbra{D_{b, \secp}}_{\sfR\sfE} + I_{\sfO} \otimes (I-\ketbra{D_{b, \secp}})_{\sfR\sfE}. \end{equation}
\end{definition}

Informally, the following lemma states that even with given oracle access to $U_b$, a distinguisher cannot tell if it has a copy of $\ket{D_0}$ or $\ket{D_1}$ (assuming the two distributions are indistinguishable). Intuitively, this comes from the fact that the only way for the distinguisher to find something meaningfully changed by $U_b$ is by applying it to $\ket{D_b}$ itself. However, since the register $\sfR$ is otherwise inaccessible, the distinguisher must use its single copy of $\ket{D_b}$, which looks the same with $U_b$ that $\ket{D_{1-b}}$ looks with $U_{1-b}$.

\begin{lemma}[\cite{LMS22}]\label{lemma:oracle_distinguisher}
    Suppose there exists a $\poly(\secp)$-sized quantum distinguisher $S^{U_b}$ (which on input $1^{\secp}$ only queries $U_{b, \secp}$) without direct access to $\sfR$\footnote{More explicitly, the distinguisher $S$ consists of adversarial steps that acts as identity on $\sfR$, interleaved with oracle calls which are applications of $U_b$ on $\sfR\sfE$.} achieving
        \[ \left|\Pr[S^{U_0}(1^{\secp}, \ket{D_{0, \secp}}_{\sfR\sfE}) = 1]-\Pr[S^{U_1}(1^{\secp}, \ket{D_{1, \secp}}_{\sfR\sfE}) = 1]\right| \geq \frac{1}{\poly(\secp)}. \]
    Then there exists a $\poly(\secp)$-sized quantum algorithm $T$ that on input $1^{\secp}$ distinguishes classical samples from the distributions $D_{0, \secp}$ and $D_{1, \secp}$.
\end{lemma}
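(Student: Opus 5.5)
The plan is a hybrid argument over the oracle calls, in the style of \cite{LMS22}. Write the distinguisher as $S^{U_b} = A_q U_b A_{q-1} U_b \cdots U_b A_0$, where each $A_\ell$ acts trivially on $\sfR$ and $q = \poly(\secp)$. The structural fact to exploit is that $\ket{D_b}_{\sfR\sfE}$ is a single specific state, and $\sfR$ is never touched except through $U_b$.

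\textbf{Step 1: normal form.} I would show that every intermediate state of $S^{U_b}$ on input $\ket{D_b}$ can be written as
\[ \ket{\Phi_\ell} = \ket{D_b}_{\sfR\sfE}\otimes\ket{\alpha_\ell} + \ket{\beta_\ell}, \]
where $\ket{\beta_\ell}$ is produced by applying $\sfR$-free operations to the state $(I-\ketbra{D_b})_{\sfR\sfE}$ applied to something. Concretely, one tracks the component "still proportional to $\ket{D_b}$" separately from the "perturbed" component. The key point is that $U_b$ acts on $\ket{D_b}\otimes\ket{\alpha}$ as $X_{\sfO}$ on $\ket{\alpha}$. On the orthogonal part, $U_b$ acts as the identity, unless the adversary's $\sfE$-operations rotate that part back onto $\ket{D_b}$.

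\textbf{Step 2: simulator $T$.} I would construct $T$, given a classical sample $x$ from $D_b$ (or many samples, as in a multi-sample distinguisher, which is equivalent up to polynomial factors by a hybrid), as follows. It prepares $\ket{D_b}$ coherently by placing $x$ in $\sfE$ and a superposition over preimages in $\sfR$. That is not efficient in general, so I would instead follow \cite{LMS22}'s idea: replace oracle calls by a "guess-the-call" hybrid. Pick a random $\ell^*\in[q]$. Run $S$ with the oracle $U_b$ replaced by the identity (equivalently, by $X_{\sfO}$ conditioned on the projection onto the original copy) for calls before $\ell^*$, and measure whether a $D_b$-dependent event occurs. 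The projection onto $\ket{D_b}$, restricted to states whose $\sfR$ register has been untouched, can be tested by checking that $\sfR\sfE$ is unchanged. Since $\sfR$ is untouched, $\sfE$ alone must be consistent, and $\sfE$ can be compared to the sample. More cleanly, $\ketbra{D_b}$ acting on $\ket{D_b}\otimes$ (anything on $\sfE'$) reduces to projecting the adversary's $\sfE$-manipulated register onto its original value. This is implementable given the sample $x$ in place of the ciphertext, after decohering $\sfR$, because $\sfR$ is traced out from the adversary's view.

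\textbf{Step 3: conclude.} I would show that the advantage of $S$ equals the advantage of $T$ up to a factor of $q$. This follows by telescoping across the $q$ oracle calls, with each hybrid's difference bounded by the distinguishing advantage between (measured) samples of $D_0$ and $D_1$. So $T$ has advantage at least $1/(q\cdot\poly(\secp))$, which is non-negligible.

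\textbf{Main obstacle.} The hardest step is Step 2: justifying that $U_b$ can be simulated without $\sfR$. The danger is that an adversary mapping $\sfE$ to some $\sfE'$ and then back could create components that overlap $\ket{D_b}$ nontrivially, and $T$ cannot evaluate that overlap without the randomness. I would handle this by proving that the reduced state on $\sfE$ of $\ket{D_b}$ is the mixture $\sum_x D_b(x)\ketbra{x}$ with $\sfR$ purifying it. Any overlap with $\ket{D_b}$ then equals a "return-to-the-original-sample" check, weighted by preimage counts. I expect this weighting is where one needs the exact LMS22 argument rather than a naive one.
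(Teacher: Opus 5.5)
The paper does not prove this lemma; it imports it from \cite{LMS22}. Your proposal does not yet give a proof. The step you flag as the main obstacle is exactly where it fails, and the replacement you suggest implements the wrong operator. Write $p_b(x):=\Pr[D_b=x]$ and $\ket{D_b}=\sum_x \sqrt{p_b(x)}\,\ket{R_x}_{\sfR}\ket{x}_{\sfE}$, where $\ket{R_x}$ is the uniform superposition over coins mapping to $x$. Then for any $W$ acting on $\sfE\sfO\sfI$,
\[ (\bra{D_b}_{\sfR\sfE}\otimes I)(I_{\sfR}\otimes W)(\ket{D_b}_{\sfR\sfE}\otimes I)=\sum_x p_b(x)\,\bra{x}_{\sfE}W\ket{x}_{\sfE}. \]
This is a \emph{coherent average of amplitudes} over the whole distribution. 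The oracle's projection therefore succeeds with probability $\|\sum_x p_b(x)\bra{x}W\ket{x}\ket{\alpha}\|^2$, which has cross terms between different $x$. Your check that $\sfE$ has returned to its original sample implements the high-rank projector $\sum_x \ket{R_x}\bra{R_x}\otimes\ket{x}\bra{x}$ instead, and succeeds with probability $\sum_x p_b(x)\|\bra{x}W\ket{x}\ket{\alpha}\|^2$. For a concrete counterexample, take a phase $\ket{x}\mapsto(-1)^{h(x)}\ket{x}$ with $h$ balanced under $D_b$: every sample is unchanged, yet the state becomes orthogonal to $\ket{D_b}$. So one classical sample cannot answer even a single oracle call. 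Decohering $\sfR$ is no fix either: on $\ket{r}\ket{D_b(r)}$ the oracle flips $\sfO$ only with exponentially small amplitude. Your hybrids are also ill-defined. Replacing calls by the identity is not equivalent to ``$X_{\sfO}$ conditioned on the original copy,'' and the identity version is trivially distinguishable, since the first call on $\ket{0}_{\sfO}\ket{D_b}$ flips $\sfO$ with certainty. You never exhibit consecutive hybrids that differ in a single classical sample, so the telescoping in Step 3 has no content. The preimage weighting you defer is not a technicality; it is the whole difficulty.

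What is missing is an efficient way to produce a matching (state, oracle) pair from classical samples. One way to close the gap is as follows.
\begin{itemize}
\item The acceptance probability $f(\rho)$ of $S$ depends only on the reduced state $\rho$ on $\sfE$. Any two purifications are related by an isometry on the purifying register, which commutes with $S$'s $\sfR$-free steps and intertwines the corresponding oracles.
\item Expanding $U_b=I+(X_{\sfO}-I)\otimes\ket{D_b}\bra{D_b}$ and using the identity above shows that $f$ is a polynomial of degree at most $2q+1$ in $\rho$, where $q$ is the number of oracle calls.
\item Given $n$ samples $x_1,\ldots,x_n$, $T$ can efficiently prepare $\frac{1}{\sqrt n}\sum_i\ket{i}\ket{x_i}$ and reflect about it.
\item The expected acceptance probability of this simulation is a polynomial of degree at most $2q$ in $1/n$. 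It lies in $[0,1]$ at every $1/n$ and equals $f(\rho_b)$ at $0$.
\item Zhandry's small-range polynomial bound then gives $O(q^3/n)$ error, and a standard hybrid reduces $n$ samples to one.
\end{itemize}
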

\begin{remark}
    It is straightforward to extend Lemma \ref{lemma:oracle_distinguisher} to the infinitely-often and non-uniform poly-sized regime, and the resulting algorithm $T$ is non-uniform and succeeds only infinitely often (on the same input lengths).  
\end{remark}

We will use Lemma \ref{lemma:oracle_distinguisher} to prove the main technical result of this section. Before we give the complete statement of this result, we give a brief description of what we want to prove. In this setting, we consider a $\QPT$ adversary $\calA$ that takes as input the state $\ket{D_{b,\secp}}_{\sfR\sfE}$. $\calA$ may repeatedly apply arbitrary unitaries $V_i$ on register $\sfE$ and its own workspace, along with calls to the oracle $U_b$ which coherently tests whether the joint state in registers $\sfR\sfE$ is still in the original state $\ket{D_{b,\secp}}$. Our goal is roughly to show that if, at the end, the adversary has successfully ``returned'' the state in registers $\sfR\sfE$ back to the state $\ket{D_{b,\secp}}$, then its internal state is \emph{statistically} (rather than merely computationally) indistinguishable in the cases where $b = 0$ and $b = 1$.

\begin{lemma}\label{lemma:comp_to_stat}
    Let $\{D_{0, \secp}\}_{\secp \in \bbN}$ and $\{D_{1, \secp}\}_{\secp \in \bbN}$ be efficiently sampleable but computationally indistinguishable (to $\poly(\secp)$-size quantum adversaries with quantum advice) families of distributions over classical strings with associated states $\ket{D_b} = \{\ket{D_{b, \secp}}\}_{\secp \in \bbN}$. 
    
    For every family of $\poly(\secp)$-sized oracle circuits $\{V^{\calO_{\secp}}_{\secp}\}_{\secp \in \bbN}$,\footnote{This means that there exist polynomials $t$ and $s$ such that $V^{\calO_{\secp}}_{\secp}$ can be written as $(I_{\sfR} \otimes V_{t, \secp, \sfO\sfE\sfI}) (I_{\sfI} \otimes \calO_{\secp, \sfO\sfR\sfE}) \cdots (I_{\sfI} \otimes \calO_{\secp, \sfO\sfR\sfE})(I_{\sfR} \otimes V_{0, \secp, \sfO\sfE\sfI})$  where $V_{i, \secp, \sfO\sfE\sfI}$ is a unitary of size at most $s$.} and family of $\poly(\secp)$-sized (normalized) pure states $\{\ket{\phi_{\secp}}_{\sfI}\}_{\secp \in \bbN}$, let
        \[ \ket*{\psi_{b, V_{\secp}}}_{\sfO\sfR\sfE\sfI} := V^{U_{b, \secp}}_{\secp} \ket{0}_{\sfO}\ket{D_{b, \secp}}_{\sfR\sfE}\ket{\phi_{\secp}}_{\sfI}, \]
    where $U_b$ is defined in Equation \ref{eqn:Ub}, and 
        \[ \rho_{b, \secp} := \Tr_{\sfR\sfE}(\ketbra*{D_{b, \secp}}_{\sfR\sfE} \cdot \ketbra*{\psi_{b, V_{\secp}}}_{\sfO\sfR\sfE\sfI} \cdot \ketbra*{D_{b, \secp}}_{\sfR\sfE}). \]
        Then 
    \begin{align*}
        &\|(I-\ketbra*{D_{0, \secp}})_{\sfR\sfE} \ket{\psi_{0, V_{\secp}}}\|^2 \cdot \ketbra{0} \otimes \ketbra{\bot} + \ketbra{1} \otimes \rho_{0, \secp} \\
        \stackrel{\negl}\equiv& \\
        &\|(I-\ketbra*{D_{1, \secp}})_{\sfR\sfE} \ket{\psi_{1, V_{\secp}}}\|^2 \cdot \ketbra{0} \otimes \ketbra{\bot} + \ketbra{1} \otimes \rho_{1, \secp}.
    \end{align*}
\end{lemma}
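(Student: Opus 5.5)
The plan is to reduce the claimed statistical closeness to the computational indistinguishability of $D_0$ and $D_1$, using Lemma~\ref{lemma:oracle_distinguisher}. The bridge is a non-uniform distinguisher that runs a SWAP test against quantum advice.

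\textbf{Step 1: the states $\rho_b$ are rank one.} Since $\ketbra*{D_{b,\secp}}_{\sfR\sfE}$ is a rank-one projector on $\sfR\sfE$, we can write
\[ (\ketbra*{D_{b,\secp}}_{\sfR\sfE}\otimes I)\ket*{\psi_{b,V_\secp}} = \ket*{D_{b,\secp}}_{\sfR\sfE}\otimes\ket{\chi_{b}}_{\sfO\sfI} \]
for some unnormalized pure state $\ket{\chi_b}$. Consequently
\[ \rho_{b,\secp}=\ketbra{\chi_b} \quad\text{and}\quad \|(I-\ketbra*{D_{b,\secp}})\ket*{\psi_{b,V_\secp}}\|^2 = 1-\|\chi_b\|^2 . \]
Thus each of the two classical-quantum states in the statement is determined entirely by the vector $\ket{\chi_b}$. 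Its trace distance to the other is at most
\[ \bigl|\|\chi_0\|^2-\|\chi_1\|^2\bigr| + \TD\bigl(\ketbra{\chi_0},\ketbra{\chi_1}\bigr), \]
where $\TD$ here denotes the trace norm of the difference of unnormalized operators. For rank-one operators this last term is controlled by the two norms together with the overlap $|\braket{\chi_0}{\chi_1}|$.

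\textbf{Step 2: two oracle distinguishers.} Both distinguishers are legal for Lemma~\ref{lemma:oracle_distinguisher}: they act on $\sfR$ only through oracle calls to $U_b$.
\begin{itemize}
\item $S_1^{U_b}$ runs $V^{U_b}_\secp$ on $\ket{0}\ket{D_b}\ket{\phi_\secp}$. It then applies one more call to $U_b$ with a fresh target qubit in $\ket{0}$, measures that qubit, and outputs the result. Its acceptance probability is exactly $\|\chi_b\|^2$, so Lemma~\ref{lemma:oracle_distinguisher} together with the computational indistinguishability of $D_0,D_1$ gives $\bigl|\|\chi_0\|^2-\|\chi_1\|^2\bigr|=\negl$.
\item $S_2^{U_b}$ does the same, but on outcome $1$ it also performs a SWAP test between $\sfO\sfI$ and a non-uniform advice state $\ket{\hat\chi_0}=\ket{\chi_0}/\|\chi_0\|$. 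It accepts iff both checks pass. The acceptance probability is $\tfrac12\bigl(\|\chi_b\|^2+|\braket{\hat\chi_0}{\chi_b}|^2\bigr)$, which equals $\|\chi_0\|^2$ when $b=0$. The same lemma, in the non-uniform form given by the remark after it, shows that these two probabilities differ by a negligible amount. Combined with the first bullet, this yields $|\braket{\hat\chi_0}{\chi_1}|^2\ge\|\chi_0\|^2-\negl$.
\end{itemize}
The advice $\ket{\hat\chi_0}$ has polynomial size because $\sfO\sfI$ does, and it is the same advice for $b=0$ and $b=1$, so this is a single fixed distinguisher as the lemma requires.

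\textbf{Step 3: from overlap to trace distance.} Norms that are negligibly close, together with normalized overlap negligibly close to $1$, force $\bigl\|\ketbra{\chi_0}-\ketbra{\chi_1}\bigr\|_1 \le O\bigl(\sqrt{\negl}\bigr)$, via the standard pure-state bound $\TD=\sqrt{1-F}$ after rescaling. This is still negligible. The formal version of the argument is by contradiction: if the trace distance were at least $1/\poly$ infinitely often, then on those input lengths either $S_1$ or $S_2$ would have $1/\poly$ advantage. That yields an infinitely-often non-uniform distinguisher for $D_0,D_1$, which is exactly what the hypothesis (security against quantum advice) rules out.

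I expect the main obstacle to be Step 3 together with the bookkeeping in Step 2. One must convert the two separate guarantees on norm and overlap into a trace-distance bound with only a polynomial loss, which is where the square root appears. One must also check that the post-measurement check and the SWAP test fit the syntax of Lemma~\ref{lemma:oracle_distinguisher}: a poly-size circuit with quantum advice, no direct access to $\sfR$, and one extra oracle call on a fresh output qubit.
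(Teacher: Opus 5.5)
Your argument is correct and has the same skeleton as the paper's proof. Both use the split into a norm term plus a residual-state term. Both use the same first distinguisher: one extra call to $U_b$ on a fresh qubit, which accepts with probability $\|\chi_b\|^2$. Both handle the residual term with a SWAP test, and both feed everything into Lemma~\ref{lemma:oracle_distinguisher}.

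The one real difference is where the reference state for the SWAP test comes from.
\begin{itemize}
\item You hardwire the normalized $b=0$ residual $\ket{\hat\chi_0}$ as quantum advice.
\item The paper regenerates it inside the distinguisher. It runs a second copy of the whole experiment using $\ket{D_{0,\secp}}$, its own implementation of $U_{0,\secp}$ (available because $D_0$ is efficiently sampleable), and a second copy of $\ket{\phi_\secp}$. It outputs $1$ unless both post-selections succeed and the SWAP test then rejects.
\end{itemize}

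The paper's design has two advantages. First, its acceptance probability is exactly $1$ for $b=0$ and $1-\tfrac{p_{0,\secp}p_{1,\secp}}{2}\TD(\ket{\upsilon_{0,\secp}},\ket{\upsilon_{1,\secp}})^2$ for $b=1$. It therefore isolates the weighted residual term directly, without having to combine it with the norm bound. Second, it needs no advice beyond copies of $\ket{\phi_\secp}$.

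Your version is somewhat shorter, but it has two costs. It relies on an arbitrary, generally not efficiently preparable advice state. It also needs the extra step of turning norm closeness plus overlap into a trace-norm bound for unnormalized rank-one operators, which loses a square root; this is comparable to the paper's quadratic loss to $1/8p(\secp)^2$. Both routes are valid under the stated hypothesis of security against quantum advice.
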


Before diving into the formal proof of Lemma \ref{lemma:comp_to_stat}, we start by giving a high-level overview of the proof strategy. At an intuitive level, we proceed by contradiction. For each \(b \in \{0, 1\}\), let \(p_b\) denote the probability that the final projection onto \(\lvert D_b\rangle\) succeeds, and let $\ket{\upsilon_b}$ denote the normalized residual state conditioned on success. The final flagged state is therefore
    \[ \sigma_b = (1-p_b) \ketbra{0} \otimes \ketbra{\bot} + p_b \ketbra{1} \otimes \ketbra*{\upsilon_b}. \]
Because the success and failure branches have orthogonal classical flags, by the triangle inequality we have that
    \[ \TD(\sigma_0, \sigma_1) \leq 2 \lvert p_0-p_1 \rvert + \min\{p_0,p_1\} \cdot \TD\left(\ketbra*{\upsilon_0}, \ketbra*{\upsilon_1}\right). \]
Consequently, if the two final states were noticeably far apart, then either the success probabilities \(p_0\) and \(p_1\) would differ noticeably, or the successful residual states would differ noticeably after accounting for the probability with which the successful branch occurs.

In the first case, one can contradict Lemma \ref{lemma:oracle_distinguisher} by constructing a distinguisher that on input $\ket{D_{b,\secp}}$ and with oracle access to
$U_{b,\secp}$, runs $V_\secp^{U_{b,\secp}}$, tests whether the final state in $\sfR \sfE$ is $\ket{D_b}$, and outputs the result. Its acceptance probability is exactly $p_b$, giving rise to a noticeable gap $|p_0-p_1|$. In the second case, we can also construct a distinguisher $S^{U_b}$ for \(\ket{D_0}\) and \(\ket{D_1}\). Specifically, $S^{U_b}$ runs a reference experiment using \(D_0\) and $U_0$ alongside the experiment using the unknown distribution \(D_b\). If both final projections succeed, then it performs a SWAP test on the two residual states. If \(b=0\), the two residual states are identical, so the SWAP test accepts with probability \(1\). On the other hand, if \(b=1\), the SWAP test rejects with probability
    \[ \frac{p_0p_1}{2} \cdot \TD(\ketbra{\upsilon_0}, \ketbra{\upsilon_1})^2, \]
which is non-negligible whenever the weighted distance above is non-negligible.  
\begin{proof}[Proof of Lemma \ref{lemma:comp_to_stat}]
    We define the probabilities
        \[ p_{b, \secp} := \left\|\Tr_{\sfR\sfE}\left(\ketbra*{D_{b, \secp}}_{\sfR\sfE} \cdot \ketbra{\psi_{b, V_{\secp}}}_{\sfO\sfR\sfE\sfI} \cdot \ketbra*{D_{b, \secp}}_{\sfR\sfE}\right)\right\|^2. \]
    We also note that $\rho_{b, \secp} = p_{b, \secp} \ketbra*{\upsilon_{b, \secp}}$ for some pure state $\ket{\upsilon_{b, \secp}}$. 
    
    The lemma is then equivalent to the claim that
    \begin{align*}
        &\TD\Bigl(
            (1-p_{0, \secp}) \cdot \ketbra{0} \otimes \ketbra{\bot}
            + p_{0, \secp} \cdot \ketbra{1} \otimes \ketbra{\upsilon_{0, \secp}},
            \\
        &\qquad
            (1-p_{1, \secp}) \cdot \ketbra{0} \otimes \ketbra{\bot}
            + p_{1, \secp} \cdot \ketbra{1} \otimes \ketbra{\upsilon_{1, \secp}}
            \Bigr)
        \leq \negl(\secp).
    \end{align*}
        
    Suppose for the sake of contradiction that there exists a polynomial $p(\secp)$ such that for infinitely many $\secp$ (call this set $\Lambda$),
    \begin{align*}
        &\TD\Bigl(
            (1-p_{0, \secp}) \cdot \ketbra{0} \otimes \ketbra{\bot}
            + p_{0, \secp} \cdot \ketbra{1} \otimes \ketbra{\upsilon_{0, \secp}},
            \\
        &\qquad
            (1-p_{1, \secp}) \cdot \ketbra{0} \otimes \ketbra{\bot}
            + p_{1, \secp} \cdot \ketbra{1} \otimes \ketbra{\upsilon_{1, \secp}}
            \Bigr)
        \geq \frac{1}{p(\secp)}.
    \end{align*}
    By the triangle inequality, we have that
    \begin{align*}
        \frac{1}{p(\secp)}
        &\leq \TD\Bigl(
            (1-p_{0, \secp}) \cdot \ketbra{0} \otimes \ketbra{\bot}
            + p_{0, \secp} \cdot \ketbra{1} \otimes \ketbra{\upsilon_{0, \secp}},
            \\
        &\qquad
            (1-p_{1, \secp}) \cdot \ketbra{0} \otimes \ketbra{\bot}
            + p_{1, \secp} \cdot \ketbra{1} \otimes \ketbra{\upsilon_{1, \secp}}
            \Bigr) \\
        &\leq \TD\Bigl(
            (1-p_{0, \secp}) \cdot \ketbra{0} \otimes \ketbra{\bot},
            (1-p_{1, \secp}) \cdot \ketbra{0} \otimes \ketbra{\bot}
            \Bigr) \\
        &\qquad
            + \TD\Bigl(
            p_{0, \secp} \cdot \ketbra{1} \otimes \ketbra{\upsilon_{0, \secp}},
            p_{1, \secp} \cdot \ketbra{1} \otimes \ketbra{\upsilon_{1, \secp}}
            \Bigr) \\
        &= |(1-p_{0, \secp}) - (1-p_{1, \secp})|
           + \TD\left(p_{0, \secp} \cdot \ketbra{\upsilon_{0, \secp}},
                 p_{1, \secp} \cdot \ketbra{\upsilon_{1, \secp}}\right) \\
        &\leq |(1-p_{0, \secp}) - (1-p_{1, \secp})|
           + |p_{0, \secp} - p_{1, \secp}|
           + \min_{b \in \{0, 1\}} p_{b, \secp} \cdot \TD\left(\ket{\upsilon_{0, \secp}}, \ket{\upsilon_{1, \secp}}\right) \\
        &= 2|p_{0, \secp} - p_{1, \secp}|
           + \min_{b \in \{0, 1\}} p_{b, \secp} \cdot \TD\left(\ket{\upsilon_{0, \secp}}, \ket{\upsilon_{1, \secp}}\right).
    \end{align*}
    Suppose that $2|p_{0, \secp}-p_{1, \secp}| \geq \frac{1}{2p(\secp)}$ for infinitely many $\secp \in \Lambda$. We now use the fact that
    \begin{align*}
        &\Tr_{\sfR\sfE}(\ketbra{1}_{\sfB} \cdot \SWAP_{\sfB\sfO} \cdot U_{b, \secp, \sfO\sfR\sfE} \cdot \SWAP_{\sfB\sfO} \ketbra{0}_{\sfB}\ketbra*{\psi_{b, V_{\secp}}}_{\sfO\sfR\sfE\sfI} \cdot \SWAP_{\sfB\sfO} \cdot U^{\dagger}_{b, \secp, \sfO\sfR\sfE} \cdot \SWAP_{\sfB\sfO} \cdot \ketbra{1}_{\sfB}) \\
        =& \ketbra{1}_{\sfB} \cdot \Tr_{\sfR\sfE}(\SWAP_{\sfB\sfO} \cdot U_{b, \secp, \sfO\sfR\sfE} \cdot \SWAP_{\sfB\sfO} \ketbra{0}_{\sfB}\ketbra*{\psi_{b, V_{\secp}}}_{\sfO\sfR\sfE\sfI} \cdot \SWAP_{\sfB\sfO} \cdot U^{\dagger}_{b, \secp, \sfO\sfR\sfE} \cdot \SWAP_{\sfB\sfO}) \\
        =& \ketbra{1}_{\sfB} \otimes \Tr_{\sfR\sfE}(\ketbra*{D_{b, \secp}}_{\sfR\sfE} \cdot \ketbra*{\psi_{b, V_{\secp}}}_{\sfO\sfR\sfE\sfI} \cdot \ketbra*{D_{b, \secp}}_{\sfR\sfE}).
    \end{align*}
    We consider the non-uniform oracle algorithm $S^{U_{b, \secp}}$ with advice $\ket{\phi_{\secp}}_{\sfI}$ which on input $(1^{\secp}, \ket{D_{b, \secp}}_{\sfR\sfE})$, appends $\ket{0}_{\sfB\sfO}$ before applying 
        \[ \SWAP_{\sfB\sfO} \cdot U_{b, \secp, \sfO\sfR\sfE} \cdot \SWAP_{\sfB\sfO} \cdot V^{U_{b, \secp, \sfO\sfR\sfE}}_{\secp, \sfO\sfR\sfE\sfI} \cdot \ket{0}_{\sfB}\ket{0}_{\sfO}\ket{D_{b, \secp}}_{\sfR\sfE}\ket{\phi_{\secp}}_{\sfI} \]
    and then measuring the register $\sfB$ and outputting the measurement outcome. We have that
        \[ \Pr[S^{U_{b, \secp}}(1^{\secp}, \ket{D_{b, \secp}}_{\sfR\sfE}, \ket{\phi_{\secp}}_{\sfI}) = 1] = p_{b, \secp}, \]
    and thus
        \[ \left|\Pr[S^{U_{0, \secp}}(1^{\secp}, \ket{D_{0, \secp}}_{\sfR\sfE}, \ket{\phi_{\secp}}_{\sfI}) = 1]-\Pr[S^{U_{1, \secp}}(1^{\secp}, \ket{D_{1, \secp}}_{\sfR\sfE}, \ket{\phi_{\secp}}_{\sfI}) = 1]\right| \geq |p_{0, \secp}-p_{1, \secp}| \geq \frac{1}{4p(\secp)} \]
    for infinitely many $\secp$, contradicting the indistinguishability of $D_0$ and $D_1$ by Lemma \ref{lemma:oracle_distinguisher}.
    
    As $2|p_{0, \secp}-p_{1, \secp}| < \frac{1}{2p(\secp)}$ for all but finitely many $\secp \in \Lambda$, it must be the case that 
        \[ \min_{b \in \{0, 1\}} p_{b, \secp} \cdot \TD\left(\ket{\upsilon_{0, \secp}}, \ket{\upsilon_{1, \secp}}\right) \geq \frac{1}{2p(\secp)}, \]
    for infinitely many $\secp$ (call this set $\Lambda'$).
    
    We now consider the following circuit $C^{U_{b, \secp, \sfO\sfR\sfE}}(1^{\secp}, \ket{D_{b, \secp}}_{\sfR\sfE}, \ket{\phi_{\secp}}_{\sfI'} \otimes \ket{\phi_{\secp}}_{\sfI})$:
    \begin{enumerate}
        \item Append the state $\ket{0, 0}_{\sfB'\sfO'}\ket{D_{0, \secp}}_{\sfR'\sfE'}\ket{0, 0}_{\sfB\sfO}$ to the state in registers $\sfR\sfE\sfI\sfI'$.
        \item Apply the unitaries
            \[ U_{\mathsf{Control}, \secp} := \SWAP_{\sfB'\sfO'} \cdot U_{0, \secp, \sfO'\sfR'\sfE'} \cdot \SWAP_{\sfB'\sfO'} \cdot V^{U_{0, \secp, \sfO'\sfR'\sfE'}}_{\secp, \sfO'\sfR'\sfE'\sfI'} \]
        and
            \[ U_{\mathsf{Test}, b, \secp} := \SWAP_{\sfB\sfO} \cdot U_{b, \secp, \sfO\sfR\sfE} \cdot \SWAP_{\sfB\sfO} \cdot V^{U_{b, \secp, \sfO\sfR\sfE}}_{\secp, \sfO\sfR\sfE\sfI}. \]
        \item Measure registers $\sfB$ and $\sfB'$. If both output 1, then run a SWAP test on the states on the registers $\sfO\sfE\sfI$ and $\sfO'\sfE'\sfI'$ and output the result. Otherwise, output 1.
    \end{enumerate}
    The efficiency of sampling $D_{b, \secp}$ implies that we can create $\ket{D_{0, \secp}}$ and implement $U_{0, \secp, \sfO'\sfR'\sfE'}$ efficiently. Since $\{V_{i, \secp}\}_{i = 0}^t$ are $\poly(\secp)$-sized and $t = \poly(\secp)$, $C$ is efficient.

    We now observe that $U_{\mathsf{Control}, \secp}$ and $U_{\mathsf{Test}, b, \secp}$ commute as they act on different registers and that
        \[ U_{\mathsf{Test}, b, \secp}\ket{0}_{\sfB\sfO}\ket{D_{b, \secp}}_{\sfR\sfE}\ket{\phi_{\secp}}_{\sfI} = \sqrt{1-p_{b, \secp}} \cdot \ket{0}_{\sfB} \otimes \ket{\mu_{b, \secp}} + \sqrt{p_{b, \secp}} \cdot \ket{1}_{\sfB} \otimes  \ket{\upsilon_{b, \secp}} \]
    for some (normalized) pure state $\ket{\mu_{b, \secp}}$, so
    \begin{align*}
        &\Pr[C^{U_{b, \secp}}(1^{\secp}, \ket{D_{b, \secp}}, \ket{\phi_{\secp}}_{\sfI'} \otimes \ket{\phi_{\secp}}_{\sfI}) = 1] \\
        = \,\, &(1 - p_{0, \secp} p_{b, \secp}) \cdot 1 + p_{0, \secp} p_{b, \secp} \cdot \left[1-\frac{1}{2}\TD\left(\ket{\upsilon_{0, \secp}}, \ket{\upsilon_{b, \secp}}\right)^2  \right] \\
        = \,\, &1 - \frac{p_{0, \secp}p_{b, \secp}}{2} \cdot \TD\left(\ket{\upsilon_{0, \secp}}, \ket{\upsilon_{b, \secp}}\right)^2.
    \end{align*}
    Therefore, for all $\secp \in \Lambda'$,
        \[ \Pr[C^{U_{0, \secp}}(1^{\secp}, \ket{D_{0, \secp}}, \ket{\phi_{\secp}}_{\sfI'} \otimes \ket{\phi_{\secp}}_{\sfI}) = 1] = 1, \]
    while
    \begin{align*}
        \Pr[C^{U_{1, \secp}}(1^{\secp}, \ket{D_{1, \secp}}, \ket{\phi_{\secp}}_{\sfI'} \otimes \ket{\phi_{\secp}}_{\sfI}) = 1] &= 1 - \frac{p_{0, \secp}p_{1, \secp}}{2} \cdot \TD\left(\ket{\upsilon_{0, \secp}}, \ket{\upsilon_{1, \secp}}\right)^2 \\
        &\leq 1 - \frac{1}{2} \cdot \left[\min_{b \in \{0, 1\}} p_{b, \secp} \cdot \TD\left(\ket{\upsilon_{0, \secp}}, \ket{\upsilon_{1, \secp}}\right)\right]^2 \\
        &\leq 1-\frac{1}{8p(\secp)^2}.
    \end{align*}
    Therefore,
        \[ \left|\Pr[C^{U_{0, \secp}}(1^{\secp}, \ket{D_{0, \secp}}, \ket{\phi_{\secp}}_{\sfI'} \otimes \ket{\phi_{\secp}}_{\sfI}) = 1]-\Pr[C^{U_{1, \secp}}(1^{\secp}, \ket{D_{1, \secp}}, \ket{\phi_{\secp}}_{\sfI'} \otimes \ket{\phi_{\secp}}_{\sfI}) = 1]\right| \geq \frac{1}{8p(\secp)^2} \]
    for infinitely many $\secp$ (in particular, for all $\secp \in \Lambda'$), which contradicts the indistinguishability of $D_0$ and $D_1$ by Lemma \ref{lemma:oracle_distinguisher}.

    We conclude that in fact
    \begin{align*}
        &\|(I-\ketbra*{D_{0, \secp}})_{\sfR\sfE} \ket{\psi_{0, V_{\secp}}}\|^2 \cdot \ketbra{0} \otimes \ketbra{\bot} + \ketbra{1} \otimes \rho_{0, \secp} \\
        \stackrel{\negl}\equiv& \\
        &\|(I-\ketbra*{D_{1, \secp}})_{\sfR\sfE} \ket{\psi_{1, V_{\secp}}}\|^2 \cdot \ketbra{0} \otimes \ketbra{\bot} + \ketbra{1} \otimes \rho_{1, \secp},
    \end{align*}
    as desired.
\end{proof}

\section{Parallel Repetition for Private-Coin Protocols}\label{sec:private-coin}
\begin{theorem}\label{thm:poly_round_tight} 
    Let $(P, V)$ be a $m$-round interactive protocol, where $m(\secp) = \poly(\secp)$. Suppose that there exists a $\QPT$ adversary $\calB$, polynomials $k(\secp) \geq k'(\secp) \geq t'(\secp)$ such that $k'(\secp) > 1$, inverse polynomial $\nu(\secp) > 0$, and function $\eps(\secp) \geq 0$ such that for all $\secp \in \Lambda \subseteq \bbN$,
        \[ \Pr[\langle \calB, \widehat{V}^{k', t'}\rangle (1^{\secp}) = 1] \geq B(k', t', \eps) + \frac{10k'-1}{10k} \cdot \nu, \]
    and with probability at least $1-\nu/10k$ over $r^{*}$, 
        \[ \Pr_{(\tau, r) \gets (\calB, \widehat{V}^{k', t'})}\left[\sum_{j \neq 1} V_j(\tau_j, r_j) \geq t' \Bigg| r_{\Query, 1} = r^{*}\right] \leq B(k'-1, t', \eps)+\frac{10(k'-1)+1}{10k} \cdot \nu, \]
    and
        \[ \Pr_{(\tau, r) \gets (\calB, \widehat{V}^{k', t'})}\left[\sum_{j \neq 1} V_j(\tau_j, r_j) \geq t'-1 \Bigg| r_{\Query, 1} = r^{*}\right] \leq B(k'-1, t'-1, \eps)+\frac{10(k'-1)+1}{10k} \cdot \nu. \]
    Then there exists a $\QPT$ adversary $\calA$ and negligible function $\negl(\secp)$ such that for all $\secp \in \Lambda$,
    \[
        \Pr[\langle \calA, V \rangle (1^\secp) = 1] \geq \eps+\nu/2k-\negl(\secp).
    \]
\end{theorem}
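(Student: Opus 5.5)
We construct $\calA$ by embedding the external verifier $V$ in coordinate $1$ of an interaction of $\calB$ with $\widehat{V}^{k',t'}$. Coordinates $2,\ldots,k'$ are simulated internally by honest copies of $\widehat{V}$ with randomness $r_{-1}$. The proxy for ``coordinate $1$ is good'' is the event
\[ \Good := \Bigl\{\textstyle\sum_{j\neq 1} V_j(\tau_j,r_j) = t'-1\Bigr\}. \]
This event depends only on the internal verifiers, so $\calA$ can evaluate it coherently without knowing the external verifier's randomness.

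The probability calculation is the classical part, and we carry it out first. Suppose we had the real external queries. Since $\sum_{j\neq 1}V_j\ge t'$ already means acceptance regardless of coordinate 1, we have
\[
\Pr[\text{acc}] \le \Pr\Bigl[\textstyle\sum_{j\ne1}V_j\ge t'\Bigr] + \Pr[\Good \wedge V_1=1].
\]
The hypotheses on conditional probabilities (valid for all but a $\nu/10k$ fraction of $r^*$) then bound the first term and the probability of $\Good$ itself. Using the identity $B(k',t',\eps)=B(k'-1,t',\eps)+\eps\,(B(k'-1,t'-1,\eps)-B(k'-1,t',\eps))$, this yields
\[
\Pr[V_1=1 \mid \Good] \ \ge\ \eps+\nu/2k
\]
up to the stated slack terms; dividing by $\Pr[\Good]\le 1$ only helps. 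So an ideal reduction that post-selects on $\Good$ (with $r_{-1}$ fixed and the real queries encrypted in coordinate 1) convinces $V$ with probability at least $\eps+\nu/2k$, minus the small losses from bad $r^*$. This is standard.

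The quantum reduction proceeds as follows.
\begin{enumerate}
\item Using many copies of $\calB$'s initial state as advice, $\calA$ samples $r_{-1}$ classically. It keeps the coordinate-1 encryption randomness and ciphertexts of \emph{zero} for all $m$ rounds in superposition $\ket{\Enc(0)}_{\sfR\sfE}$. It then coherently runs $\calB$ against the internal verifiers, projects onto $\Good$ via $\Pi_A$, and retries with fresh copies until the projection succeeds. Here the relevant $\Pi_A$ computes $\calB$'s full interaction, checks $\Good$, and uncomputes.
\item In round $j$, on receiving $q^j$, $\calA$ applies the QSVT unitary $U^{\Pi_B,\Pi_A}_{\eps,\delta}$ of Theorem~\ref{thm:qsvt}, where $\Pi_B$ projects onto the current ciphertext state $\ket{\Enc(q^1,\ldots,q^{j-1},0,\ldots)}$. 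This rotates the state back into $B$. $\calA$ then replaces the round-$j$ zero-encryption by $\ket{\Enc(q^j)}$ under the fresh key $\pk_j$, and applies the inverse QSVT with respect to $\Pi_{B'}$. After this, $\calA$ runs $\calB$'s round-$j$ step on the actual state, decrypts the coordinate-1 answer (it knows $\sk_j$ by simulation), and forwards $a_j$ to $V$.
\item Corollary~\ref{cor:qsvt_threshold} controls the error of the QSVT step. The Markov argument over singular values is what makes it apply: post-selection skews weight away from small $\varsigma_j$, so choosing $\delta=1/\poly$ gives inverse-polynomial error, and driving that error below $\nu/100k$ costs only polynomial time.
\end{enumerate}

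The main obstacle is showing that after each swap the adversary's state is \emph{statistically} close to the ideal state $\propto\Pi_A\ket{\Enc(\ldots,q^j,0,\ldots)}\ket{\phi}$, because acceptance of the external verifier depends on secret information. The plan is a hybrid over rounds $j$, invoking Lemma~\ref{lemma:comp_to_stat} once per round.
\begin{itemize}
\item Take $D_0,D_1$ to be $(\pk_j,\Enc(\pk_j,0))$ and $(\pk_j,\Enc(\pk_j,q^j))$. These are indistinguishable by semantic security with fresh keys, with earlier rounds and $q^j$ hardwired as non-uniform advice.
\item Take the oracle circuit to be the procedure ``inverse QSVT $\circ\, \Pi_A \circ$ prepare'', with $\Pi_{B'}$ realized through $U_b$.
\item The final projection onto $\ket{D_b}$ in Lemma~\ref{lemma:comp_to_stat} corresponds to the QSVT having returned the state to $B$.
\item Comparing the $b=0$ run, which is close to the untouched state, with the $b=1$ run, which is exactly the swapped procedure, gives negligible trace distance up to the QSVT errors.
\end{itemize}
Getting the register bookkeeping right, so that the QSVT only touches $\sfR$ through $U_b$, is where I expect most of the care to be needed. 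Summing the per-round errors over the polynomially many rounds yields the bound $\eps+\nu/2k-\negl$.
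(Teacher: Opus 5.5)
Your architecture matches the paper's: embed $V$ in coordinate $1$, post-select on $\sum_{j\neq 1}V_j=t'-1$ with the coordinate-$1$ coins held coherently, QSVT back to the encryption subspace, swap, inverse QSVT, and one use of Lemma~\ref{lemma:comp_to_stat} per round. However, two load-bearing steps fail as written.

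\textbf{The probability calculation bounds the wrong quantity and loses tightness.} Write $A_1$ for the coordinate-$1$ verdict and $Y:=\sum_{j\neq 1}V_j$. The reduction fixes the external coins $r^{*}=r_{\Query,1}$ and post-selects only over the internal coins, which must be resampled in every attempt since the hypotheses condition only on $r^{*}$. Its success probability is therefore $\bbE_{r^{*}}[\eta(r^{*})]$ with $\eta(r^{*})=\Pr[A_1=1\mid Y=t'-1,\ r_{\Query,1}=r^{*}]$, up to a small truncation loss from using only $N$ attempts. It is not the joint $\Pr[V_1=1\mid\Good]$, which reweights $r^{*}$ by $\Pr[\Good\mid r^{*}]$. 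Moreover, the hypotheses upper-bound only the two tails $\Pr[Y\ge t'\mid r^{*}]$ and $\Pr[Y\ge t'-1\mid r^{*}]$, not their difference $\Pr[\Good\mid r^{*}]$. ``Dividing by $\Pr[\Good]\le 1$'' only yields $\Pr[V_1=1\mid\Good]\ge\eps\Delta$ plus slack, where $\Delta:=B(k'-1,t'-1,\eps)-B(k'-1,t',\eps)$; this equals $\eps^{k'-1}$ when $t'=k'$, so you get $\eps^{k'}$ rather than $\eps$. The missing idea is the per-$r^{*}$ rewriting
\[ \Pr[A_1+Y\ge t'\mid r^{*}]=(1-\eta(r^{*}))\Pr[Y\ge t'\mid r^{*}]+\eta(r^{*})\Pr[Y\ge t'-1\mid r^{*}], \]
to which both hypotheses apply directly. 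The resulting upper bound is affine in $\eta(r^{*})$, so you can average over $r^{*}$ and compare with $B(k',t',\eps)=B(k'-1,t',\eps)+\eps\Delta$. This gives $\Delta\cdot(\bbE[\eta]-\eps)\ge 3\nu/5k$, and $\Delta\le 1$ finishes.

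\textbf{Your instantiation of Lemma~\ref{lemma:comp_to_stat} ignores the encrypted-secret-key chain.} $\widehat V$ sends $\widehat{\sk}_{j}=\Enc(\pk_{j+1},\sk_{j})$ in round $j+1$, so the coherent coordinate-$1$ state for rounds $j,\dots,m$ is entangled across rounds through the key coins in $\sfR_j$. Consequences:
\begin{itemize}
\item You cannot ``replace only the round-$j$ zero-encryption.''
\item Reflecting about the state that $\Pi_B$ projects onto is not a call to $U_b$ for $D_b=(\pk_j,\Enc(\pk_j,\cdot))$. Removing the dependence of $\widehat{\sk}_j$ on $\sk_j$ requires the coins in $\sfR_j$, to which the lemma allows no direct access.
\item Adjoining $\sk_j$ to $D_b$ instead makes $D_0,D_1$ trivially distinguishable.
\end{itemize}
The paper's fix is to project onto and discard the entire suffix $\ket{\Enc(0)^{j}}$ and re-prepare a fresh suffix $\ket{\Enc(q_j)^{j}}$. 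It takes $D_0,D_1$ to be the joint distributions of all suffix values $(\pk_\ell,\widehat{\sk}_{\ell-1},\ct_\ell)_{\ell\ge j}$. Their indistinguishability is not a single use of semantic security: it needs a backward hybrid that first replaces $\widehat{\sk}_{m-1},\dots,\widehat{\sk}_{j}$ by encryptions of $0$ (Claim~\ref{claim:comp_indist}).

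Relatedly, forwarding $a_j$ forces you to measure $\sfR_j$, $\sfE_j$ and the round-$j$ answer registers, since $\sk_j$ is otherwise in superposition. The state then lies in the image of the prefix-conditioned $\Pi_{\Acc,j+1}$, because $\Pi_{\Acc,j}=U_{\Ver,j}^{\dagger}\Pi_{\Acc,j+1}U_{\Ver,j}$. The next QSVT must therefore be between $\Pi_{\Acc,j+1}$ and the projector onto the suffix-only state $\ket{\Enc(0)^{j+1}}$, not onto $\ket{\Enc(q^1,\dots,q^{j},0,\dots)}$.

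Finally, the roles in your hybrid are reversed. The actual reduction is the $b=0$ experiment of the lemma followed by the common continuation ``append $\ket{\Enc(q_j)^{j}}$, inverse QSVT.'' It is the $b=1$ experiment with that continuation that collapses to the ideal post-selection. This requires first dropping the now nearly redundant projection via Corollary~\ref{cor:qsvt_threshold} and gentle measurement, treating low-acceptance attempts separately at cost $\mu/N$ each, and then cancelling the QSVT against its inverse.
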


\begin{remark}
    Our proofs are nearly identical in both the uniform and non-uniform setting, although in the non-uniform setting our reduction requires more copies of the prover's initial state, which is likely necessary (see \cite{BQSY24}). For the sake of simplicity we therefore assume all adversaries have quantum non-uniformity (although it is straightforward to verify that our reduction is uniformity-preserving).
\end{remark}

\begin{corollary}[Reminder of Theorem \ref{thm:tight_repetition}]\label{cor:fhe_tight_parallel_repetition}
    Let $(P, V)$ be a $m = \poly(\secp)$-round interactive protocol and let $\eps = \eps(\secp)$ be any parameter such that there exists a negligible function $\negl$ where for every $\QPT$ adversary $\calA$ and all $\secp \in \bbN$,
        \[ \Pr[\langle \calA, V \rangle (1^\secp) = 1] \leq \eps+\negl(\secp). \]
    Fix any polynomials $k := k(\secp)$ and $t := t(\secp) \leq k$. Then there exists a negligible function $\negl'$ such that for every $\QPT$ adversary $\calB$,
        \[ \Pr[\langle \calB, \widehat{V}^{k, t} \rangle (1^\secp) = 1] \leq B(k, t, \eps) + \negl'(\secp) \]
    for all $\secp \in \bbN$.
\end{corollary}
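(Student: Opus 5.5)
We prove Corollary~\ref{cor:fhe_tight_parallel_repetition} by contradiction, combining Lemma~\ref{lemma:corr_reduction} with Theorem~\ref{thm:poly_round_tight}. Suppose some $\QPT$ adversary $\calB$ and some inverse polynomial $\nu$ satisfy
\[ \Pr[\langle \calB, \widehat{V}^{k, t}\rangle(1^\secp)=1] \geq B(k,t,\eps)+\nu(\secp) \]
for infinitely many $\secp$; call this set $\Lambda$. We will build a $\QPT$ adversary $\calA^*$ against the single-copy verifier $V$ whose success is at least $\eps+\nu/2k-\negl(\secp)$ on infinitely many $\secp$. Since $\nu/2k$ is an inverse polynomial, this contradicts the assumed $\eps+\negl$ soundness of $V$.

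\textbf{Step 1 (correlation reduction).} For each $\secp$, run the transformation $\calT(1^\secp,t,k,\nu,\eps)$ of Lemma~\ref{lemma:corr_reduction} with oracle access to $\calB$. We read the hypothesis of that lemma as referring to $\widehat V^{k,t}$. Its running time is $\poly(\secp,k,\nu^{-1})$, which is polynomial. On $\Lambda$, except with probability $2^{-\secp}$, it outputs $(t',k',\calA)$ such that $\calA$ succeeds against $\widehat V^{k',t'}$ with probability at least $B(k',t',\eps)+\frac{10k'-1}{10k}\nu$. Moreover, either $k'=1$, or the two conditional minimality bounds hold with probability at least $1-\nu/10k$ over $r^*$.

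Since $k'\le k$ takes only polynomially many values, we can pass to an infinite subset $\Lambda'\subseteq\Lambda$ on which one of the two cases occurs infinitely often. A cleaner alternative is to let the final adversary run $\calT$ itself and branch on the outcome, since $\calT$ is efficient. The $2^{-\secp}$ failure probability only costs a negligible additive term.

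\textbf{Step 2 (case $k'=1$).} Then $t'=1$, $B(1,1,\eps)=\eps$, and $\calA$ convinces $\widehat V$ with probability at least $\eps+\frac{9}{10k}\nu$. By Lemma~\ref{lemma:qfhe_transform}, soundness of $(P,V)$ and $(\widehat P,\widehat V)$ agree up to negligible additive error. Concretely, a prover for $V$ generates its own keys, encrypts each incoming query, runs $\calA$, and decrypts the answers. This gives an adversary against $V$ with success at least $\eps+\frac{9}{10k}\nu-\negl\ge \eps+\nu/2k-\negl$.

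\textbf{Step 3 (case $k'>1$).} The triple $(\calA,k',t')$ satisfies exactly the hypotheses of Theorem~\ref{thm:poly_round_tight}, with $\calA$ in the role of $\calB$, on the infinite set $\Lambda'$. That theorem directly yields a $\QPT$ adversary against $V$ with success at least $\eps+\nu/2k-\negl$ on $\Lambda'$.

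\textbf{Main obstacle.} The delicate point is uniformity. Theorem~\ref{thm:poly_round_tight} is stated for fixed polynomials $k',t'$ and a fixed adversary, whereas $\calT$ produces them randomly and possibly differently for each $\secp$. We plan to handle this with the non-uniform model adopted in the paper: for each $\secp$, hardwire a good output of $\calT$, which exists with overwhelming probability, as advice. We then define $k'(\secp),t'(\secp)$ from these outputs; they are bounded by $k$ and hence polynomial. The other subtlety is that the hypothesis of Lemma~\ref{lemma:corr_reduction} nominally mentions $\widehat V^{k,k}$; we apply it with the threshold verifier $\widehat V^{k,t}$, which is the intended reading.
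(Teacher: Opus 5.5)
Your proposal is correct and follows essentially the same route as the paper: run the transformation $\calT$ of Lemma~\ref{lemma:corr_reduction}, dispatch the case $k'=1$ via Lemma~\ref{lemma:qfhe_transform} and single-copy soundness, and apply Theorem~\ref{thm:poly_round_tight} when $k'>1$. The differences are cosmetic. The paper explicitly rules out $t'=0$ when $k'=1$ (since $B(1,0,\eps)=1$ would force a success probability above $1$), which your ``then $t'=1$'' uses only implicitly, while your hardwiring of a good output of $\calT$ as non-uniform advice makes explicit a uniformity point that the paper glosses over by simply assuming $\calT$ always succeeds.
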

\begin{proof}[Proof of Corollary \ref{cor:fhe_tight_parallel_repetition}]
    The statement is trivial for $k = 1$, so we assume $k > 1$ for the rest of the analysis. Suppose for the sake of contradiction that there exists a $\QPT$ adversary $\calB$, inverse polynomial $\nu(\secp) > 0$, and infinite set $\Lambda \subseteq \bbN$ such that
        \[ \Pr[\langle \calB, \widehat{V}^{k, k} \rangle (1^\secp) = 1] \geq B(k, t, \eps) + \nu \]
    for all $\secp \in \Lambda$.

    Running the transformation $\calT$ from Lemma \ref{lemma:corr_reduction} gives $(k', t', \calA')$ in time $\poly(\secp, k, \nu^{-1}) = \poly(\secp)$ such that with probability at least $1-2^{-\secp}$, 
        \[ \Pr[\langle \calA', \widehat{V}^{k', t'}\rangle (1^{\secp}) = 1] \geq B(k', t', \eps) + \frac{10k'-1}{10k} \cdot \nu, \]
    and either $k' = 1$, or with probability at least $1-\nu/10k$ over $r^{*}$, 
        \[ \Pr_{(\tau, r) \gets (\calA', \widehat{V}^{k', t'})}\left[\sum_{j \neq 1} V_j(\tau_j, r_j) \geq t' \Bigg| r_{\Query, 1} = r^{*}\right] \leq B(k'-1, t', \eps)+\frac{10(k'-1)+1}{10k} \cdot \nu, \]
    and
        \[ \Pr_{(\tau, r) \gets (\calA', \widehat{V}^{k', t'})}\left[\sum_{j \neq 1} V_j(\tau_j, r_j) \geq t'-1 \Bigg| r_{\Query, 1} = r^{*}\right] \leq B(k'-1, t'-1, \eps)+\frac{10(k'-1)+1}{10k} \cdot \nu. \]
    Moving forward we assume that $\calT$ always succeeds since it only fails with probability $2^{-\secp} = \negl(\secp)$. If $k' = 1$ and $t' = 1$, then 
        \[ \Pr[\langle \calA', \widehat{V} \rangle (1^{\secp}) = 1] \geq B(1, 1, \eps) + \frac{9\nu}{10k} = \eps+\frac{9\nu}{10k}, \]
    for all $\secp \in \Lambda$, contradicting the soundness of $(\widehat{P}, \widehat{V})$ by our assumption and Lemma \ref{lemma:qfhe_transform}. Similarly, if $k' = 1$ and $t' = 0$ then 
        \[ \Pr[\langle \calA', \widehat{V} \rangle (1^{\secp}) = 1] \geq B(1, 0, \eps) + \frac{9\nu}{10k} > 1, \]
    which is impossible.

    Otherwise, we have some $k' > 1$ and $\QPT$ adversary $\calA'$ such that for all $\secp \in \Lambda$, with probability at least $1-\nu/10k$ over $r^{*}$, 
        \[ \Pr_{(\tau, r) \gets (\calA', \widehat{V}^{k', t'})}\left[\sum_{j \neq 1} V_j(\tau_j, r_j) \geq t' \Bigg| r_{\Query, 1} = r^{*}\right] \leq B(k'-1, t', \eps)+\frac{10(k'-1)+1}{10k} \cdot \nu, \]
    and
        \[ \Pr_{(\tau, r) \gets (\calA', \widehat{V}^{k', t'})}\left[\sum_{j \neq 1} V_j(\tau_j, r_j) \geq t'-1 \Bigg| r_{\Query, 1} = r^{*}\right] \leq B(k'-1, t'-1, \eps)+\frac{10(k'-1)+1}{10k} \cdot \nu. \]
    
    Applying Theorem \ref{thm:poly_round_tight} gives a $\QPT$ adversary $\calA$ and negligible function $\negl$ such that for all $\secp \in \Lambda$,
    \begin{align*}
        \Pr[\langle \calA, V \rangle (1^\secp) = 1] &\geq \eps+\nu/2k-\negl(\secp),
    \end{align*}
    contradicting the soundness of $(P, V)$.
    
    We conclude that $(\widehat{P}^{\otimes k}, \widehat{V}^{k, t})$ has soundness $B(k, t, \eps)+\negl(\secp)$ as desired.
\end{proof}

\begin{proof}[Proof of Theorem \ref{thm:poly_round_tight}]
Denote by $\ket{\psi}$ the initial state of $\calB$. For the sake of simplicity, we assume without loss of generality that $\ket{\psi}$ contains a designated register $\sfZ$ composed of subregisters $\sfZ_1, \ldots, \sfZ_m$, where $\sfZ_j$ is the register on which $\calB$ writes its response to the $j$'th round queries. We denote all the other registers in $\ket{\psi}$ by $\sfI$.

For every $j \in [m]$ we denote by $U_j$ the collection of unitaries indexed by the $j$th round queries $\bar{q}_j = (q_{j,1}, \ldots, q_{j,k})$ such that $U_j(\bar{q}_j)$ is the unitary that $\calB$ applies to its quantum state upon receiving query $\bar{q}_j$ (which operates on $\sfI$ and $\sfZ_j$). The assumption that each $U_j(\bar{q}_{j})$ is a unitary is without loss of generality, as any adversary not of this form can be ``purified'' into an adversary with unitary strategies, identical observable behavior, and constant factor blowup in size. 

Before introducing the details of our reduction, we give some intuition for the high-level strategy. The reduction embeds the external verifier in the first coordinate and emulates the verifier in all the other coordinates $[k] \setminus \{1\}$. Before receiving any queries from the external verifier, $\calA$ runs the $k$-fold prover $\calB$ using a dummy encryptions of $0$ in the first coordinate and post-selects on exactly $t'-1$ coordinates outside of the first accepting. The QSVT algorithm  plays the role of quantum rewinding: By Corollary \ref{cor:qsvt_threshold}, it coherently transfers the post-selected state from the acceptance subspace back to the dummy-encryption subspace, allowing the encryption registers to be removed while retaining the residual prover state. Once the first query $q_1$ is received, the reduction appends an encryption of $q_1$ and applies the corresponding inverse QSVT transformation to return to the acceptance subspace. Crucially, rather than fixing the randomness used for the key-generation and the ciphertext generation, the dummy ciphertext is prepared as a coherent purification, denoted by $\ket{\Enc(0)}_{\sfR\sfE}$ in the reduction, where $\sfR$ is the register containing the randomness. Since the prover has no access to the purifying register $\sfR$, its view is exactly the same as if these coins had been sampled classically. Retaining them coherently, however, allows the reduction to reflect about the entire encryption state and places us in the setting of Lemma \ref{lemma:comp_to_stat}, which justifies replacing $\ket{\Enc(0)}$ by $\ket{\Enc(q_1)}$ even after the post-selection. If the coins were fixed at the outset, this would instead require a pointwise replacement of $\Enc(0;r)$ by $\Enc(q_1;r)$ for a fixed $r$, whereas semantic security only guarantees indistinguishability over the random choice of the encryption coins. We can then apply a new set of rotations for each new query that is received using the same logic. The remaining verifier randomness may be sampled classically because it is never subject to this post-selection-and-replacement step. 

We now formally describe our reduction, which takes as input $N := 20k\lambda/\nu$ copies of $\ket{\psi}$.\footnote{Throughout the description of the reduction, we will often assume that any values which have been sampled or measured are automatically incorporated in the future. For example, in step 2(c), we assume that $\pk_{j, \ell}$ and $\sk_{j, \ell}$ for all $j \neq 1$ are computed using $r_{\Ver}$; similarly, in steps 3(b)/(e), the state $\ket*{\Enc(0)^{\rnd+1}}$ implicitly depends on $\pk_{1, \rnd}$, whose value is measured earlier in step 3(d) (and thereby fixed in step 3(b)/(e)).}

\algo{The Reduction $\calA$}{A-poly-desc}{
$\calA$ begins with state $\ket{\psi}^{\otimes N}$, and runs as follows:
\begin{enumerate}
    \item Set $\eps = 2^{-\secp}$, $\mu = \nu/10km$, $\delta =  \mu^2/2N^2$, $\tau_0 := \emptyset$ and $\widehat{\sk}_{[k], 0} := \emptyset$.
    \item \label{a1loop} For $s = 1$ to $N$:
    \begin{enumerate}
        \item Sample the following sets of randomness: 
        \begin{align*}
            \{r_{\Gen, j, \ell}, r_{\Enc, j, \ell}\}_{j \neq 1, \ell \in [m]}, \{r'_{\Enc, j, \ell}\}_{j \neq 1, 2 \leq \ell \leq m}, \{r_{\Query, j}\}_{j \neq 1}.
        \end{align*}
        We refer to all the randomness sampled in this step by $r_{\Ver}$.
        \item Append the state
            \[ \ket*{\Enc(0)^1}_{\sfR\sfE} \propto \sum_{r_{\Gen, 1}, r_{\Enc, 1}} \ket*{r_{\Gen, 1}, r_{\Enc, 1}}_{\sfR} \ket*{\pk_{1, 1}, \widehat{0}^1, \pk_{1, 2}, \widehat{\sk}_{1, 1}, \widehat{0}^2, \ldots, \pk_{1, m}, \widehat{\sk}_{1, m-1}, \widehat{0}^m}_{\sfE} \]
        which is constructed as follows: first append 
            \[ \ket*{+^1} \propto \sum \ket*{r_{\Gen, 1, 1}, \ldots, r_{\Gen, 1, m}, r_{\Enc, 1, 1}, r_{\Enc, 1, 2}, r'_{\Enc, 1, 2}, \ldots, r_{\Enc, 1, m}, r'_{\Enc, 1, m}} \]
        in the $\sfR$ register (where $r_{\Gen, 1, \ell}$ and $r_{\Enc, 1, \ell}$, as well as $r'_{\Enc, 1, \ell}$ if $\ell > 1$, are in subregister $\sfR_{\ell}$). Then, in subregister $\sfE_{\ell}$, coherently compute 
            \[ \pk_{1, \ell} \gets \QHE.\Gen(1^{\secp}; r_{\Gen, 1, \ell}), \]
            and
            \[ \widehat{0}^{\ell} = \QHE.\Enc(\pk_{1, \ell}, 0; r_{\Enc, 1, \ell}) \]
        for $1 \leq \ell \leq m$. For $2 \leq \ell \leq m$, additionally coherently compute in $\sfE_{\ell}$ the value
            \[ \widehat{\sk}_{1, \ell-1} = \QHE.\Enc(\pk_{1, \ell}, \sk_{1, \ell-1}; r'_{\Enc, 1, \ell}). \]
        \item Define the projection 
            \[ \Pi_{\Ver} := \sum_{\tau: \sum_{i \neq 1} V_i(\tau_j, r_{\Ver, j}) = t'-1} \ketbra*{\tau}_{\sfE\sfZ}. \]
        
        Define the projection $\Pi_{\Acc, \rnd} := U_{\Ver, \rnd}^{\dagger} \ldots U_{\Ver, m}^{\dagger} \Pi_{\Ver} U_{\Ver, m} \ldots U_{\Ver, \rnd}$, where $U_{\Ver, \ell}$ is the following unitary:
        \begin{enumerate}
            \item Controlled on the partial transcript in registers $\sfE_{\rnd} \ldots \sfE_{\ell}$ and $\sfZ_{\rnd} \ldots \sfZ_{\ell-1}$, apply the prover unitary $U_{\ell, \sfZ_{\ell}\sfI}$.
            
            For example, if $\ell = 1$, then controlled on $\pk_{1, 1}, \ct_{1, 1}$ in register $\sfE_1$, apply $U_1(\pk_{[k], 1} || \widehat{q}_{1, 1} || \cdots || \ct_{1, 1} || \cdots || \widehat{q}_{k, 1})$, where $\pk_{-1, 1}, \widehat{q}_{-1, 1}$ are computed with $r_{\Ver}$.
        \end{enumerate}
        Measure $\Pi_{\Acc, 1}$. If the outcome is 0, do the following. If $s < N$, proceed to the next iteration of step~\ref{a1loop}. If $s = N$ then abort the algorithm and output $\bot$.
        \item Apply the circuit $C^{\ketbra*{\Enc(0)^1}, \Pi_{\Acc, 1}}_{\eps, \delta}$ with ancilla qubit $\ket{0}_{\sfO}$.
        \item Measure the projection $\ketbra*{\Enc(0)^1}_{\sfR\sfE}$ before discarding registers $\sfR$ and $\sfE$.
        \item If the outcome of the projection is 0, discard registers $\sfO\sfZ\sfI$; return to the start of step~\ref{a1loop} if $s < N$ and abort/output $\bot$ if $s = N$.
    \end{enumerate}
    \item For $\rnd = 1$ to $m$:
    \begin{enumerate}
        \item Receive the $(\rnd)$-th query $q_{\rnd}$ from the external verifier and set $q_{1, \rnd} = q_{\rnd}$.
        \item Append the state $\ket*{\Enc(q_{1, \rnd})^{\rnd}}_{\sfR_{\rnd}\ldots\sfR_m\sfE_{\rnd}\ldots\sfE_m}$, which is computed as follows.

        For $\rnd \leq \ell \leq m$, if $\ell = 1$, append 
            \[ \ket*{+^{\rnd}_1} \propto \sum_{r_{\Gen, 1, 1}, r_{\Enc, 1, 1}} \ket*{r_{\Gen, 1, 1}, r_{\Enc, 1, 1}} \]
        to register $\sfR_1$; otherwise, append 
            \[ \ket*{+^{\rnd}_{\ell}} \propto \sum_{r_{\Gen, 1, \ell}, r_{\Enc, 1, \ell}, r'_{\Enc, 1, \ell}} \ket*{r_{\Gen, 1, \ell}, r_{\Enc, 1, \ell}, r'_{\Enc, 1, \ell}} \]
        to register $\sfR_{\ell}$. In subregister $\sfE_{\ell}$, coherently compute 
            \[ \pk_{1, \ell} \gets \QHE.\Gen(1^{\secp}; r_{\Gen, 1, \ell}). \]
        If $\ell = \rnd$, coherently compute
            \[ \widehat{q_{\ell}}^{\ell} = \QHE.\Enc(\pk_{1, \ell}, q_{\ell}; r_{\Enc, 1, \ell}) \]
        in register $\sfE_{\ell}$; if $\ell > \rnd$, then instead compute 
            \[ \widehat{0}^{\ell} = \QHE.\Enc(\pk_{1, \ell}, 0; r_{\Enc, 1, \ell}). \]
        For $\max\{2, \rnd\} \leq \ell \leq m$, additionally coherently compute in subregister $\sfE_{\ell}$ the value
            \[ \widehat{\sk}_{1, \ell-1} = \QHE.\Enc(\pk_{1, \ell}, \sk_{1, \ell-1}; r'_{\Enc, 1, \ell}). \]
        \item Apply the circuit $(C^{\ketbra*{\Enc(q_{1, \rnd})^{\rnd}}, \Pi_{\Acc, \rnd}}_{\eps, \delta})^{\dagger}$ with ancilla register $\sfO$.
        \item Measure the projection $\Pi_{\Acc, \rnd}$; if the outcome is 0 then abort/output $\bot$.
        
        Otherwise, apply $U_{\Ver, \rnd}$ and measure registers $\sfR_{\rnd}$, $\sfE_{\rnd}$, and $\sfZ_{\rnd}$ to get measurement outcomes $(r_{\Gen, 1, \rnd}, r_{\Enc, 1, \rnd}, r'_{\Enc, 1, \rnd})$, $(\pk_{1, \rnd}, \widehat{\sk}_{1, \rnd-1}, \widehat{q}_{1, \rnd})$, and $\widehat{a}_{[k], \rnd}$.

        Set $\tau_{\rnd} = \tau_{\rnd-1} || \pk_{[k], \rnd} || \widehat{\sk}_{[k], \rnd-1} || \widehat{q}_{[k], \rnd} || \widehat{a}_{[k], \rnd}$. If $\rnd = m$, skip to step 4.
        \item Apply the circuit $C^{\ketbra*{\Enc(0)^{\rnd+1}}, \Pi_{\Acc, \rnd+1}}_{\eps, \delta}$ with ancilla register $\sfO$.
        \item Measure the projection $\ketbra*{\Enc(0)^{\rnd+1}}$ before discarding registers $\sfR_{\rnd+1} \ldots \sfR_m$ and $\sfE_{\rnd+1} \ldots \sfE_m$.
        \item If the outcome of the projection is 0, discard registers $\sfO\sfZ_{\rnd+1}\ldots\sfZ_m\sfI$ and abort/output $\bot$.
        \item Compute $a_{1, \rnd} = \QHE.\Dec(\QHE.\Gen(1^{\secp}; r_{\Gen, 1, \rnd}), \widehat{a}_{1, \rnd})$ before sending $a_{1, \rnd}$ to $V$.
    \end{enumerate}
    \item Output $\tau_m$.
\end{enumerate}
}
\noindent
\textbf{Runtime.}
The fact that $\calA$ runs in time $\poly(\secp)$ follows by the efficiency of the QSVT circuits and the fact that $\log 1/\eps, 1/\nu, m, k, |\calB| = \poly(\secp)$.
$ $\newline
\textbf{Soundness.}
\noindent
Since our reduction only calls $\calB$ on the specified security parameter $\secp$, we assume for simplicity that $\Lambda = \bbN$. We also assume for the sake of notational simplicity that $\secp$ is sufficiently large so that $2^{-\secp} < \nu/10k$. To show that $\calA$ succeeds with the claimed probability, we consider the following set of hybrids:
\begin{itemize}
    \item Hybrid $\calH_0$: Run a random execution of $\calA$.
    \item Hybrid $\calH_{1, 1}$: Like $\calH_0$, except in steps 2(b)-(f), use the external verifier's randomness to compute $\ket*{\Enc(q_1)^1}$ instead of $\ket*{\Enc(0)^1}$.
    \item Hybrid $\calH_{1, 2}$: Like $\calH_{1, 1}$, but in step 2(e), do not discard any registers. If the projection rejects in step 2(e), then additionally discard $\sfR\sfE$ in step 2(f). If the projection accepts, then skip step 3(b) for $\rnd = 1$.
    \item Hybrid $\calH_{1, 3}$: Like $\calH_{1, 2}$, but also dropping steps 2(e)-(f).
    \item Hybrid $\calH_{1, 4}$: Like $\calH_{1, 3}$, but also dropping step 2(d) and iteration $s = 1$ of step 3(c).
\end{itemize}
We now bound the trace distance between each pair of hybrids.

At a high level, the only difference between $\calH_0$ and $\calH_{1, 1}$ is that in $\calH_0$ the coherent ciphertext state in coordinate $i$ encrypts the dummy message $0$ whereas in $\calH_{1, 1}$ it encrypts the actual query $q_1$. Although semantic security alone gives only computational indistinguishability, Lemma~\ref{lemma:comp_to_stat} shows that the entire output of the QSVT-and-return procedure is statistically close in the two cases, since the procedure never accesses the purifying randomness directly and retains the residual state only after successfully returning the ciphertext registers to the supplied encryption state.

We now give the formal proof of Lemma \ref{lemma:hybrids_poly_swap_2bf}.
\begin{lemma}\label{lemma:hybrids_poly_swap_2bf}
    $\TD(\calH_0, \calH_{1, 1}) = \negl(\secp)$.
\end{lemma}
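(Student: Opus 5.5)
The two hybrids differ only in steps 2(b)--(f), where $\ket*{\Enc(0)^1}_{\sfR\sfE}$ is replaced by $\ket*{\Enc(q_1)^1}_{\sfR\sfE}$. The plan is to analyze a single iteration $s$ of the loop in step~\ref{a1loop} with Lemma~\ref{lemma:comp_to_stat}, then combine the $N$ iterations and the rest of the execution by a hybrid argument. Everything in step~\ref{a1loop} outside $\sfR\sfE$ is independent of the encrypted message: $r_{\Ver}$ is classical and freshly sampled, the $N$ copies of $\ket{\psi}$ are fixed advice, and $\Pi_{\Acc,1}$ acts on $\sfE\sfZ\sfI$ but never on $\sfR$. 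So in each iteration, conditioned on $r_{\Ver}$ and on the remaining copies of $\ket{\psi}$, we are in exactly the setting of Lemma~\ref{lemma:comp_to_stat}. We take $D_{0}$ to be the distribution of the classical string $\sfE$ produced with message $0$ in the first coordinate (the keys $\pk_{1,\ell}$, the ciphertexts $\widehat{0}^\ell$, and the encrypted secret keys $\widehat{\sk}_{1,\ell-1}$), and $D_{1}$ the same with $q_1$ in place of $0$ in round $1$. The message $q_1$, together with $r_{\Ver}$ and the copy of $\ket{\psi}$, is supplied as non-uniform advice.

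First, verify that $D_0$ and $D_1$ are computationally indistinguishable with quantum advice. They differ only in $\widehat{0}^1$ versus $\widehat{q_1}^1$ under the fresh key $\pk_{1,1}$. The other components are either independent of $\sk_{1,1}$ (ciphertexts under later keys) or, in the case of $\widehat{\sk}_{1,1}$, an encryption of $\sk_{1,1}$ under the independent key $\pk_{1,2}$. Two steps of semantic security handle this. First, replace $\widehat{\sk}_{1,1}$ by an encryption of $0$ under $\pk_{1,2}$; this is the step where the fresh-key-per-round structure matters. Second, use semantic security of $\pk_{1,1}$ to swap $0$ for $q_1$, and then undo the first replacement. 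A reduction can sample all the other components itself.

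Second, express one iteration of steps 2(c)--(e) as $V^{U_b}$ applied to $\ket{0}_\sfO\ket{D_b}_{\sfR\sfE}\ket{\phi}_\sfI$, where $\ket{\phi}$ consists of $\ket{\psi}$ and the classical data. The measurement of $\Pi_{\Acc,1}$ is done coherently, by copying the outcome into a fresh ancilla. $\Pi_{\Acc,1}$ is a unitary conjugate of $\Pi_{\Ver}$, so it is implementable without touching $\sfR$. By Theorem~\ref{thm:qsvt}, the circuit $C^{\ketbra*{\Enc(b)^1},\Pi_{\Acc,1}}_{\eps,\delta}$ uses only $C_{\Pi_{\Acc,1}}NOT$ gates and $C_{\ketbra{D_b}}NOT = U_b$ calls, with $O(\log(1/\eps)/\delta)=\poly(\secp)$ of each. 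The final projection in step 2(e) onto $\ket{D_b}$, followed by discarding $\sfR\sfE$, is exactly the flagged state in Lemma~\ref{lemma:comp_to_stat}: the flag records success or failure of that projection, and $\rho_b$ is the residual state on $\sfO\sfI\sfZ$ together with the recorded outcome of the $\Pi_{\Acc,1}$ measurement. Lemma~\ref{lemma:comp_to_stat} then says the output of each iteration (accept/abort flag plus residual state) is within negligible trace distance across the two hybrids. Averaging over $r_{\Ver}$ preserves this bound, since the bound holds for every advice choice.

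Finally, hybridize over the iterations $s=1,\dots,N$. Whether the loop continues depends only on the flags of each iteration, and later iterations use fresh copies of $\ket{\psi}$. Switching one iteration at a time therefore changes the joint state after step~\ref{a1loop} by at most $N\cdot\negl$. Step~3 is then the same CPTP map in both hybrids, and trace distance is contractive under it. The main obstacle is the first two steps. One must make sure the quantities fed to Lemma~\ref{lemma:comp_to_stat} are genuinely uniform over the advice, since $q_1$ is sampled by the external verifier and $r_{\Ver}$ at random, so the negligible bound has to hold uniformly over them. The natural fix is a worst-case non-uniform argument: if the bound failed for some fixed advice on infinitely many $\secp$, we would obtain a non-uniform distinguisher. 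One must also make sure the intermediate measurements are cast as unitaries acting off $\sfR$, so that the syntactic constraint on the distinguisher in Lemma~\ref{lemma:comp_to_stat} is met.
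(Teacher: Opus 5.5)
Your overall structure matches the paper's proof. You isolate a single iteration of step~2: the paper runs the $N$ iterations in parallel and post-processes, and your iteration-by-iteration hybrid is equivalent. You then treat $r_{\Ver}$ and $q_1$ as advice, defer the $\Pi_{\Acc,1}$ measurement into an ancilla, note that the QSVT circuit touches $\sfR$ only through calls to $U_b$, apply Lemma~\ref{lemma:comp_to_stat}, and handle step~3 as a common CPTP map. Your worst-case-advice remark is a reasonable way to make the paper's ``fix $r_{\Ver}=r^{*}_{\Ver}$'' step uniform in the advice.

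The gap is in your proof that $D_0$ and $D_1$ are computationally indistinguishable. Your first step replaces $\widehat{\sk}_{1,1}=\Enc(\pk_{1,2},\sk_{1,1})$ by $\Enc(\pk_{1,2},0)$ using semantic security under $\pk_{1,2}$, and asserts that the reduction can sample every other component itself. For $m\ge 3$ it cannot. The distribution also contains $\widehat{\sk}_{1,2}=\Enc(\pk_{1,3},\sk_{1,2})$, and a reduction playing the security game for $\pk_{1,2}$ does not know $\sk_{1,2}$. The same obstruction runs along the whole chain $\widehat{\sk}_{1,1},\ldots,\widehat{\sk}_{1,m-1}$, because each encrypted key is the decryption key for the previous link.

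The fix, which is what the paper's Claim does, is to peel the chain from the end:
\begin{enumerate}
\item Replace $\widehat{\sk}_{1,m-1}$ by $\Enc(\pk_{1,m},0)$; this is valid since $\sk_{1,m}$ appears nowhere.
\item Replace $\widehat{\sk}_{1,m-2}$ using security under $\pk_{1,m-1}$, since $\sk_{1,m-1}$ now appears nowhere. Continue in this way down to $\widehat{\sk}_{1,1}$.
\item Only then switch $\Enc(\pk_{1,1},0)$ to $\Enc(\pk_{1,1},q_1)$, since $\sk_{1,1}$ no longer appears.
\item Undo the chain in reverse.
\end{enumerate}
This takes $O(m)=\poly(\secp)$ hybrids rather than two, each justified by non-uniform semantic security with the remaining coins hardwired. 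With that repair, the rest of your argument goes through.
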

\begin{proof}[Proof of Lemma \ref{lemma:hybrids_poly_swap_2bf}]
We will couple the two hybrids based on the external randomness $r_{\Query, 1}$ sampled by the external verifier.

As $\calH_0$ and $\calH_{1, 1}$ are identical from step 3 onwards, it suffices to argue that the states of the hybrids at the end of step 2 are close in trace distance. We observe that $\calA$ either (1) aborts in step 2 after $N$ iterations or (2) it succeeds in some iteration $s = \iter$ with sampled randomness $r_{\Ver}$ and pure state $\ket{0}_{\sfO} \otimes \ket*{\psi_{r_{\Ver}}}$ (fully determined by $r_{\Ver}$) left in registers $\sfO\sfZ_1\ldots\sfZ_m\sfI$ (the fact that the $\sfO$ register is unentangled and left in the zero state follows from Theorem \ref{thm:qsvt}). In the first case we say the state of $\calA$ at the end of step 2 is $(\bot, \bot)$, and in the second case we label the state by $(r_{\Ver}, \ket*{\psi_{r_{\Ver}}})$. Bounding $\TD(\calH_0, \calH_{1, 1})$ follows from bounding the resulting distributions over states at the end of step 2 as defined.

We observe that this state can equivalently be calculated in parallel rather than sequentially: if we run $N$ independent executions of step 2, there is a deterministic function which maps the $N$ outcomes to the resulting state we want. By the data processing inequality and a union bound, we see that it suffices to argue that a \emph{single} execution of step 2 in $\calH_0$ and $\calH_{1, 1}$ results in negligibly close states, since $N = 20k\secp/\nu = \poly(\secp)$.

We now couple the choice of $r_{\Ver}$ between the hybrids. Note that in both hybrids, the probability of sampling $r_{\Ver}$ is identical. We can therefore fix $r_{\Ver} = r^{*}_{\Ver}$.

We now define two distributions:
\begin{itemize}
    \item $\calD_{0, \secp}$ is sampled as follows: first, sample randomness $r_{\Gen, 1}$, $r_{\Enc, 1}$, and $r'_{\Enc, 1}$. Compute 
        \[ (\pk_{1, \ell}, \sk_{1, \ell}) = \QHE.\Gen(1^{\secp}; r_{\Gen, 1, \ell}) \]
    for $1 \leq \ell \leq m$ and 
        \[ \widehat{\sk}_{1, \ell-1} = \QHE.\Enc(\pk_{1, \ell}, \sk_{1, \ell-1}; r'_{\Enc, 1, \ell}) \]
    and 
        \[ \ct_{\ell} = \QHE.\Enc(\pk_{1, \ell}, 0; r_{\Enc, 1, \ell}) \]
    for $2 \leq \ell \leq m$. Finally, compute $\ct_1 = \QHE.\Enc(\pk_{1, 1}, 0; r_{\Enc, 1, 1})$ before outputting $(\pk_{1, 1}, \ct_1, \pk_{1, 2}, \widehat{\sk}_{1, 1}, \ct_2, \ldots, \pk_{1, m}, \widehat{\sk}_{1, m-1}, \ct_m)$.
    \item $\calD_{1, \secp}$ is sampled as follows: first, sample randomness $r_{\Gen, 1}$, $r_{\Enc, 1}$, and $r'_{\Enc, 1}$. Compute 
        \[ (\pk_{1, \ell}, \sk_{1, \ell}) = \QHE.\Gen(1^{\secp}; r_{\Gen, 1, \ell}) \]
    for $1 \leq \ell \leq m$ and 
        \[ \widehat{\sk}_{1, \ell-1} = \QHE.\Enc(\pk_{1, \ell}, \sk_{1, \ell-1}; r'_{\Enc, 1, \ell}) \]
    and 
        \[ \ct_{\ell} = \QHE.\Enc(\pk_{1, \ell}, 0; r_{\Enc, 1, \ell}) \]
    for $2 \leq \ell \leq m$. Finally, compute $\ct_1 = \QHE.\Enc(\pk_{1, 1}, q_1; r_{\Enc, 1, 1})$ before outputting $(\pk_{1, 1}, \ct_1, \pk_{1, 2}, \widehat{\sk}_{1, 1}, \ct_2, \ldots, \pk_{1, m}, \widehat{\sk}_{1, m-1}, \ct_m)$.
\end{itemize}

\begin{claim}\label{claim:comp_indist}
    $\calD_{0, \secp}$ and $\calD_{1, \secp}$ are computationally indistinguishable to $\poly(\secp)$-size quantum adversaries with quantum advice.
\end{claim}
\begin{proof}[Proof of Claim \ref{claim:comp_indist}]
    We will use a hybrid argument to prove our claim. For $0 \leq \rnd \leq m$ and $b \in \{0, 1\}$, we define the hybrid $\calD_{b, \rnd, \secp}$ as follows:
    \begin{enumerate}
        \item Sample randomness $r_{\Gen, 1}$,  $r_{\Enc, 1}$, and $r'_{\Enc, 1}$.
        \item Compute 
            \[ (\pk_{1, \ell}, \sk_{1, \ell}) = \QHE.\Gen(1^{\secp}; r_{\Gen, 1, \ell}) \]
        for $1 \leq \ell \leq m$.
        \item Compute
            \[ \ct_{\ell} = \QHE.\Enc(\pk_{1, \ell}, 0; r_{\Enc, 1, \ell}) \]
        for $2 \leq \ell \leq m$, and
            \[ \ct_1 = \QHE.\Enc(\pk_{1, 1}, 0; r_{\Enc, 1, 1}) \]
        if $b = 0$, or
            \[ \ct_1 = \QHE.\Enc(\pk_{1, 1}, q_1; r_{\Enc, 1, 1}) \]
        if $b = 1$.
        \item For $2 \leq \ell \leq \rnd$, compute
            \[ \widehat{\sk}_{1, \ell-1} = \QHE.\Enc(\pk_{1, \ell}, \sk_{1, \ell-1}; r'_{\Enc, 1, \ell}) \]
        and for $\ell > \rnd$, compute
            \[ \widehat{\sk}_{1, \ell-1} = \QHE.\Enc(\pk_{1, \ell}, 0; r'_{\Enc, 1, \ell}). \]
        \item Output $(\pk_{1, 1}, \ct_1, \pk_{1, 2}, \widehat{\sk}_{1, 1}, \ct_2, \ldots, \pk_{1, m}, \widehat{\sk}_{1, m-1}, \ct_m)$.
    \end{enumerate}
    Note that by construction, $\calD_{0, \secp} = \calD_{0, m, \secp}$ and $\calD_{1, \secp} = \calD_{1, m, \secp}$. We begin by arguing that $\calD_{b, \rnd, \secp}$ and $\calD_{b, \rnd+1, \secp}$ are computationally indistinguishable. 
    
    Suppose for the sake of contradiction that there exists a $\QPT$ distinguisher $\calA$, $\poly(\secp)$-size advice $\ket{\psi}$, and polynomial $p(\secp)$ such that for infinitely many $\secp \in \bbN$,
        \[ \left|\Pr_{x \gets \calD_{b, \rnd, \secp}}[\calA(x, \ket{\psi}) = 1] - \Pr_{x \gets \calD_{b, \rnd+1, \secp}}[\calA(x, \ket{\psi}) = 1]\right| \geq \frac{1}{p(\secp)}. \]
    Then, by an averaging argument, there exists 
        \[ \mathsf{adv}_{\Gen} = (r_{\Gen, 1, 1}, \ldots, r_{\Gen, 1, \rnd}, r_{\Gen, 1, \rnd+2}, \ldots, r_{\Gen, 1, m}) \]
    and 
        \[ \mathsf{adv}_{\Enc} = (r_{\Enc, 1, 1}, \ldots, r_{\Enc, 1, m}, r'_{\Enc, 1, 2}, \ldots, r'_{\Enc, 1, \rnd}, r'_{\Enc, 1, \rnd+2}, \ldots, r'_{\Enc, 1, m}) \]
    such that
        \[ \left|\Pr_{x \gets \calD_{b, \rnd, \secp}}[\calA(x, \ket{\psi}) = 1 \mid \mathsf{adv}_{\Gen}, \mathsf{adv}_{\Enc}] - \Pr_{x \gets \calD_{b, \rnd+1, \secp}}[\calA(x, \ket{\psi}) = 1 \mid \mathsf{adv}_{\Gen}, \mathsf{adv}_{\Enc}]\right| \geq \frac{1}{p(\secp)}. \]
    If we hardwire $(\mathsf{adv}_{\Gen}, \mathsf{adv}_{\Enc})$, then we see that there exists a $\QPT$ distinguisher $\calB$ such that for infinitely many $\secp \in \bbN$,
    \begin{align*}
        &\Biggl|
        \Pr_{\substack{
            (\pk_{\rnd+1}, \sk_{\rnd+1}) \gets \QHE.\Gen(1^{\secp}) \\
            \ct_0 \gets \QHE.\Enc(\pk_{\rnd+1}, 0)
        }}
        \!\left[
            \calB(\pk_{\rnd+1}, \ct_0, \ket{\psi},
            \mathsf{adv}_{\Gen}, \mathsf{adv}_{\Enc}) = 1
        \right] \\
        &\qquad -
        \Pr_{\substack{
            (\pk_{\rnd+1}, \sk_{\rnd+1}) \gets \QHE.\Gen(1^{\secp}) \\
            \ct_1 \gets \QHE.\Enc(\pk_{\rnd+1}, \sk_{\rnd})
        }}
        \!\left[
            \calB(\pk_{\rnd+1}, \ct_1, \ket{\psi},
            \mathsf{adv}_{\Gen}, \mathsf{adv}_{\Enc}) = 1
        \right]
        \Biggr|
        \geq \frac{1}{p(\secp)},
    \end{align*}
    contradicting the non-uniform security of $\QHE$. We conclude that $\calD_{b, \rnd, \secp} \stackrel{\negl}\approx \calD_{b, \rnd+1, \secp}$ for all $1 \leq \rnd \leq m-1$. A nearly identical argument then shows that $\calD_{0, 1, \secp} \stackrel{\negl}\approx \calD_{1, 1, \secp}$ (since the distinguisher no longer has access to any encryptions of $\sk_1$). Thus, we see that $\calD_{0, \secp} \stackrel{\negl}\approx \calD_{1, \secp}$ since $m = \poly(\secp)$, as desired.
\end{proof}
Since $\QHE.\Gen$ and $\QHE.\Enc$ are $\PPT$ algorithms (for classical messages), $\{D_{0, \secp}\}_{\secp \in \bbN}$ and $\{D_{1, \secp}\}_{\secp \in \bbN}$ are also classically efficiently sampleable distributions.

With this claim we are now almost done with the proof. We slightly tweak our reduction to an observationally equivalent form where the measurement in step 2(c) is deferred until after step 2(f). Concretely, this can be done by adding an ancilla qubit initialized to $\ket{0}$ in register $\sfA$, applying the unitary 
    \[ U_{\sfA\sfE\sfZ\sfI} = X_{\sfA} \otimes \Pi_{\Acc, 1, \sfE\sfZ\sfI} + I_{\sfA} \otimes (I-\Pi_{\Acc})_{\sfE\sfZ\sfI} \]
in place of measuring $\Pi_{\Acc}$, and measuring $\sfA$ after step 2(f). We note that in this new circuit, the measurement of register $\sfA$ commutes with (1) the QSVT circuit in step 2(d), (2) the measurement of the projection $\ketbra*{D_{b, \secp}}_{\sfR\sfE}$ (which is $\ketbra*{\Enc(0)^1}_{\sfR\sfE}$ in $\calH_0$ and $\ketbra*{\Enc(q_1)^1}$ in $\calH_{1, 1}$) and discarding of registers $\sfR\sfE$ in step 2(e), and (3) the possible discarding of registers $\sfO\sfZ\sfI$ in step 2(f). 

We now note that in steps 2(b)-(e) before the $\sfR$ and $\sfE$ registers are traced out, our new circuit consists of appending $\ket*{D_{b, \secp}}$ (which is $\ket*{\Enc(0)^1}_{\sfR\sfE}$ in $\calH_0$ and $\ket*{\Enc(q_1)^1}_{\sfR\sfE}$ in $\calH_{1, 1}$) to a pure state $\ket*{\psi_{r^{*}_{\Ver}}}$ before applying a series of two types of unitaries: (1) unitaries (such as the deferred implementation of $\Pi_{\Acc, 1}$) which do not act on register $\sfR$, or (2) calls to
    \[ C_{\ketbra*{D_{b, \secp}}}NOT_{\sfO\sfR\sfE} := X_{\sfO} \otimes \ketbra*{D_{b, \secp}}_{\sfR\sfE} + I_{\sfO} \otimes (I-\ketbra*{D_{b, \secp}})_{\sfR\sfE}, \]
each of which is simply a single call to the $U_{b, \secp}$ oracle! In addition, when the projection $\ketbra*{D_{b, \secp}}$ outputs 0, our reduction simply aborts, which is equivalent to replacing the entire state with $\ket{\bot}$. This is therefore precisely the setting in which Lemma \ref{lemma:comp_to_stat} applies, which tells us that the states of $\calH_0$ and $\calH_{1, 1}$ right after step 2(f) are statistically negligibly close. Since the measurement of register $\sfA$ is done in both hybrids after step 2(f), this cannot increase the trace distance. We therefore conclude that a single execution of step 2 in $\calH_0$ and $\calH_{1, 1}$ result in negligibly close states, concluding the proof of Lemma \ref{lemma:hybrids_poly_swap_2bf}.
\end{proof}

\begin{lemma}\label{lemma:hybrids_poly_first_rewrite}
    $\TD(\calH_{1, 1}, \calH_{1, 2}) = 0$.
\end{lemma}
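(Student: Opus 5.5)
The claim is that $\calH_{1,1}$ and $\calH_{1,2}$ are observationally identical, so I would give an exact, operation-by-operation comparison rather than any approximation. In $\calH_{1,1}$, step 2(e) measures $\ketbra*{\Enc(q_1)^1}_{\sfR\sfE}$ and then discards $\sfR\sfE$. The case analysis is as follows.
\begin{itemize}
\item If the outcome is $0$, then step 2(f) discards $\sfO\sfZ\sfI$ as well. In $\calH_{1,2}$ the same measurement is made, and on outcome $0$ the registers $\sfR\sfE$ are discarded in 2(f) together with $\sfO\sfZ\sfI$. Both hybrids then proceed to the next iteration, or abort, with identical states. Note that $\sfR\sfE$ is never touched again on this branch.
\item If the outcome is $1$, then immediately after the measurement the $\sfR\sfE$ registers are in the pure product state $\ket*{\Enc(q_1)^1}_{\sfR\sfE}$ tensored with the residual state on $\sfO\sfZ\sfI$.
\end{itemize}

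The main point is that discarding a register in a known pure product state and later freshly re-preparing that same state is equivalent to keeping it. In $\calH_{1,1}$ the registers $\sfR\sfE$ are traced out at 2(e). At step 3(b) with $\rnd=1$, the reduction appends $\ket*{\Enc(q_{1,1})^1}_{\sfR_1\ldots\sfR_m\sfE_1\ldots\sfE_m}$, which is exactly $\ket*{\Enc(q_1)^1}$ since $q_{1,1}=q_1$ and all later coordinates encrypt $0$ (with the $\widehat{\sk}$ encryptions computed identically). In $\calH_{1,2}$ the registers are kept, already holding this same state, and 3(b) is skipped. Between 2(e) and 3(b), only the following steps occur:
\begin{itemize}
\item step 2(f), which does nothing on the accept branch;
\item exiting the loop;
\item step 3(a), which receives $q_1$ from the external verifier.
\end{itemize}
None of these act on $\sfR\sfE$. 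In $\calH_{1,1}$, $q_1$ is already fixed, since that hybrid uses the external verifier's randomness. Hence the joint state entering 3(c) is $\ket*{\Enc(q_1)^1}_{\sfR\sfE}\otimes\rho_{\sfO\sfZ\sfI}$ in both hybrids, with the same classical values $r_{\Ver}$, $\iter$, and transcript. Since the reduced state on the retained registers is a product with a pure state, tracing out and re-appending leaves the density matrix exactly unchanged.

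The only step needing care is bookkeeping rather than mathematics. I would check that the register naming and ordering match: the appended state in 3(b) must occupy the same subregisters $\sfR_\ell,\sfE_\ell$ with the same contents that step 2(b) of $\calH_{1,1}$ produced. I would also check that the implicit dependence on $\pk_{1,\rnd}$ described in the footnote does not create a discrepancy at $\rnd=1$. It does not, since nothing was measured yet. With this established, all subsequent steps are identical, and the output distributions coincide, giving trace distance $0$.
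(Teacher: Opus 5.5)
Your proof is correct and uses the same argument as the paper. On the accepting branch of step 2(e), the registers $\sfR\sfE$ are left in exactly the pure product state $\ket{\Enc(q_1)^1}$ that step 3(b) re-prepares at $\rnd=1$, so discarding and re-appending it is identical to keeping it; on the rejecting branch, both hybrids discard $\sfR\sfE$. Your extra bookkeeping (the reject branch, the fact that no intervening step touches $\sfR\sfE$, and that nothing has been measured yet at $\rnd=1$) makes explicit what the paper's short proof leaves implicit.
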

\begin{proof}[Proof of Lemma \ref{lemma:hybrids_poly_first_rewrite}]
    Note that in $\calH_{1, 1}$, we only continue to step 3 if $\ketbra*{\Enc(q_1)^1}$ accepts. When this projection accepts, step 3(b) reinitializes $\sfR\sfE$ to the \emph{same} state $\ket*{\Enc(q_1)^1}$. We can therefore omit step 3(b) and defer the discarding of registers $\sfR\sfE$ to step 2(e), as desired.
\end{proof}

We now argue that $\calH_{1, 2}$ and $\calH_{1, 3}$ are close in trace distance. Intuitively, this follows from the fact that $\calH_{1, 2}$ and $\calH_{1, 3}$ differ only in whether, after applying the QSVT algorithm, we explicitly check that the encryption registers have been returned to the clean state $\ket{\Enc(q_1)}$, and abort if this check fails. The key point is that this check is almost redundant. Before projecting onto the favorable-continuation subspace $\Pi_\Acc$, the state lies exactly in the clean-encryption subspace. In the joint singular-value decomposition of $\Pi_{\Acc}$ and $\ket{\Enc(q_1)}\bra{\Enc(q_1)}$, projecting onto $\Pi_{\Acc}$ suppresses components having small overlap with the clean-encryption subspace. Thus, unless the $\Pi_{\Acc}$-projection itself succeeds with very small probability, in which case both hybrids almost always abort, the normalized post-selected state has negligible weight on such low-overlap components. The QSVT algorithm then rotates essentially all of the remaining state back into the clean-encryption subspace. Consequently, the check in $\calH_{1, 2}$ fails only with very small probability and, by gentle measurement, conditioning on its success barely disturbs the state.
\begin{lemma}\label{lemma:hybrids_poly_qsvt}
    $\TD(\calH_{1, 2}, \calH_{1, 3}) \leq \mu+\negl(\secp) = \frac{\nu}{10km}+\negl(\secp)$.
\end{lemma}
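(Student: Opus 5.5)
I would prove Lemma~\ref{lemma:hybrids_poly_qsvt} one iteration of step~\ref{a1loop} at a time, with $r_{\Ver}=r^*_{\Ver}$ fixed. The two hybrids differ only in steps 2(e)--(f): in $\calH_{1,2}$ we project onto $\ket*{\Enc(q_1)^1}$ after the QSVT and abort or retry on failure, and in $\calH_{1,3}$ we skip this check. Write $\Pi_B := \ketbra*{\Enc(q_1)^1}_{\sfR\sfE}\otimes I$ and $\Pi_A := \Pi_{\Acc,1}$. The starting state $\ket{\Phi}:=\ket*{\Enc(q_1)^1}\ket{\psi}$ lies in $\im(\Pi_B)$. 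Decompose it in the Jordan basis as $\ket{\Phi}=\sum_j \beta_j\ket{w_j}$. After a successful $\Pi_A$-measurement, which happens with probability $p:=\sum_j|\beta_j|^2\varsigma_j^2$, the state is $\ket{\Phi'}=p^{-1/2}\sum_j\beta_j\varsigma_j\ket{v_j}$.

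The weight of $\ket{\Phi'}$ on components with $\varsigma_j<\sqrt{\delta}$ is
\[ p^{-1}\sum_{j:\varsigma_j<\sqrt\delta}|\beta_j|^2\varsigma_j^2 \le \delta/p. \]
Applying Corollary~\ref{cor:qsvt_threshold} with threshold $\sqrt\delta$ (I would adjust the QSVT parameter to $\sqrt\delta$ if needed, or else note that $\delta$ was chosen so that $\delta/p$ is small), the $\Pi_B$-check in step 2(e) fails with probability at most $\delta/p+2\eps$. By gentle measurement, the post-check state is then within trace distance $O(\sqrt{\delta/p+\eps})$ of the unchecked state.

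The key step is a case split on $p$. If $p<\mu/(2N)$, both hybrids reach this point only with probability $p$, so the branch contributes at most $p$ to the trace distance, whatever happens afterwards. If $p\ge\mu/(2N)$, the conditional disturbance is at most $\sqrt{\delta\cdot 2N/\mu+2\eps}$. With $\delta=\mu^2/2N^2$ this equals $\sqrt{\mu/N+2\eps}$, which is about $\sqrt{\mu/N}$.

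To aggregate over $N$ iterations, I would avoid summing conditional distances naively. Instead I would weight each iteration by the probability that it is reached and the $\Pi_A$-measurement succeeds. This works because once an iteration succeeds in $\calH_{1,3}$, the loop ends. The check in $\calH_{1,2}$ can only cause extra retries, and any retry happens with probability at most $\delta/p+2\eps$. Summing the per-iteration unconditional contributions $p\cdot\min\{1,\,O(\sqrt{\delta/p+\eps})\}\le O(\sqrt{\delta p})+\negl$, each iteration contributes at most $\sqrt\delta=\mu/(\sqrt2 N)$. Over $N$ iterations this gives a total of at most $\mu+\negl(\secp)$.

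I expect the main obstacle to be this bookkeeping. Gentle measurement naturally gives a square-root loss, and the choice $\delta=\mu^2/2N^2$ is exactly what is needed to absorb it, provided I bound the unconditional weight $p\sqrt{\delta/p}=\sqrt{p\delta}\le\sqrt\delta$ instead of conditioning on success. One also has to handle carefully the fact that a failed check in $\calH_{1,2}$ leads to a retry with fresh $r_{\Ver}$ rather than to a matching state. My plan is to charge that whole event to the trace distance; its unconditional probability is $p(\delta/p+2\eps)\le\delta+2\eps$ per iteration, which sums to well below $\mu$.
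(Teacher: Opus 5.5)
Your proof is correct. Its skeleton matches the paper's: fix $r_{\Ver}$, Jordan-decompose, bound the failure of the $\Pi_B$-check via Corollary~\ref{cor:qsvt_threshold}, apply gentle measurement, and sum over the $N$ iterations. The key estimate is organized differently, though.

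The paper uses that Theorem~\ref{thm:qsvt} thresholds on the singular values $\varsigma_j$ themselves, so the post-selected weight below threshold is at most $\delta^2/p$. A case split at $p=\mu/N$ then gives a conditional check-failure probability of at most $\mu^3/(4N^3)\le\mu^2/(4N^2)$, a gentle-measurement cost of $\mu/(2N)$, and a per-iteration distance of $\mu/N$. The retry issue is absorbed by the parallel-runs-plus-first-success view from Lemma~\ref{lemma:hybrids_poly_swap_2bf}, where a failed check is simply the $2(1-p_{q_1})$ term.

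You use the looser tail bound $\delta/p$. This is already valid for the circuit as defined, since $\{\varsigma_j<\delta\}\subseteq\{\varsigma_j<\sqrt\delta\}$. So you should not change the QSVT parameter, and in fact may not, since the lemma concerns the fixed reduction. You compensate by weighting unconditionally. With $f$ the conditional failure probability, the per-iteration distance is at most $p(f+\sqrt f)\le pf+\sqrt{pf}\le\delta+\sqrt\delta+\negl(\secp)$, so your case split on $p$ is unnecessary. Summing gives $N(\sqrt\delta+\delta)+\negl(\secp)=\mu/\sqrt2+\mu^2/(2N)+\negl(\secp)\le\mu+\negl(\secp)$.

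That final inequality holds only with the sharp pure-state bound $\TD\le\sqrt f$. The constant hidden in your $O(\cdot)$ matters: a factor-$2$ gentle-measurement bound would give about $\sqrt2\,\mu$.

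What your route buys is robustness. Unconditional weighting does not care whether the QSVT threshold is on $\varsigma_j$ or $\varsigma_j^2$. Combined with the sharp $\delta^2/p$ tail, it gives a total of about $N\delta=\mu^2/(2N)$, far below what the lemma needs. The paper's conditional case split is simpler to state per $r_{\Ver}$ but looser.
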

\begin{proof}[Proof of Lemma \ref{lemma:hybrids_poly_qsvt}]
We couple the two hybrids based on the external verifier's randomness $r_{\Query, 1}$ as well as the choice of $r_{\Ver}$. 

As in Lemma \ref{lemma:hybrids_poly_swap_2bf}, it suffices to argue that the trace distance between a single execution of step 2 in $\calH_{1, 2}$ and $\calH_{1, 3}$ is bounded by $\mu/N+\negl(\secp)$, since this results in a total trace distance of $(\frac{\mu}{N} + \negl(\secp)) \cdot N = \mu + \negl(\secp)$.

We argue that this trace distance can be bounded regardless of the choice of $r_{\Ver}$. Fix any such choice and consider the state $\ket*{\psi_{r_{\Ver}}}$ which remains at the end of step 2(b). If      \[ p_{\Acc, 1} := \|\Pi_{\Acc, 1} \ket*{\psi_{r_{\Ver}}} \otimes \ket*{\Enc(q_1)^1}\|^2 < \mu/N, \]
then both $\calH_{1, 2}$ and $\calH_{1, 3}$ abort/end early in step 2(c) with probability at least $1-\mu/N$, and thus are obviously $(\mu/N)$-close in trace distance.

We therefore assume that $p_{\Acc, 1} \geq \mu/N$ for the remainder of our analysis. We now consider the singular values (which we denote by $\varsigma_j$) of $\ket*{\psi_{r_{\Ver}}} \otimes \ket*{\Enc(q_1)^1}$ with respect to $\Pi_A = \Pi_{\Acc, 1}$ and $\Pi_B = \ketbra*{\Enc(q_1)^1}$.

First note that if we write 
    \[ \ket*{\psi_{r_{\Ver}}} \otimes \ket*{\Enc(q_1)^1} = \sum_j \alpha_j \ket*{w_j}, \]
we have that
    \[ \ket{\upsilon_1} := \frac{\Pi_{\Acc, 1} \ket*{\psi_{r_{\Ver}}} \otimes \ket*{\Enc(q_1)^1}}{\|\Pi_{\Acc, 1} \ket*{\psi_{r_{\Ver}}} \otimes \ket*{\Enc(q_1)^1}\|} := \sum_j \frac{\alpha_j\varsigma_j}{\sqrt{p_{\Acc, 1}}} \ket*{v_j}. \]
Subsequently, we know that
\begin{align*}
    \sum_{j: \varsigma_j < \delta} \frac{|\alpha_j|^2\varsigma_j^2}{p_{\Acc, 1}} \leq \sum_{j: \varsigma_j < \delta} \frac{|\alpha_j|^2 \delta^2}{p_{\Acc, 1}} \leq \frac{\delta^2}{p_{\Acc, 1}} \leq \frac{\mu^4/4N^4}{\mu/N} = \frac{\mu^3}{4N^3} \leq \frac{\mu^2}{4N^2}.
\end{align*}
By Theorem \ref{thm:qsvt}, we know that $C^{\ketbra*{\Enc(q_1)^1}, \Pi_{\Acc, 1}}_{\eps, \delta} \ket{\upsilon_1} \ket{0} = \ket{\sigma_1} \ket{0}$ for some pure state $\ket{\sigma_1}$.

We now define the probability
    \[ p_{q_1} := \|(\ketbra*{\Enc(q_1)^1} \otimes I_{\sfO}) \ket{\sigma_1} \ket{0}_{\sfO}\|^2_2 \]
and state
    \[ \ket*{\sigma'_1} \otimes \ket{0} := \frac{(\ketbra*{\Enc(q_1)^1} \otimes I) \ket{\sigma_1} \ket{0}}{\|(\ketbra*{\Enc(q_1)^1} \otimes I) \ket{\sigma_1} \ket{0}\|}. \]
As $\ket{\upsilon_1} \in \im(\Pi_{\Acc, 1})$, we can apply Corollary \ref{cor:qsvt_threshold} to see that
    \[ p_{q_1} \geq 1-\frac{\mu^2}{4N^2}-\negl(\secp). \]
By the Gentle Measurement Lemma, we know that
    \[ \TD\left(\ket*{\sigma'_1} \ket{0}, \ket{\sigma_1} \ket{0}\right) \leq \frac{\mu}{2N}+\negl(\secp). \]
Our trace distance is therefore at most
\begin{align*}
    &\TD\left((1-p_{q_1}) \cdot \ketbra{0} \otimes \ketbra{\bot} + p_{q_1} \cdot \ketbra{1} \otimes \ketbra*{\sigma'_1, 0}, \ketbra{1} \otimes \ketbra*{\sigma_1, 0}\right) \\
    \leq \,\, &2(1-p_{q_1}) + p_{q_1} \cdot \TD(\ketbra*{\sigma'_1, 0}, \ketbra*{\sigma_1, 0}) \\
    \leq \,\, &2(1-p_{q_1}) + p_{q_1} \cdot \left[\frac{\mu}{2N} + \negl(\secp)\right] \\
    \leq \,\, &2\left[\frac{\mu^2}{4N^2}+\negl(\secp)\right] + \left[\frac{\mu}{2N}+\negl(\secp)\right] \\
    \leq \,\, &\frac{\mu}{N}+\negl(\secp),
\end{align*}
as desired.
\end{proof}

\begin{lemma}\label{lemma:hybrids_poly_second_rewrite}
    $\TD(\calH_{1, 3}, \calH_{1, 4}) = 0$.
\end{lemma}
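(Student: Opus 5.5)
We want to show that $\calH_{1, 3}$ and $\calH_{1, 4}$ are identical. The difference is that $\calH_{1, 4}$ drops the QSVT circuit of step 2(d) and also drops the inverse QSVT of step 3(c) in round $\rnd = 1$. The plan is to show that, once steps 2(e)--(f) and step 3(b) for $\rnd=1$ are gone, these two circuits are applied back-to-back with respect to the same pair of projections. They therefore cancel exactly.

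First, write out the operations in $\calH_{1, 3}$ between the measurement of $\Pi_{\Acc, 1}$ in step 2(c) and the measurement of $\Pi_{\Acc, 1}$ in step 3(d) for $\rnd = 1$. By the definition of $\calH_{1, 2}$ and $\calH_{1, 3}$, three things happen in between.
\begin{itemize}
\item Step 2(e)--(f) is removed, so no projection is measured and no register is discarded.
\item Step 3(b) for $\rnd=1$ is skipped. This is the append of a fresh $\ket*{\Enc(q_1)^1}$, and it is skipped because the $\sfR\sfE$ registers are retained.
\item Step 3(a) only receives the classical message $q_1$. Since hybrid $\calH_{1,1}$, this value has already been used (coupled through $r_{\Query,1}$), so receiving it acts as identity on the quantum registers.
\end{itemize}
The quantum operations in this window are therefore exactly $C^{\ketbra*{\Enc(q_1)^1}, \Pi_{\Acc, 1}}_{\eps, \delta}$ with ancilla $\sfO$, followed immediately by $(C^{\ketbra*{\Enc(q_1)^1}, \Pi_{\Acc, 1}}_{\eps, \delta})^{\dagger}$ on the same registers. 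The key point is that in $\calH_{1,1}$ onward, step 2 already uses $\ket*{\Enc(q_1)^1}$, so the projector $\ketbra*{\Enc(q_1)^1}$ in 2(d) matches the one in 3(c) for $\rnd = 1$. Likewise $\Pi_{\Acc,1}$ is the same operator in both places, since it is built from the same fixed $r_{\Ver}$ and the same prover unitaries $U_{\Ver,\ell}$.

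Second, because $C$ is a unitary circuit (its gates are $C_{\Pi}NOT$ gates and single-qubit gates), $C^{\dagger} C = I$ on all of $\sfO\sfR\sfE\sfZ\sfI$. The composition is the identity regardless of the ancilla's contents, so we do not even need property 2 of Theorem \ref{thm:qsvt}. The state entering step 3(d) is therefore the same in both hybrids, namely the post-$\Pi_{\Acc,1}$ state from step 2(c). The abort/continue branches of the loop in step 2 also agree, since they depend only on the step 2(c) outcome once 2(e)--(f) are gone. Hence the output distributions coincide and the trace distance is $0$.

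The main thing to verify carefully, which I expect to be the only real obstacle, is bookkeeping. We need to confirm that no operation in $\calH_{1,3}$ is interleaved between the two circuits. In particular, the deferred-measurement rewrite used in Lemma \ref{lemma:hybrids_poly_swap_2bf} (the ancilla $\sfA$) must be reverted to an actual measurement of $\Pi_{\Acc,1}$ in 2(c), or equivalently we must note that measuring $\sfA$ commutes with $C$. We also need to check that the later iterations $s > \iter$ of step 2, which do not occur on success, do not interfere.
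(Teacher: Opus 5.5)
Your proposal is correct and matches the paper's proof. Once steps 2(e)--(f) and step 3(b) for $\rnd = 1$ are gone, step 2(d) applies $C^{\ketbra*{\Enc(q_1)^1}, \Pi_{\Acc, 1}}_{\eps, \delta}$, only the classical step 3(a) follows, and step 3(c) for $\rnd = 1$ then applies its adjoint, so the two cancel exactly; your extra bookkeeping (same projectors from $\calH_{1,1}$ on, the deferred-measurement device being only a proof tool) is sound.
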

\begin{proof}[Proof of Lemma \ref{lemma:hybrids_poly_second_rewrite}]
    This follows directly from the fact that steps 2(d) and 3(c) are conjugates of each other and are applied in succession (step 3(a) is classical and has no effect on the quantum state of the reduction).
\end{proof}

At this stage, hybrid $\calH_{1, 4}$ looks as follows:
\algo{Hybrid $\calH_{1, 4}$}{poly-hybrid-four}{
$\calA$ begins with state $\ket{\psi}^{\otimes N}$, and runs as follows:
\begin{enumerate}
    \item Set $\eps = 2^{-\secp}$, $\mu = \nu/10km$, $\delta =  \mu^2/2N^2$, $\tau_0 := \emptyset$ and $\widehat{\sk}_{[k], 0} := \emptyset$. Sample $r_{\Query, 1}$.
    \item For $s = 1$ to $N$:
    \begin{enumerate}
        \item Sample the following sets of randomness: 
        \begin{align*}
            \{r_{\Gen, j, \ell}, r_{\Enc, j, \ell}\}_{j \neq 1, \ell \in [m]}, \{r'_{\Enc, j, \ell}\}_{j \neq 1, 2 \leq \ell \leq m}, \{r_{\Query, j}\}_{j \neq 1}.
        \end{align*}
        \item Append the state $\ket*{\Enc(q_1)^1}_{\sfR\sfE}$, where $q_{1, 1} = q_1$ is derived from $r_{\Query, 1}$.
        \item Measure $\Pi_{\Acc, 1}$; if the outcome is 0, return to the start of step 2 if $s \neq N$ and abort/output $\bot$ if $s = N$.
        \item Measure the projection $\Pi_{\Acc, 1}$; if the outcome is 0 then abort/output $\bot$.
        
        Otherwise, apply $U_{\Ver, 1}$ and measure registers $\sfR_1$, $\sfE_1$, and $\sfZ_1$ to get measurement outcomes $(r_{\Gen, 1, 1}, r_{\Enc, 1, 1})$, $(\pk_{1, 1}, \widehat{q}_{1, 1})$, and $\widehat{a}_{[k], 1}$.
        
        Set $\tau_1 = \tau_0 || \pk_{[k], 1} || \widehat{q}_{[k], 1} || \widehat{a}_{[k], 1}$.
        \item Apply the circuit $C^{\ketbra*{\Enc(0)^2}, \Pi_{\Acc, 2}}_{\eps, \delta}$ with ancilla qubit $\ket{0}_{\sfO}$.
        \item Measure the projection $\ketbra*{\Enc(0)^2}$ before discarding registers $\sfR_2 \ldots \sfR_m$ and $\sfE_2 \ldots \sfE_m$.
        \item If the outcome of the projection is 0, discard registers $\sfO\sfZ_2\ldots\sfZ_m\sfI$ and abort/output $\bot$. Otherwise, continue to step 3.
    \end{enumerate}
    \item For $\rnd = 2$ to $m$:
    \begin{enumerate}
        \item Compute the $(\rnd)$-th query $q_{1, \rnd} = q_{\rnd}$ using $r_{\Query, 1}$, the secret keys of rounds $1$ through $\rnd-1$, and $\tau_{\rnd-1}$.
        \item Append the state $\ket*{\Enc(q_{1, \rnd})^{\rnd}}_{\sfR_{\rnd}\ldots\sfR_m\sfE_{\rnd}\ldots\sfE_m}$.
        \item Apply the circuit $(C^{\ketbra*{\Enc(q_{1, \rnd})^{\rnd}}, \Pi_{\Acc, \rnd}}_{\eps, \delta})^{\dagger}$ with ancilla register $\sfO$.
        \item Measure the projection $\Pi_{\Acc, \rnd}$; if the outcome is 0 then abort/output $\bot$.
        
        Otherwise, apply $U_{\Ver, \rnd}$ and measure registers $\sfR_{\rnd}$, $\sfE_{\rnd}$, and $\sfZ_{\rnd}$ to get measurement outcomes $(r_{\Gen, 1, \rnd}, r_{\Enc, 1, \rnd}, r'_{\Enc, 1, \rnd})$, $(\pk_{1, \rnd}, \widehat{\sk}_{1, \rnd-1}, \widehat{q}_{1, \rnd})$, and $\widehat{a}_{[k], \rnd}$.

        Set $\tau_{\rnd} = \tau_{\rnd-1} || \pk_{[k], \rnd} || \widehat{\sk}_{[k], \rnd-1} || \widehat{q}_{[k], \rnd} || \widehat{a}_{[k], \rnd}$. If $\rnd = m$, skip to step 4.
        \item Apply the circuit $C^{\ketbra*{\Enc(0)^{\rnd+1}}, \Pi_{\Acc, \rnd+1}}_{\eps, \delta}$ with ancilla register $\sfO$.
        \item Measure the projection $\ketbra*{\Enc(0)^{\rnd+1}}$ before discarding registers $\sfR_{\rnd+1} \ldots \sfR_m$ and $\sfE_{\rnd+1} \ldots \sfE_m$.
        \item If the outcome of the projection is 0, discard registers $\sfO\sfZ_{\rnd+1}\ldots\sfZ_m\sfI$ and abort/output $\bot$.
    \end{enumerate}
    \item Output $\tau_m$.
\end{enumerate}
}
We observe that if step 2(c) succeeds, then the measurement in step 2(d) must output 1, so we can remove this measurement. The application of $U_{\Ver, 1}$ in step 2(d) undoes the outer $U_{\Ver, 1}^{\dagger}$ in the projection $\Pi_{\Acc, 1}$ of step 2(c), so that the net effect in step 2(c) is to first apply $U_{\Ver, 1}$ before measuring the projection $\Pi_{\Acc, 2}$. But since in step 2(d) we measure registers $\sfR_1\sfE_1\sfZ_1$ immediately after and $\Pi_{\Acc, 2}$ is controlled on (or does not touch) these registers, we can commute their measurements. Thus, $\calH_{1, 4}$ is equivalent to the following hybrid $\calH_{1, 4'}$ (with the modified steps highlighted in blue):
\algo{Hybrid $\calH_{1, 4'}$}{poly-hybrid-four-prime}{
$\calA$ begins with state $\ket{\psi}^{\otimes N}$, and runs as follows:
\begin{enumerate}
    \item Set $\eps = 2^{-\secp}$, $\mu = \nu/10km$, $\delta =  \mu^2/2N^2$, $\tau_0 := \emptyset$ and $\widehat{\sk}_{[k], 0} := \emptyset$. Sample $r_{\Query, 1}$.
    \item For $s = 1$ to $N$:
    \begin{enumerate}
        \item Sample the following sets of randomness: 
        \begin{align*}
            \{r_{\Gen, j, \ell}, r_{\Enc, j, \ell}\}_{j \neq 1, \ell \in [m]}, \{r'_{\Enc, j, \ell}\}_{j \neq 1, 2 \leq \ell \leq m}, \{r_{\Query, j}\}_{j \neq 1}.
        \end{align*}
        \item Append the state $\ket*{\Enc(q_1)^1}_{\sfR\sfE}$, where $q_{1, 1} = q_1$ is derived from $r_{\Query, 1}$. 
        \item \textbf{\color{blue} Apply $U_{\Ver, 1}$ and measure registers $\sfR_1$, $\sfE_1$, and $\sfZ_1$ to get measurement outcomes $(r_{\Gen, 1, 1}, r_{\Enc, 1, 1})$, $(\pk_{1, 1}, \widehat{q}_{1, 1})$, and $\widehat{a}_{[k], 1}$. Set $\tau_1 = \tau_0 || \pk_{[k], 1} || \widehat{q}_{[k], 1} || \widehat{a}_{[k], 1}$.}
        \item \textbf{\color{blue} Measure $\Pi_{\Acc, 2}$; if the outcome is 0, return to the start of step 2 if $s \neq N$ and abort/output $\bot$ if $s = N$.}
        \item Apply the circuit $C^{\ketbra*{\Enc(0)^2}, \Pi_{\Acc, 2}}_{\eps, \delta}$ with ancilla register $\sfO$.
        \item Measure the projection $\ketbra*{\Enc(0)^2}$ before discarding registers $\sfR_2 \ldots \sfR_m$ and $\sfE_2 \ldots \sfE_m$.
        \item If the outcome of the projection is 0, discard registers $\sfO\sfZ_2\ldots\sfZ_m\sfI$ and abort/output $\bot$. Otherwise, continue to step 3.
    \end{enumerate}
    \item For $\rnd = 2$ to $m$:
    \begin{enumerate}
        \item Compute the $(\rnd)$-th query $q_{1, \rnd} = q_{\rnd}$ using $r_{\Query, 1}$, the secret keys of rounds $1$ through $\rnd-1$, and $\tau_{\rnd-1}$.
        \item Append the state $\ket*{\Enc(q_{1, \rnd})^{\rnd}}_{\sfR_{\rnd}\ldots\sfR_m\sfE_{\rnd}\ldots\sfE_m}$.
        \item Apply the circuit $(C^{\ketbra*{\Enc(q_{1, \rnd})^{\rnd}}, \Pi_{\Acc, \rnd}}_{\eps, \delta})^{\dagger}$ with ancilla register $\sfO$.
        \item Measure the projection $\Pi_{\Acc, \rnd}$; if the outcome is 0 then abort/output $\bot$.
        
        Otherwise, apply $U_{\Ver, \rnd}$ and measure registers $\sfR_{\rnd}$, $\sfE_{\rnd}$, and $\sfZ_{\rnd}$ to get measurement outcomes $(r_{\Gen, 1, \rnd}, r_{\Enc, 1, \rnd}, r'_{\Enc, 1, \rnd})$, $(\pk_{1, \rnd}, \widehat{\sk}_{1, \rnd-1}, \widehat{q}_{1, \rnd})$, and $\widehat{a}_{[k], \rnd}$.

        Set $\tau_{\rnd} = \tau_{\rnd-1} || \pk_{[k], \rnd} || \widehat{\sk}_{[k], \rnd-1} || \widehat{q}_{[k], \rnd} || \widehat{a}_{[k], \rnd}$. If $\rnd = m$, skip to step 4.
        \item Apply the circuit $C^{\ketbra*{\Enc(0)^{\rnd+1}}, \Pi_{\Acc, \rnd+1}}_{\eps, \delta}$ with ancilla register $\sfO$.
        \item Measure the projection $\ketbra*{\Enc(0)^{\rnd+1}}$ before discarding registers $\sfR_{\rnd+1} \ldots \sfR_m$ and $\sfE_{\rnd+1} \ldots \sfE_m$.
        \item If the outcome of the projection is 0, discard registers $\sfO\sfZ_{\rnd+1}\ldots\sfZ_m\sfI$ and abort/output $\bot$.
    \end{enumerate}
    \item Output $\tau_m$.
\end{enumerate}
}
Now note that $U_{\Ver, 1}$ is controlled on register $\sfE_1$ and applies to registers $\sfZ_1$ and $\sfI$, so we can also commute the measurements of registers $\sfR_1$ and $\sfE_1$ to immediately after step 1(b). But doing so means we immediately measure registers $\sfR_1\sfE_1$ of $\ket*{\Enc(q_1)^1}_{\sfR\sfE}$, which has the net effect of simply sampling random $(r_{\Gen, 1, 1}, r_{\Enc, 1, 1})$ and computing $(\pk_{1, 1}, \sk_{1, 1}) = \QHE.\Gen(1^{\secp}; r_{\Gen, 1, 1})$ and $\widehat{q_1}^1 = \QHE.\Enc(\pk_{1, 1}, q_{1, 1}; r_{\Enc, 1, 1})$ before applying $U_1(\pk_{[k], 1}, \widehat{q}_{[k], 1})$ to registers $\sfZ_1\sfI$. In total, $\calH_{1, 4'}$ is therefore equivalent to the following hybrid $\calH_1$ (with the modified steps highlighted in green):
\algo{Hybrid $\calH_1$}{poly-hybrid-one}{
$\calA$ begins with state $\ket{\psi}^{\otimes N}$, and runs as follows:
\begin{enumerate}
    \item Set $\eps = 2^{-\secp}$, $\mu = \nu/10km$, $\delta =  \mu^2/2N^2$, $\tau_0 := \emptyset$ and $\widehat{\sk}_{[k], 0} := \emptyset$. Sample $r_{\Query, 1}$.
    \item For $s = 1$ to $N$:
    \begin{enumerate}
        \item Sample the following sets of randomness: 
        \begin{align*}
            \{r_{\Gen, j, \ell}, r_{\Enc, j, \ell}\}_{j \neq 1, \ell \in [m]}, \{r'_{\Enc, j, \ell}\}_{j \neq 1, 2 \leq \ell \leq m}, \{r_{\Query, j}\}_{j \neq 1}.
        \end{align*}
        \item \textbf{\color{ForestGreen} Sample $r_{\Gen, 1, 1}$ and $r_{\Enc, 1, 1}$ before computing $(\pk_{1, 1}, \sk_{1, 1}) = \QHE.\Gen(1^{\secp}; r_{\Gen, 1, 1})$ and $\widehat{q_1}^1 = \QHE.\Enc(\pk_{1, 1}, q_{1, 1}; r_{\Enc, 1, 1})$.}
        \item \textbf{\color{ForestGreen} Apply $U_1(\pk_{[k], 1}, \widehat{q}_{[k], 1})$ on registers $\sfZ_1\sfI$ and measure register $\sfZ_1$ to get measurement outcome $\widehat{a}_{[k], 1}$. Set $\tau_1 = \tau_0 || \pk_{[k], 1} || \widehat{q}_{[k], 1} || \widehat{a}_{[k], 1}$.}
        \item \textbf{\color{ForestGreen} Append the state $\ket*{\Enc(0)^2}_{\sfR_2\ldots\sfR_m\sfE_2\ldots\sfE_m}$.}
        \item Measure $\Pi_{\Acc, 2}$; if the outcome is 0, return to the start of step 2 if $s \neq N$ and abort/output $\bot$ if $s = N$.
        \item Apply the circuit $C^{\ketbra*{\Enc(0)^2}, \Pi_{\Acc, 2}}_{\eps, \delta}$ with ancilla register $\sfO$.
        \item Measure the projection $\ketbra*{\Enc(0)^2}$ before discarding registers $\sfR_2 \ldots \sfR_m$ and $\sfE_2 \ldots \sfE_m$.
        \item If the outcome of the projection is 0, discard registers $\sfO\sfZ_2\ldots\sfZ_m\sfI$ and abort/output $\bot$. Otherwise, continue to step 3.
    \end{enumerate}
    \item For $\rnd = 2$ to $m$:
    \begin{enumerate}
        \item Compute the $(\rnd)$-th query $q_{1, \rnd} = q_{\rnd}$ using $r_{\Query, 1}$, the secret keys of rounds $1$ through $\rnd-1$, and $\tau_{\rnd-1}$.
        \item Append the state $\ket*{\Enc(q_{1, \rnd})^{\rnd}}_{\sfR_{\rnd}\ldots\sfR_m\sfE_{\rnd}\ldots\sfE_m}$.
        \item Apply the circuit $(C^{\ketbra*{\Enc(q_{1, \rnd})^{\rnd}}, \Pi_{\Acc, \rnd}}_{\eps, \delta})^{\dagger}$ with ancilla register $\sfO$.
        \item Measure the projection $\Pi_{\Acc, \rnd}$; if the outcome is 0 then abort/output $\bot$.
        
        Otherwise, apply $U_{\Ver, \rnd}$ and measure registers $\sfR_{\rnd}$, $\sfE_{\rnd}$, and $\sfZ_{\rnd}$ to get measurement outcomes $(r_{\Gen, 1, \rnd}, r_{\Enc, 1, \rnd}, r'_{\Enc, 1, \rnd})$, $(\pk_{1, \rnd}, \widehat{\sk}_{1, \rnd-1}, \widehat{q}_{1, \rnd})$, and $\widehat{a}_{[k], \rnd}$.

        Set $\tau_{\rnd} = \tau_{\rnd-1} || \pk_{[k], \rnd} || \widehat{\sk}_{[k], \rnd-1} || \widehat{q}_{[k], \rnd} || \widehat{a}_{[k], \rnd}$. If $\rnd = m$, skip to step 4.
        \item Apply the circuit $C^{\ketbra*{\Enc(0)^{\rnd+1}}, \Pi_{\Acc, \rnd+1}}_{\eps, \delta}$ with ancilla register $\sfO$.
        \item Measure the projection $\ketbra*{\Enc(0)^{\rnd+1}}$ before discarding registers $\sfR_{\rnd+1} \ldots \sfR_m$ and $\sfE_{\rnd+1} \ldots \sfE_m$.
        \item If the outcome of the projection is 0, discard registers $\sfO\sfZ_{\rnd+1}\ldots\sfZ_m\sfI$ and abort/output $\bot$.
    \end{enumerate}
    \item Output $\tau_m$.
\end{enumerate}
}
We conclude that by Lemmas \ref{lemma:hybrids_poly_swap_2bf}, \ref{lemma:hybrids_poly_first_rewrite}, \ref{lemma:hybrids_poly_qsvt}, and
\ref{lemma:hybrids_poly_second_rewrite}, 
\begin{align*}
    \TD(\calH_0, \calH_1) &\leq \TD(\calH_0, \calH_{1, 1}) + \sum_{i = 1}^3 \TD(\calH_{1, i}, \calH_{1, i+1}) + \TD(\calH_{1, 4}, \calH_1) \\
    &\leq \negl(\secp) + 0 + \frac{\nu}{10km} + \negl(\secp) + 0 + 0 \\
    &\leq \frac{\nu}{10km}+\negl(\secp).
\end{align*}
But note that we can now run a nearly identical argument for the second round of queries and responses, and so on! Repeating this argument until $\rnd = m$ gives us the hybrid $\calH_m$, where 
    \[ \TD(\calH_0, \calH_m) \leq \sum_{i = 1}^{m} \TD(\calH_{i-1}, \calH_i) \leq \frac{\nu}{10k}+\negl(\secp), \]
since $m = \poly(\secp)$:
\algo{Hybrid $\calH_m$}{poly-hybrid-last}{
$\calA$ begins with state $\ket{\psi}^{\otimes N}$, and runs as follows:
\begin{enumerate}
    \item Set $\eps = 2^{-\secp}$, $\mu = \nu/10km$, $\delta =  \mu^2/2N^2$, $\tau_0 := \emptyset$ and $\widehat{\sk}_{[k], 0} := \emptyset$. Sample $r_{\Query, 1}$.
    \item For $s = 1$ to $N$:
    \begin{enumerate}
        \item Sample the following sets of randomness: 
        \begin{align*}
            \{r_{\Gen, j, \ell}, r_{\Enc, j, \ell}\}_{j, \ell \in [m]}, \{r'_{\Enc, j, \ell}\}_{j, 2 \leq \ell \leq m}, \{r_{\Query, j}\}_{j \neq 1}.
        \end{align*}
        \item For $\rnd = 1$ to $m$:
        \begin{enumerate}
            \item Apply $U_\rnd$ on registers $\sfZ_\rnd\sfI$ and measure register $\sfZ_\rnd$ to get measurement outcome $\widehat{a}_{[k], \rnd}$. Set $\tau_\rnd = \tau_{\rnd-1} || \pk_{[k], \rnd} || \widehat{\sk}_{[k], \rnd-1} || \widehat{q}_{[k], \rnd} || \widehat{a}_{[k], \rnd}$.
        \end{enumerate}
        \item If $\sum_{i \neq 1} V((\tau_m)_i, r_{\Ver, i}) \neq t'-1$, return to the start of step 2 if $s \neq N$ and abort/output $\bot$ if $s = N$.
    \end{enumerate}
    \item Output $\tau_m$.
\end{enumerate}
}
\noindent
Thus, $\calH_m$ can be described as follows:
\begin{enumerate}
    \item Sample $r_{\Query, 1}$.
    \item For $s = 1$ to $N$, run a random execution of $\calB$ interacting with $\widehat{V}^{k, t}$ (with $r_{\Query, 1}$ hardwired as the randomness of the first subverifier underneath the $\QHE$). Take the first transcript $\tau$ (if it exists) such that $\sum_{i \neq 1} V(\tau_i, r_{\Ver, i}) = t'-1$ and output $\tau$. If no transcript satisfies this condition after $s = N$, abort and output $\bot$.
\end{enumerate}
It therefore follows that
\begin{align*}
    \Pr[\langle \calA, V \rangle(1^\secp) = 1] &= \Pr_{\tau \gets \calH_0}[\Accept_1(\tau) = 1] \\
    &\geq \Pr_{\tau \gets \calH_m}[\Accept_1(\tau) = 1] - \sum_{j = 1}^{m} \TD(\calH_{j-1}, \calH_j) \\
    &\geq \Pr_{\tau \gets \calH_m}[\Accept_1(\tau) = 1] - \frac{\nu}{10k} - \negl(\secp).
\end{align*}

It thus remains to give a lower bound on the probability that the reduction in $\calH_m$ succeeds.
\begin{lemma}\label{lemma:poly_tight_final_prob}
    $\Pr_{\tau \gets \calH_m}[\Accept_1(\tau) = 1] \geq \eps+\frac{3\nu}{5k}$.
\end{lemma}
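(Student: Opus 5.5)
The plan is to analyze $\calH_m$ directly, since it is now purely classical rejection sampling. Fix the external verifier's query randomness $r_{\Query,1}=r^{*}$; all the remaining randomness is resampled in every iteration. For an interaction of $\calB$ with $\widehat{V}^{k',t'}$ conditioned on $r_{\Query,1}=r^{*}$, I will define
\[ a(r^{*}) := \Pr\Big[\textstyle\sum_{j\neq 1}V_j = t'-1 \,\Big|\, r^{*}\Big], \qquad b(r^{*}) := \Pr\Big[\textstyle\sum_{j\neq 1}V_j = t'-1 \wedge V_1 = 1 \,\Big|\, r^{*}\Big]. \]
The $N$ iterations of step 2 are i.i.d. given $r^{*}$, so the first transcript accepted by the filter has exactly the conditional distribution given $\sum_{j\neq1}V_j=t'-1$. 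Hence
\[ \Pr_{\calH_m}[\Accept_1 \mid r^{*}] = \big(1-(1-a(r^{*}))^N\big)\,\frac{b(r^{*})}{a(r^{*})}. \]
With $N=20k\secp/\nu$, whenever $a(r^{*})\geq \nu/20k$ the abort term is at most $e^{-\secp}$. Values of $r^{*}$ with smaller $a$ can only affect the bound by $O(\nu/k)$, because $b\le a$ there.

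Next I will relate $a$ and $b$ to the quantities controlled in the hypotheses of Theorem~\ref{thm:poly_round_tight}. Write $P_{\geq s}(r^{*})$ for $\Pr[\sum_{j\neq1}V_j\geq s\mid r^{*}]$. Then $a = P_{\geq t'-1}-P_{\geq t'}$. Moreover, the event that $\calB$ wins decomposes as ``$\sum_{j\neq 1}V_j\geq t'$'' or ``$\sum_{j\neq 1}V_j=t'-1$ and $V_1=1$'', so
\[ b \geq \Pr[\text{win}\mid r^{*}] - P_{\geq t'}. \]
This gives
\[ b-\eps a \;\geq\; \Pr[\text{win}\mid r^{*}] - (1-\eps)P_{\geq t'} - \eps P_{\geq t'-1}. \]
I will average over $r^{*}$, apply the two conditional upper bounds on the good set of $r^{*}$, and charge the bad set, which has mass at most $\nu/10k$. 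Combined with the assumed win probability $B(k',t',\eps)+\frac{10k'-1}{10k}\nu$ and the binomial identity
\[ B(k',t',\eps) = (1-\eps)B(k'-1,t',\eps)+\eps B(k'-1,t'-1,\eps), \]
the binomial terms cancel exactly and leave
\[ \bbE_{r^{*}}[b-\eps a] \geq \frac{(10k'-1)-(10k'-9)-1}{10k}\,\nu = \frac{7\nu}{10k}. \]

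The main obstacle is passing from this additive bound on $\bbE[b-\eps a]$ to a bound on $\bbE[b/a]$. When $b-\eps a\geq 0$ we have $b/a-\eps \geq b-\eps a$ since $a\leq 1$, but on $r^{*}$ where $b<\eps a$, dividing by a small $a$ can amplify the deficit. My first attempt is to use the same hypotheses to bound the negative part pointwise. On the good set, $a\leq P_{\geq t'-1}\leq B(k'-1,t'-1,\eps)+\nu$, which controls the denominator from above. I would then split $r^{*}$ according to the sign of $b-\eps a$ and show that the negative part contributes at most an additional $\nu/10k$ beyond what is already charged. If that fails, the fallback is to observe that the quantity the reduction actually needs is a win-rate estimate per fixed $r^{*}$ that is linear in $(b,a)$. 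Concretely, I would rewrite the final success probability as $\bbE[b]/\bbE[a]$-type ratios by treating $r^{*}$ as sampled jointly with the other coordinates, and apply the bound $\bbE[b]-\eps\bbE[a]\geq 7\nu/10k$ together with $\bbE[a]\leq 1$.

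Either way, after subtracting the abort/small-$a$ loss (at most $\nu/10k$) the target is
\[ \Pr_{\calH_m}[\Accept_1]\geq \eps+\frac{7\nu}{10k}-\frac{\nu}{10k} = \eps+\frac{3\nu}{5k}. \]
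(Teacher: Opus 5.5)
Your setup is correct: the formula $\Pr_{\calH_m}[\Accept_1\mid r^{*}]=(1-(1-a)^N)\,b/a$ and the averaged bound $\bbE_{r^{*}}[b-\eps a]\ge 7\nu/10k$ both hold. The proof breaks at exactly the step you flag, and neither of your routes closes it.

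\textbf{The fallback does not apply.} In $\calH_m$ the coins $r_{\Query,1}$ are the external verifier's private coins. They are sampled once and held fixed across all $N$ iterations. So the success probability is $\bbE_{r^{*}}[\eta(r^{*})]$ with $\eta\approx b/a$ pointwise, not a ratio of expectations $\bbE[b]/\bbE[a]$.

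\textbf{Averaging first loses the information you need.} Take mass $0.3$ on coins with $a=b=1$ and the rest on coins with $b=0$ and tiny $a$, with $\eps=1/2$. Then $\bbE[b-\eps a]\approx 0.15\ge 7\gamma$, but $\bbE[b/a]=0.3<\eps$. What excludes such configurations is the pair of conditional hypotheses applied \emph{pointwise}, and that is destroyed once you average.

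\textbf{The first route lacks a mechanism.} An upper bound on $a$ does nothing against amplification when $a$ is small. The claim that the negative part costs only $\nu/10k$ is asserted, not derived.

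\textbf{The small-$a$ bookkeeping has the same problem.} Those coins contribute little to $\bbE[b]$. In ratio terms, however, $b/a-\eta=(b/a)(1-a)^N$ can be close to $1$ there, so the loss is not $O(\nu/k)$.

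\textbf{The missing idea is to never divide.} Let $\gamma=\nu/10k$. From your formula, $b-\eta a=b(1-a)^N<\gamma$ for every $r$: if $b\ge\gamma$ then $a\ge\gamma$ and $(1-a)^N\le e^{-N\gamma}<\gamma$, and otherwise the bound is trivial. This linear inequality absorbs the truncation. Hence for every $r$,
\[ \Pr[\text{win}\mid r]=P_{\ge t'}+b\le P_{\ge t'}+\eta a+\gamma=(1-\eta)P_{\ge t'}+\eta P_{\ge t'-1}+\gamma, \]
a convex combination with weight $\eta(r)\in[0,1]$.

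On good $r$, substitute the two hypothesised bounds $B(k'-1,t',\eps)+(10k'-9)\gamma$ and $B(k'-1,t'-1,\eps)+(10k'-9)\gamma$. This gives an upper bound affine in $\eta(r)$ whose coefficients do not depend on $r$. Bad $r$ cost at most $\gamma$. Averaging over $r$ yields
\[ \Pr[\text{win}]\le B(k'-1,t',\eps)+\Delta\,\bbE[\eta]+(10k'-7)\gamma,\qquad \Delta:=B(k'-1,t'-1,\eps)-B(k'-1,t',\eps)\le 1. \]
The binomial identity gives the lower bound $\Pr[\text{win}]\ge B(k'-1,t',\eps)+\eps\Delta+(10k'-1)\gamma$. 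Comparing the two gives $\Delta(\bbE[\eta]-\eps)\ge 6\gamma$, and $\Delta\le 1$ yields $\bbE[\eta]\ge\eps+3\nu/5k$. The effect of this step is to replace your $r$-dependent denominator $a(r)$ by the constant $\Delta$, which is what makes averaging safe.
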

The proof of Lemma \ref{lemma:poly_tight_final_prob} closely follows \cite{HS11,Chu11} so we defer the full proof to the appendix. We can now conclude that
\begin{align*}
    \Pr[\langle \calA, V \rangle (1^{\secp}) = 1] &\geq \Pr_{\tau \gets \calH_m}[\Accept_1(\tau) = 1] - \frac{\nu}{10k} - \negl(\secp) \\
    &\geq \eps + \frac{3\nu}{5k} - \frac{\nu}{10k} - \negl(\secp) = \eps + \frac{\nu}{2k} - \negl(\secp),
\end{align*}
as desired.
\end{proof}

\section*{AI Disclosure}
The first version of this manuscript, which contained all of the main ideas of the proof, was obtained entirely without the assistance of models. The proof was inspired by the correlation reduction strategy of \cite{CL10} for simulatable arguments.

ChatGPT 6 Astra then identified a minor but non-standard modification to the reduction which could be used to obtain tighter soundness. Upon internal discussion, the authors realized that the same improvement could alternatively be attained by using an appropriately modified version of the correlation reduction strategy previously used in the context of classical \emph{three-message} arguments \cite{HS11,Chu11}. These changes materially affected only the proofs of Lemmas \ref{lemma:corr_reduction} and \ref{lemma:poly_tight_final_prob} as reflected in the appendix.

\section{Acknowledgments}\label{sec:acknowledgments}
This work was done in part while AH, YK, and NS were visiting the Simons Institute for the Theory of Computing. Special thanks to Luowen Qian for very insightful discussions in the early stages of this project. ZB is supported by the Horizon Europe Research and Innovation Program via ERC Project ACQUA (Grant 101087742). YK and AH were supported by the U.S. National Science Foundation under award No. 2534400.

\bibliographystyle{alpha}
\bibliography{references.bib}

\appendix
\section{Missing Proofs}\label{app:missing_proofs}
\subsection{Proof of Lemma \ref{lemma:corr_reduction}}
\begin{proof}[Proof of Lemma \ref{lemma:corr_reduction}]
The transformation $\calT$ will determine a subset $S \subset [k]$ of coordinates, threshold $t'$, and coins $r^{*}_{\Query, j}$ for $j \notin S$. Upon doing so, it outputs $k' = |S|$ and the following adversary $\calA_S^{\calB}$: to interact with $\widehat{V}^{k', t'}$, $\calA_S$ simulates the interaction of $\langle \calB, \widehat{V}^{k, t} \rangle$ by internally simulating the verifiers for all coordinates outside of $S$ (fixing their query randomness to $r^{*}_{\Query}$) and having the external verifier $\widehat{V}^{k', t'}$ play the role of the coordinates in $S$. Clearly, $\calA_S$ runs in time $\poly(\secp, k)$ given oracle access to $\calB$ since the verifier and $\QHE$ are efficient.

We now describe how $\calT$ selects $S$, $t'$, and $r^{*}_{\Query}$:
\begin{enumerate}
    \item Initialize $S = [k]$ and $t' = t$. 
    \item While $|S| \neq 1$:
    \begin{enumerate}
        \item Let $i$ be the first coordinate in $S$.
        \item For $\iter = 1$ to $M_1 := 20k^2\secp/\nu$:
        \begin{enumerate}
            \item Sample $r_i$. Simulate the interaction $\langle \calB, \widehat{V}^{k, t}\rangle (1^{\secp})$ for $M_2 := 1600k^5\secp^2/\nu^3$ times, fixing query randomness $r_{\Query, j} = r^{*}_j$ for $j \notin S$ and $r_{\Query, i} = r_i$, using $M_2$ separate copies of the initial state of $\calB$.
            
            Using the $M_2$ interactions/transcripts, compute estimates $\widehat{p}_{t'}(r_i)$ (and $\widehat{p}_{t'-1}(r_i)$) of the probability that at least $t'$ (resp. $t'-1$) of the verifiers in $S \setminus \{i\}$ accept.
            \item If 
                \[ \widehat{p}_{t'}(r_i)-B(|S|-1, t', \eps) \geq \frac{(|S|-1)\nu}{k}, \] 
            then set $r^{*}_i = r_{\Query, i}$, $S \gets S \setminus \{i\}$ and return to the beginning of step 2. If 
                \[ \widehat{p}_{t'-1}(r_i)-B(|S|-1, t'-1, \eps) \geq \frac{(|S|-1)\nu}{k}, \]
            then set $r^{*}_i = r_{\Query, i}$, $S \gets S \setminus \{i\}$, $t' \gets t'-1$ and return to the beginning of step 2. 
        \end{enumerate}
        \item Output $S$, $t'$, and $\{r^{*}_{\Query, j}\}_{j \notin S}$.
    \end{enumerate}
    \item Output $S$, $t'$, and $\{r^{*}_{\Query, j}\}_{j \notin S}$.
\end{enumerate}
Clearly, step 2 terminates in at most $k$ iterations and in each iteration, we run at most $M_1 \cdot M_2$ executions of $\calB$ with $\widehat{V}^{k, t}$; thus, $\calT$ runs in time $\poly(\secp, k, M_1, M_2) = \poly(\secp, k, \nu^{-1})$.

Let $\gamma = \nu/10k$. Since each run in step 2(b) is independent, by Hoeffding's inequality (see Proposition \ref{proposition:hoeffding}) and a union bound over the $2kM_1$ estimates that are made, we have that the probability that $\widehat{p}_{t'}(r_i)$ or $\widehat{p}_{t'-1}(r_i)$ is ever off by more than $\gamma/2$ is at most
    \[ 4kM_1\exp(-M_2\gamma^2/2) \leq 2^{-\secp-1}. \]
Call the event that all estimates are accurate up to $\gamma/2$ error $\Good$; we have just shown that $\Pr[\Good] \geq 1-2^{-\secp-1}$.

It is straightforward to see that conditioned on $\Good$, the output of $\calT$ always satisfies
    \[ \Pr[\langle \calA, \widehat{V}^{k', t'}\rangle (1^{\secp}) = 1] \geq B(k', t', \eps) + \frac{k'}{k} \cdot \nu - \gamma = B(k', t', \eps) + \frac{10k'-1}{10k} \cdot \nu. \]

Now at each of the $k$ possible iterations of step 2, consider the set $D_i$ of coins $r_i$ such that $p_{t'}(r_i) > B(|S|-1, t', \eps) + \frac{(|S|-1)\nu}{k}+\gamma$ or $p_{t'-1}(r_i) > B(|S|-1, t'-1, \eps) + \frac{(|S|-1)\nu}{k}+\gamma$. Conditioned on $\Good$, the only way that $S$ is not updated is if all $M_1$ runs in step 2(b) miss $D_i$. If $\Pr[r_i \in D_i] \geq \gamma$, then this occurs with probability at most
    \[ (1-\gamma)^{M_1} \leq \exp(-\gamma M_1) \leq \frac{2^{-\secp}}{2k}. \]
Taking a union bound over at most $k$ iterations of step 2 implies that the probability of such an event ever occurring is at most $2^{-\secp-1}$. Thus, with probability at most $2^{-\secp-1}+2^{-\secp-1} = 2^{-\secp}$, no bad events occur and all the guarantees of Lemma \ref{lemma:corr_reduction} hold, as desired.
\end{proof}

\subsection{Proof of Lemma \ref{lemma:poly_tight_final_prob}}
\begin{proof}[Proof of Lemma \ref{lemma:poly_tight_final_prob}]
Let $\gamma := \nu/10k$. Let $(A_1, \ldots, A_k)$ denote the random variables corresponding to the verdicts of each subverifier in a random interaction between $\calB$ and $\widehat{V}^{k', t'}$. We define the quantities $Y := \sum_{j \neq 1} A_j$ and 
    \[ \eta(r) := \Pr_{\tau \gets \calH_m}[A_1 = 1 \mid r_{\Query, 1} = r]. \]
Note that $\calH_m$ accepts on iteration $\iter$ precisely when the first $\iter-1$ iterations all have $Y \neq t'-1$ and on iteration $\iter$, $Y = t'-1$ and $A_1 = 1$. Thus,
\begin{align*}
    \eta(r) &= \Pr[A_1 = 1 \land Y = t'-1 \mid r_{\Query, 1} = r] \sum_{\iter = 1}^{N} \bigl(1 - \Pr[Y = t'-1 \mid r_{\Query, 1} = r]\bigr)^{\iter-1} \\
    &= \Pr[A_1 = 1 \land Y = t'-1 \mid r_{\Query, 1} = r] \cdot \frac{1-\left(1-\Pr[Y = t'-1 \mid r_{\Query, 1} = r]\right)^N}{\Pr[Y = t'-1 \mid r_{\Query, 1} = r]}.
\end{align*}
Rearranging this identity gives
\begin{align*}
    &\Pr[A_1 = 1 \land Y = t'-1 \mid r_{\Query, 1} = r] -\eta(r)\Pr[Y = t'-1 \mid r_{\Query, 1} = r] \\
    =\,\, &\Pr[A_1 = 1 \land Y = t'-1 \mid r_{\Query, 1} = r] \cdot \bigl(1-\Pr[Y = t'-1 \mid r_{\Query, 1} = r]\bigr)^N.
\end{align*}
We now observe that $0 \leq \Pr[A_1 = 1 \land Y = t'-1 \mid r_{\Query, 1} = r] \leq \Pr[Y = t'-1 \mid r_{\Query, 1} = r]$, and that if $\Pr[A_1 = 1 \land Y = t'-1 \mid r_{\Query, 1} = r] \geq \gamma$, then 
    \[ \bigl(1-\Pr[Y = t'-1 \mid r_{\Query, 1} = r]\bigr)^N \leq (1-\gamma)^N \leq e^{-N\gamma} \leq e^{-2\secp} < \gamma. \]
Thus,
    \[ \Pr[A_1 = 1 \land Y = t'-1 \mid r_{\Query, 1} = r] -\eta(r) \Pr[Y = t'-1 \mid r_{\Query, 1} = r] < \gamma. \]
Putting this together, we see that
\begin{align*}
    &\Pr[A_1 + Y \geq t' \mid r_{\Query, 1} = r] \\
    = \,\, &\Pr[Y \geq t' \mid r_{\Query, 1} = r] + \Pr[A_1 = 1 \land Y = t'-1 \mid r_{\Query, 1} = r] \\
    \leq \,\, &\Pr[Y \geq t' \mid r_{\Query, 1} = r] + \eta(r) \cdot \Pr[Y = t'-1 \mid r_{\Query, 1} = r] + \gamma \\
    = \,\, &(1-\eta(r)) \cdot \Pr[Y \geq t' \mid r_{\Query, 1} = r] + \eta(r) \cdot \Pr[Y \geq t'-1 \mid r_{\Query, 1} = r] + \gamma.
\end{align*}
Let us denote by $\Good$ the set of coins $r$ such that all of the inequalities in Theorem \ref{thm:poly_round_tight} hold. Note that when $r \in \Good$, we have that
\begin{align*}
    &\Pr[A_1 + Y \geq t' \mid r_{\Query, 1} = r \in \Good] \\
    \leq \,\, &(1-\eta(r)) \cdot B(k'-1, t', \eps) + \eta(r) \cdot B(k'-1, t'-1, \eps) + (10k'-9)\gamma + \gamma.
\end{align*}
Since $\Pr[A_1 + Y \geq t' \mid r_{\Query, 1} = r] \leq 1$ for all $r$, we see that
\begin{align*}
    &\Pr[A_1 + Y \geq t' \mid r_{\Query, 1} = r] \\
    \leq \,\, &(1-\eta(r)) \cdot B(k'-1, t', \eps) + \eta(r) \cdot B(k'-1, t'-1, \eps) + (10k'-8)\gamma + \mathds{1}_{r \notin \Good}.
\end{align*}

Averaging over the choice of $r_{\Query, 1}$, we see that
\begin{align*}
    \Pr[A_1 + Y \geq t'] &= \bbE_r\left[\Pr[A_1 + Y \geq t' \mid r_{\Query, 1} = r]\right] \\
    &\leq B(k'-1, t', \eps) + \left(B(k'-1, t'-1, \eps) - B(k'-1, t', \eps)\right) \cdot \bbE_r[\eta(r)] \\
    &+ (10k'-8)\gamma + \Pr_r[r \notin \Good] \\
    &\leq \left(B(k'-1, t'-1, \eps) - B(k'-1, t', \eps)\right) \cdot \Pr_{\tau \gets \calH_m}[\Accept_1(\tau) = 1] \\
    &+ B(k'-1, t', \eps) + (10k'-7)\gamma.
\end{align*}
On the other hand, the hypothesis of Theorem \ref{thm:poly_round_tight} implies that
\begin{align*}
    \Pr[A_1 + Y \geq t'] &\geq B(k', t', \eps) + (10k'-1)\gamma \\
    &= (1-\eps)B(k'-1, t', \eps) + \eps B(k'-1, t'-1, \eps) + (10k'-1)\gamma \\
    &= B(k'-1, t', \eps) + \eps (B(k'-1, t'-1, \eps) - B(k'-1, t', \eps)) + (10k'-1)\gamma.
\end{align*}
Combining the two inequalities gives
\begin{align*}
    \left(B(k'-1, t'-1, \eps) - B(k'-1, t', \eps)\right) \cdot \left(\Pr_{\tau \gets \calH_m}[\Accept_1(\tau) = 1]-\eps\right) \geq 6\gamma.
\end{align*}
Observing that 
    \[ 0 < B(k'-1, t'-1, \eps) - B(k'-1, t', \eps) = \Pr_{X_1, \ldots, X_{k'-1} \sim \Ber(\eps)}\left[\sum_{j = 1}^{k'-1} X_j = t'-1\right] \leq 1, \]
we conclude that
    \[ \Pr_{\tau \gets \calH_m}[\Accept_1(\tau) = 1] \geq \eps + 6\gamma = \eps+\frac{3\nu}{5k}, \]
as desired.
\end{proof}

\end{document}